\newcommand{\squeezeup}{\vspace{-2.5mm}}
\newtheorem{proposition}{Proposition}
\newtheorem{Def}{Definition}
\newcommand{\set}[1]{{\mathcal{#1}}}
\newcommand{\setE}{\set{E}}
\newcommand{\f}{\boldsymbol{f}}
\newcommand{\w}{\boldsymbol{w}}
\newcommand{\x}{\boldsymbol{x}}
\newcommand{\di}{\boldsymbol{d}}
\newcommand{\shahat}{\hat{\boldsymbol{s}}}
\newcommand{\alphas}{\boldsymbol{\alpha}}
\newcommand{\E}{\mathbb{E}}
\newcommand{\av}{\boldsymbol{a}}
\newcommand{\y}{\boldsymbol{y}}
\newcommand{\Id}{\textup{Id}}
\newcommand{\Tr}{\textup{Tr}}
\newcommand{\Rmnum}[1]{\uppercase\expandafter{\romannumeral #1}}
\newcommand{\setX}{{\mathcal X}}
\newcommand{\setN}{{\mathcal N}}
\newcommand{\setC}{{\mathcal C}}
\newcommand{\setF}{{\mathcal F}}
\newcommand{\setQ}{{\mathcal Q}}
\newcommand{\setY}{{\mathcal Y}}
\newcommand{\limit}{\textup{lim}}
\newcommand{\infi}{\textup{inf}}
\newcommand{\signal}[1]{{\boldsymbol{#1}}}
\newcommand{\real}{{\mathbb R}}
\newtheorem{fact}{Fact}
\theoremstyle{newremark}
\theoremstyle{newremark}
\newtheorem{assumption}{Assumption}
\newtheorem{lemma}{Lemma}
\newtheorem{theorem}{Theorem}
\newcommand{\Natural}{{\mathbb N}}
\begin{document}

\title{Hybrid Model and Data Driven Algorithm for Online Learning of Any-to-Any Path Loss Maps}

\author{M. A. Gutierrez-Estevez,
		Martin Kasparick, Renato L. G. Cavalvante, and 
		S\l awomir Sta\'nczak
		%\IEEEauthorblockA{\IEEEauthorrefmark{1}Fraunhofer Heinrich Hertz Institute, Einsteinufer 37, 10587 Berlin, Germany,}
		%\IEEEauthorblockA{\IEEEauthorrefmark{2}Technical University Of Berlin, Einsteinufer 27, 10587 Berlin, Germany,}
		%Email: \{miguel.gutierrez, martin.kasparick, renato.cavalcante\}@hhi.fraunhofer.de, slawomir.stanczak@tu-berlin.de}

\thanks{M. A. Gutierrez-Estevez and S\l awomir Sta\'nczak are with the Network Information Theory, Technical University of Berlin, Germany. Email: m.gutierrezestevez@campus.tu-berlin.de,  slawomir.stanczak@tu-berlin.de.}	
\thanks{Martin Kasparick, and Renato L. G. Cavalvante are with the department of Wireless Communications and Networks of the Fraunhofer Heinrich-Herz Institute in Berlin, Germany. Email: \{martin.kasparick, renato.cavalcante\}@hhi.fraunhofer.de.}

}

%\author{\IEEEauthorblockN{Miguel Angel Gutierrez-Estevez\IEEEauthorrefmark{1},
%		Martin Kasparick\IEEEauthorrefmark{1}, Renato L. G. Cavalcante\IEEEauthorrefmark{1} and 
%		Slawomir Stanczak\IEEEauthorrefmark{1}\IEEEauthorrefmark{2} 
%		\IEEEauthorblockA{\IEEEauthorrefmark{1}Fraunhofer Heinrich Hertz Institute, Einsteinufer 37, 10587 Berlin, Germany,}
%		\IEEEauthorblockA{\IEEEauthorrefmark{2}Technical University Of Berlin, Einsteinufer 27, 10587 Berlin, Germany,}
		%Email: \{miguel.gutierrez, martin.kasparick \}@hhi.fraunhofer.de, slawomir.stanczak@tu-berlin.de
%	}}
\maketitle

\begin{abstract}
Learning any-to-any (A2A) path loss maps, where the objective is the reconstruction of path loss between any two given points in a map, might be a key enabler for many applications that rely on device-to-device (D2D) communication. Such applications include machine-type communications (MTC) or vehicle-to-vehicle (V2V) communications. Current approaches for learning A2A maps are either model-based methods, or pure data-driven methods.  Model-based methods have the advantage that they can generate reliable estimations with low computational complexity, but they cannot exploit information coming from data. Pure data-driven methods can achieve good performance without assuming any physical model, but their complexity and their lack of robustness is not acceptable for many applications. In this paper, we propose a novel hybrid model and data-driven approach that fuses information obtained from datasets and models in an online fashion. To that end, we leverage the framework of stochastic learning to deal with the sequential arrival of samples and propose an online algorithm that alternatively and sequentially minimizes the original non-convex problem. A proof of convergence is presented, along with experiments based firstly on synthetic data, and secondly on a more realistic dataset for V2X, with both experiments showing promising results.
%Simulation results for both methods show a great decrease in complexity while having similar performance compared to the batch approach.

%have several important limitations, including i) the complexity increases rapidly with the number of samples, making the problems quickly intractable, and ii) they cannot cope with a changing environment in time, amongst others. 
\end{abstract}

\begin{IEEEkeywords}
	Radio Maps Reconstruction, Machine Learning for Wireless Communications, Stochastic Learning, Non-convex Optimization.
\end{IEEEkeywords}
\squeezeup
\section{Introduction}

% Paragraph about hybrid methods
%The advent of machine learning applied to wireless communications and networks has put the field in a crossroads. On one side, there are the classical model-based methods for predicting and estimating phenomena such as the channel covariance \cite{bibid}, the cell load \cite{bibid}, the power spectral density \cite{bibid}, or the interference \cite{bibid}, to name just a few. On the other hand, new data-driven methods first applied in fields such as speech and image processing are finding their way into wireless communications. Paramount among these methods are \acp{DNN} \cite{bibid,2}, but others such as matrix and tensor completion \cite{bibid}, or tree-based methods \cite{bibid} have also been extensively used. Model-based methods can reliably generate estimates with low computational complexity 

%Model-based methods are able to produce reliable estimates with little side information, no training, and potentially low computational complexity

Many applications in wireless networks can benefit from information related to the spatial distribution of path loss. Among them, applications involving peer-to-peer communication are the most challenging ones because of fast increase of communication links when the number of nodes grows. Such applications include sensor networks, \ac{MTC} or \ac{V2V} communications. As an example consider a platoon of vehicles that have to constantly exchange information about their position, acceleration, and so on. If the path loss between any two vehicles along the route was known in advance, this information would give the vehicles enough time to adapt their distance accordingly and save a considerable amount of fuel  \cite{jornod2019packet}. Other benefits include reliability of communications and safety. 

\ac{A2A} maps describe the spatial distribution of radio signals between any two given locations of a map, which makes them very suited for those applications. But the challenge is to cope with the rapid increase of complexity when the map size increases, while keeping high prediction accuracy. 

%However, the main limitation of these approaches is the fact that they rely on a model, and models cannot track reality in all circumstances.

%Current approaches in \acp{TPT} can be divided in two groups: static model-based methods  \cite{lee2017channel,hamilton2013propagation,konak2011predicting}, where the mathematical model is given and fixed, or dynamic they attempt at learning such models from scratch \cite{romero2016blind,gutierrez2018nonparametric}, \cite{letter} 

Some approaches for radio maps estimation are pure data-driven methods \cite{schaufele2019tensor,kasparick2016kernel,chouvardas2016method} in the sense that no physical model for the propagation of radio signals is considered, but instead they exploit the expected spatial correlation of the channel characteristics. Other approaches either rely on fixed mathematical models to describe the propagation of radio signals \cite{lee2017channel,hamilton2013propagation,konak2011predicting}, or they attempt at learning such models without context information \cite{romero2016blind,gutierrez2018nonparametric}. Model-based methods have the advantage that they can generate reliable estimations with low computational complexity and little to no side information. However, they are rigid in the sense that they cannot exploit information coming from data to adapt to the environment and to reduce model uncertainty. On the other hand, pure data-driven methods can achieve good performance without assuming any physical model, but their complexity and their lack of robustness against changes in the environment (e.g., underlying distribution of the data) is not acceptable for many applications.

%Model-based methods are able to produce reliable estimates with little side information, no training, and potentially low computational complexity. However, they do not exploit any information from datasets to improve the estimates or to gain robustness against measurement errors or model uncertainty, or both. In contrast, pure data-driven methods can provide good performance without any knowledge about physical models, but their robustness against changes in the propagation environment (e.g., underlying distribution of the data) is not acceptable for many applications. 

Against this background, we introduce a novel hybrid data- and model-driven approach with the intention of extracting the best of both worlds: we start with the notion that a mathematical model can coarsely represent the physical world, but we endow our method with the flexibility to modify the original model based on the acquired measurements. Further, our method is online because both the physical characteristics of the environment may vary (disposition of buildings, environmental conditions like rain or fog, etc...), and also because an online method can deal with the high complexity of the problem for large scenarios.%\cite{letter}\footnote{COMMENT: The missing reference is the paper I am submitting as a Comm. Letter but haven't done it yet.}. 

%may vary, or the radio signals propagation characteristic may also vary due to e.g. rain, fog, or other environmental conditions. 

%Since the approach proposed in this paper is based on an online algorithm for the learning of \ac{A2A} path loss maps, it can overcome these limitations. More precisely, we extend the work in \cite{gutierrez2018nonparametric} and pose an optimization problem that, upon arrival of new measurements, aims at obtaining new estimates of both the \ac{SLF} and the window matrix. We do this by minimizing the least squared error regularized by elastic nets, i.e. the linear combination of the $\ell_1$ and $\ell_2$ norms as regularization terms \cite{zou2005regularization}. We leverage the framework of stochastic learning to tackle the limitation of the sequential arrival of samples. The original problem is highly ill-posed, so we impose additional structure to the problem by considering a non-linear kernel method. We then propose an algorithm based on the forward-backward splitting method to obtain a solution.

\subsection{Prior Art}
%This paper improves the state of the art of path loss learning using \ac{TPT} from several perspectives.

The learning of radio maps has been a major topic of interest both in academia and the industry for years \cite{schaufele2019tensor,kasparick2016kernel,gutierrez2014spatial,konak2011predicting,dall2011channel,lee2017channel,hamilton2013propagation,wilson2010radio,romero2016blind,gutierrez2018nonparametric,chouvardas2016method}. In recent years, the framework of \ac{TPT} has gained a great deal of attention as a model that characterizes the long-term shadowing of links caused by objects such as buildings or trees\cite{agrawal2009correlated,konak2011predicting,patwari2008effects}, and in turn this shadowing is used as a proxy to characterize the path loss. In \ac{TPT}, a \ac{SLF} captures the absorption generated by objects in a field, while a window function models the influence of each location on the attenuation that every link experiences \cite{agrawal2009correlated}. The shadowing is then modeled as the weighted integral of the \ac{SLF} across the field. 

Previous studies \cite{gudmundson1991correlation,agrawal2009correlated} have investigated statistical and correlation properties for shadow fading in different networking scenarios. One of the main challenges related to the statistical modeling of shadow fading lies in the characterization of its spatio-temporal correlation properties. The \ac{SLF} in \cite{agrawal2009correlated} is assumed to be a zero-mean Gaussian random field, and consequently, the shadowing loss experienced on arbitrary links can also be seen as a Gaussian random field. The treatment of shadow fading as a Gaussian random field has led several authors \cite{konak2011predicting,kim2010collaborative,gutierrez2014spatial} to use Kriging interpolation for the estimation of coverage maps. In \cite{kim2010collaborative,dall2011channel}, a state-space extension of the general path loss model is adopted in order to track coverage maps using the Kriged Kalman filter.

%A line of research\cite{konak2011predicting,gutierrez2014spatial,braham2014coverage,dall2011channel} has exploited the high correlation in the \ac{SLF} of nearby locations, using Kriging interpolations to estimate radio maps. The shadowing attenuation is then modeled as the weighted integral of the \ac{SLF} across the field. The \ac{SLF} is in turn assumed to have a high correlation between points that are close in space and time. Therefore, many authors \cite{gutierrez2014spatial,konak2011predicting,braham2014coverage} modeled the physics behind the \ac{SLF} as Gaussian random fields, where the Kriging interpolation is used to estimate radio maps.% Describe further

A different approach exploits the concept of the Fresnel zone \cite{hamilton2013propagation,patwari2008effects,wilson2010radio,wilson2010see,lee2017channel} to create a model that represents the propagation of wireless signals. In particular, the authors in \cite{hamilton2013propagation} propose different models taking into account the locations of transmitter and receiver, and a weight is assigned to each location contained in the Fresnel zone representing the impact of each location in the signal propagation. These models are then used for different applications such as coverage maps generation \cite{braham2014coverage}, scene reconstruction \cite{wilson2010see}, or path loss estimation \cite{lee2017channel}. In \cite{lee2017channel}, the \ac{SLF} is modeled as the sum of a low rank matrix, which is potentially corrupted by sparse outliers, and a sparse matrix. The motivation for this assumption is that the regular placement of walls and buildings in urban scenarios renders the scene inherently low-rank, while sparse outliers can pick up artifacts that do not conform to the low-rank model. The problem becomes an instance of the compressive principle component pursuit (CPCP) approach, and the authors propose an iterative algorithm to reconstruct the \ac{SLF}. %The main limitation of all these approaches is the fact that they rely on a model, and models cannot track reality under all circumstances.%In order to generate any-to-any coverage maps, the objective of the study in \cite{lee2017channel} is to estimate the underlying SLF rather than a single reference coverage map.% Describe further

%Prior work either assumes that a mathematical model describes perfectly the physical world \cite{lee2017channel,hamilton2013propagation}, or attempts at learning path loss maps using \ac{TPT} in a model-free manner \cite{romero2016blind,gutierrez2018nonparametric}, this is, assuming no prior model and learning it with no context information. 

In order to overcome the limitations of a mathematical model, the authors in \cite{romero2016blind} propose an algorithm that learns both the \ac{SLF} and the window function in a blind manner, i.e. no model is assumed and both the \ac{SLF} and the window function are learned in an alternating fashion. In \cite{gutierrez2018nonparametric}, this blind approach is further improved by capitalizing on the fact that both structures are assumed to be block-sparse. A problem with elastic net regularization and multi-kernels is formulated, and an algorithm based on the \ac{ADMM} is used to obtain a solution. Both contributions in \cite{romero2016blind,gutierrez2018nonparametric} have the critical limitation of being batch algorithms, which poses a tremendous hurdle for real-world applications because i) in both approaches the problem complexity and the number of variables that have to be stored in memory increases cubicly with the number of pixels in the map, and ii) because they cannot cope with a changing environment over time due to e.g. different traffic profiles or change in the underlying map. %The main limitation from these approaches comes from the rapid increase in complexity with larger radio maps, which makes the problem intractable even for small maps.
In the seeding publication of this work \cite{letter}, we overcome these limitations by proposing an online algorithm which, upon arrival of new measurements, obtains new estimates of both the \ac{SLF} and the model. To do this, the online algorithm implements a ``descent" version of the \ac{gAM} \cite{jain2017non}, i.e. we take only one step at a time towards a new estimate of the \ac{SLF} with the last estimate of the model fixed, and then another step for the model with the new updated \ac{SLF} fixed, iteratively until a stopping criterion is met.

\subsection{Contributions}

 %We attempt at minimizing the least squared error regularized by elastic nets, i.e. the linear combination of the $\ell_1$ and $\ell_2$ norms as regularization terms \cite{zou2005regularization}. We further leverage the framework of stochastic learning to tackle the online nature of the problem. The original problem is highly ill-posed, so we impose additional structure to it by considering a non-linear kernel method. Due to the structure of the problem, an iterative method is required to obtain a solution. We therefore propose an algorithm based on the descent version of the \ac{gAM} \cite{jain2017non} where, at each iteration, we take one step of the forward-backward splitting method to obtain an estimate of the \ac{SLF}, followed by one step of the projected gradient descent to update the model. 
%We further improve the algorithm by exploiting the likeliness of an estimate being a good estimation of the ground truth, and we provide strong statements on the convergence of the proposed algorithm both in the objective function and in the arguments.
In the following we enumerate the contributions of this work:
\begin{enumerate}
	\item We propose a new problem to learn the \ac{SLF} based on \ac{A2A} path loss measurements, while, at the same time, \textit{steering} the original model into another one better represented by the data. This strategy results in a non-convex optimization problem, but it is marginally convex, i.e., the problem becomes convex if a subset of variables is fixed. In contrast to \cite{gutierrez2018nonparametric,romero2016blind,letter}, we  constrain the updates to remain close to the original model.%both use the mathematical model of \acp{TPT} as a starting point for our estimates, and
	%We leverage the \ac{gAM} technique \cite{jain2017non} to iteratively find better estimates.
	\item The problem of learning the model derived from the \ac{TPT} would be extremely ill-posed, so we propose a non-linear kernel approach based on the \ac{RBF} similar to \cite{gutierrez2018nonparametric,romero2016blind,letter}.
	%\item In contrast with \cite{gutierrez2018nonparametric,romero2016blind,letter}, we both use the mathematical model of \acp{TPT} as a starting point for our estimates, and constrain the updates to remain close to the original mode.%we allow for \textit{some} freedom to obtain an estimate to the model.
	\item As in\cite{gutierrez2018nonparametric,letter,lee2017channel}, the structure representing the \ac{SLF} is assumed to be group-sparse, so we consider the least squares problem regularized by the elastic net \cite{zou2005regularization} for the reconstruction of the \ac{SLF}.
	\item Similar to the seeding paper of this work \cite{letter}, we define a majorizing function that \textit{upperbounds} the original objective function, i.e., the new function is grater or equal than the original one in its entire domain. This strategy is known from stochastic approximation \cite{slavakis2014stochastic} and has been exploited for online learning in different application domains \cite{mairal2010online,lee2017channel,razaviyayn2016stochastic}. Unlike in our previous work, we prove that the new function is indeed a surrogate of the original one, i.e., both functions tend to the same real value when the number of iterations grows towards infinity.
	\item We propose a novel online algorithm similar to the seeding publication in the sense that it is also a ``descent" version of the \ac{gAM}. In this case however, the method to update the model is the projected gradient descent, while the iterative procedure for the \ac{SLF} is based on the forward-backward splitting method \cite{combettes2011proximal}.
	\item As main contribution of this work, we study the convergence of the proposed algorithm both in the objective and in the arguments. To this end, we first prove some regularity of the functions involved. More precisely, we require local Lipschitz-continuity, uniformly strong convexity and uniform boundedness. After proving that the majorizing function is a surrogate of the original objective function under some reasonable assumptions, we show that the iterations of the online algorithm converge to a point in the set of stationary points of the original problem.
	\item Unlike in the seeding publication, we evaluate our algorithm firstly with synthetic data representing the Madrid scenario \cite{metis_D_6_1} for a \ac{V2V} network, and secondly with measurements generated with the \ac{GEMV2}  \cite{boban2014geometry}, a simulator which has been shown to generate \ac{V2V} path loss datasets very close to real measurements \cite{boban2014geometry}. In both cases, we show the gains of our hybrid model and data driven approach.
	%\item We consider an optimization problem that, upon arrival of new measurements, aims at obtaining new estimates of the \ac{SLF}, while constraining the structure representing the model to be close to the original model.
\end{enumerate}
%We extend upon \cite{letter} in two ways:
%\begin{enumerate}[label=(\roman*)]
%	\item We consider a hybrid approach for which, starting with a mathematical model, we exploit collected data to \textit{steer} the original model into another one more representative of the data. The intuition behind this approach lays on the fact that, although mathematical models are in general wrong, one can expect that reality is not far from those models, thus making them a good starting point.
	%\item We provide strong statements for the convergence of the proposed algorithm under some reasonable assumptions.
%\end{enumerate}

\subsection{Paper structure}
This paper is structured as follows. In Sect. \ref{sec:preliminaries} we review the notation, definitions and facts that are extensively used to prove the main results in this study. In Sect. \ref{sec:system_model} we both introduce the framework of path loss learning based on \acp{TPT}, and state the problem. In Sect. \ref{sec:online_learning} we present our algorithmic solution to the online learning of \ac{A2A} path loss maps, and Sect. \ref{sec:convergence_analysis} deals with the analysis of the algorithm convergence. We conclude the study with the numerical evaluation of our proposed algorithm in Sect. \ref{sec:numerical_evaluation}.

\section{Mathematical Preliminaries}
\label{sec:preliminaries}

The objective of this section is to introduce the mathematical machinery required for this study. We further introduce notation and results in mathematics that are necessary to keep the presentation as self-contained as possible. 

Hereafter, we denote linear and non-linear operators with uppercase letters, vectors with bold lowercase letters, and sets and function classes with calligraphic letters. Given $n\in\Natural,$ $I_n$ represents the $n\times n$ identity matrix, $\E$ represents the expected value, $\otimes$ is the Kronecker product, $\odot$ is the Hadamard product, the superscript $(\cdot)^\top$ denotes the transpose, and $\Tr(A)$ is the trace of the matrix $A$.  We consider the Euclidean space $\mathcal{H}:=\real^n,$ which is a Hilbert space equipped with the inner product $(\forall\signal{x},\signal{y}\in\mathcal{H})~\langle\signal{x},\signal{y}\rangle:=\signal{x}^\top\signal{y}.$ The norms $\|\cdot\|_1$ and $\|\cdot\|_2$ are, respectively, the standard $\ell_1$ and $\ell_2$ norms in the Euclidean space, unless otherwise stated.
\begin{Def}[Marginal convexity]
	\label{def:marginal_convex}
	A continuous function of two variables $k:\setX\times\setY\rightarrow\real$ is considered marginally convex in its first variable if for every value of $\y\in\setY$, the function $k_{\y}:\setX\rightarrow\real$ is convex, i.e. for every $\signal{x}^1,\signal{x}^2\in \setX$, there exists a subgradient $\signal{g}\in\partial k_\y(\x)$ such that 
	$$ k_{\y}(\signal{x}^2)\geq k_{\y}(\signal{x}^1)+\langle\signal{g},\signal{x}^2-\signal{x}^1\rangle.$$
	In case $k$ was differentiable in $\signal{x}$, the subgradient can be substituted by the corresponding gradient: $$ k_{\y}(\signal{x}^2)\geq k_{\y}(\signal{x}^1)+\left \langle \nabla k_{\y}(\signal{x}^1) ,\signal{x}^2-\signal{x}^1\right \rangle. $$
	A similar argument can be made for the second variable $\y$.
\end{Def}
\begin{Def}[Marginal optimum coordinate]
	\label{def:marginally_opt}
	Let $k:\setX\times\setY\rightarrow\real.$ For any point $\y\in\setY,$ we say that $\tilde{\x}$ is a marginally optimal coordinate with respect to $\y,$ and use the shorthand $\tilde{\x}\in\textup{mopt}_k(\y),$ if $k(\tilde{\x},\y)\leq k(\x,\y)~\forall\x\in\setX.$ Similarly, for any $\x\in\setX,$ we say $\tilde{\y}\in\textup{mopt}_k(\x)$ if $\tilde{\y}$ is a marginally optimal coordinate with respect to $\x.$
\end{Def}
\begin{Def}[Bistable point]
	\label{def:bistable_point}
	Given a function $k:\setX\times\setY\rightarrow \real$, a point $(\x,\y)\in\setX\times\setY$ is considered a bistable point if $\y\in\textup{mopt}_k(\x)$ and $\x\in\textup{mopt}_k(\y),$ i.e. both coordinates are marginally optimal with respect to each other.
\end{Def}
\begin{Def}[Proximal operator]
	\label{def:prox_operator}
	The proximity operator of a function $k:\real^n\rightarrow \real$ is given by: 
	$$ \textup{prox}_{\gamma k }:\real^n\rightarrow\real^n,(\f)\mapsto \underset{\boldsymbol{y}\in \real^n}{\textup{argmin}}~k(\boldsymbol{y})+\gamma\|\f-\boldsymbol{y}\|_2^2,$$
	where $\gamma>0$ is the attraction parameter.
\end{Def}
\begin{Def}[Directional derivative]
	\label{def:direc_derivative}
	Let $h:\setX\rightarrow\real$ be a convex function, where $\setX\subseteq\real^p$ is a closed convex set. The directional derivative of the function $h$ at a point $\x\in\setX$ in the direction $\di\in\real^p$ is defined as 
	$$ h'(\x;\di)\triangleq\underset{t\downarrow 0}{\limit~\infi}~ \frac{h(\x+t\di)-h(\x)}{t},$$
	and we define $h'(\x;\di)\triangleq+\infty$ if $\x+t\di\notin\setX,~\forall t>0.$
\end{Def}
\begin{Def}[Stationary point]
	\label{def:stationary_point}
	Let $h:\setX\rightarrow \real$ be a function, where $\setX\subseteq\real^p$ is a convex set. The point $\x\in\real^p$ is a stationary point of $h$ if $$ h'(\x;\di)\geq 0,~\forall\di\in\real^p. $$
\end{Def}
\begin{Def}[Contraction mapping]
	\label{def:contraction_map}
	Let $\setE$ be an Euclidian space. Then $T:\setE\rightarrow\setE$ is said to be a contraction mapping if there exists $\kappa\in[0,1[$ such that 
	$$ (\forall \x,\y\in\setE)~\|T(\x)-T(\y)\|\leq\kappa\|\x-\y\|. $$
\end{Def}
\begin{Def}[Non-expansive mapping]
	\label{def:nonexpansive_map}
	Let $\set{D}$ be a nonempty subset of $\setE$ and let $T:\set{D}\rightarrow\setE.$ The operator $T$ is said to be non-expansive if
	$$(\forall \x,\y\in\set{D})~ \|T(\x)-T(\y)\|\leq\|\x-\y\|. $$
	Further, $T$ is said to be firmly non-expansive if $(\forall \x,\y\in\set{D})$
	$$ \|T(\x)-T(\y)\|^2+\|\bar{T}(\x)-\bar{T}(\y)\|^2\leq\|\x-\y\|^2, $$
	where $\bar{T}=\Id-T.$
\end{Def}
\begin{Def}[Equicontinuous function]
	\label{def:equi_func}
	Let $\setX$ and $\setY$ be two metric spaces, $d(\cdot,\cdot)$ their respective distance metric, and $\setF$ a family of functions mapping $\setX$ into $\setY.$ The family $\Theta$ is said to be equicontinuous at a point $\x_0\in\setX$ if for every $\varepsilon>0,$ there exists a $\gamma>0$ such that $d(f(\x_0),f(\x))<\varepsilon$ for all $f\in\Theta$ and all $\x$ such that $d(\x_0,\x)<\gamma.$ 
\end{Def}
\begin{Def}[Quasi-martingale]
	\label{def:quasi_martingale}
	A martingale is a stochastic process for which, at a particular time, the conditional expectation of the next value in the sequence, given all prior values, is equal to the present value, i.e., a stochastic process $X_1,X_2,...,X_2$ is said to be martingale if for a particular time instant $n$, we have $$\E[X_n]<\infty,$$and $$\E[X_{n+1}|X_1,...X_n]=X_n.$$ Further, a stochastic process is said to be quasi-martingale if it has a decomposition into the sum of a martingale process and a sequence of functions having almost every sample of bounded variation \cite{fisk1965quasi}.
\end{Def}
\begin{Def}[Filtration of a stochastic process]
	\label{def:filtraton}
	Consider a real valued stochastic process $\{X_n\}_{n=1}^\infty.$ For each $n$, we define the filtration of the stochastic process up to instant time $n$ as 
	$$ \set{N}_n:=\sigma(X_1,...,X_n),$$
	where $\sigma(X_1,...,X_n)$ denotes the $\sigma$-algebra generated by the random variables $X_1,...,X_n.$
\end{Def}

In the following, we state several theorems and lemmata necessary for the convergence analysis of our algorithms.
\begin{fact}[Bonnans and Shapiro theorem \cite{bonnans1998optimization}]
	\label{theo:bonnas}
	Let $f:\real^p\times\real^q\rightarrow\real$. Suppose that for all $\x\in \real^p$ the function $f(\x,\cdot)$ is differentiable, and that $f$ and $\nabla_u f(\x,\signal{u})$ the derivative of $f(\x,\cdot)$ are continuous on $\real^p\times\real^q$. Let $v(\signal{u})$ be the optimal value function $v(\signal{u})=\textup{min}_{\x\in\setC}f(\x,\signal{u})$, where $C$ is a compact subset of $\real^p.$ Then $v(\signal{u})$ is directionally differentiable. Furthermore, if for $\signal{u}_0\in\real^q,f(\cdot,\signal{u}_0)$ has a unique minimizer $\x_0$ then $v(\signal{u})$ is differentiable in $\signal{u}_0$ and $\nabla_u v(\signal{u}_0)=\nabla_u f(\x_0,\signal{u}_0).$
\end{fact}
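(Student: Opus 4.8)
The plan is to recognize this as a sensitivity result of Danskin type and to prove it in two stages: first I would characterize the one-sided directional derivative of the optimal value function through matching upper and lower bounds, and then specialize to the case of a unique minimizer. Throughout, for $\signal{u}\in\real^q$ write $\setS(\signal{u}):=\{\x\in\setC : f(\x,\signal{u})=v(\signal{u})\}$ for the solution set. Since $\setC$ is compact and $f(\cdot,\signal{u})$ is continuous (being the restriction of the jointly continuous $f$), the Weierstrass theorem guarantees $\setS(\signal{u})\neq\emptyset$, so $v$ is well defined, and $\setS(\signal{u})$ is a closed subset of a compact set, hence compact.

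First I would fix a direction $\di\in\real^q$ and target the Danskin formula $v'(\signal{u}_0;\di)=\min_{\x\in\setS(\signal{u}_0)}\langle\nabla_u f(\x,\signal{u}_0),\di\rangle$. The upper bound is the routine half: for any fixed $\x^\star\in\setS(\signal{u}_0)$ feasibility gives $v(\signal{u}_0+t\di)\le f(\x^\star,\signal{u}_0+t\di)$, while $v(\signal{u}_0)=f(\x^\star,\signal{u}_0)$; dividing by $t>0$ and letting $t\downarrow 0$ yields, by differentiability of $f(\x^\star,\cdot)$, that the upper Dini derivative of $v$ is at most $\langle\nabla_u f(\x^\star,\signal{u}_0),\di\rangle$, and minimizing over $\x^\star$ gives the $\le$ direction.

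The main obstacle is the matching lower bound, which is where compactness does the real work. I would take a sequence $t_k\downarrow 0$ attaining the relevant $\liminf$, pick minimizers $\x_k\in\setS(\signal{u}_0+t_k\di)$, and use compactness of $\setC$ to extract a convergent subsequence $\x_k\to\bar{\x}$. A continuity argument (essentially Berge's maximum theorem) shows $\bar{\x}\in\setS(\signal{u}_0)$: passing to the limit in $f(\x_k,\signal{u}_0+t_k\di)=v(\signal{u}_0+t_k\di)$ and using continuity of $f$ and of $v$ gives $f(\bar{\x},\signal{u}_0)=v(\signal{u}_0)$. Then, from $v(\signal{u}_0)\le f(\x_k,\signal{u}_0)$ and the mean value theorem in the second argument, there is $\tau_k\in[0,t_k]$ with $(v(\signal{u}_0+t_k\di)-v(\signal{u}_0))/t_k\ge\langle\nabla_u f(\x_k,\signal{u}_0+\tau_k\di),\di\rangle$; letting $k\to\infty$ and invoking continuity of $\nabla_u f$ gives $\ge\langle\nabla_u f(\bar{\x},\signal{u}_0),\di\rangle\ge\min_{\x\in\setS(\signal{u}_0)}\langle\nabla_u f(\x,\signal{u}_0),\di\rangle$. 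The two bounds coincide, proving directional differentiability together with the formula.

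Finally I would specialize to the uniqueness hypothesis. When $\setS(\signal{u}_0)=\{\x_0\}$, the formula collapses to $v'(\signal{u}_0;\di)=\langle\nabla_u f(\x_0,\signal{u}_0),\di\rangle$, which is linear in $\di$; hence $v$ is Gateaux differentiable at $\signal{u}_0$ with gradient $\nabla_u f(\x_0,\signal{u}_0)$. To upgrade this to genuine differentiability I would note that $v$ is locally Lipschitz near $\signal{u}_0$: for $\signal{u}_1,\signal{u}_2$ in a compact neighborhood, comparing each value to the other's minimizer and applying the mean value theorem bounds $|v(\signal{u}_1)-v(\signal{u}_2)|$ by $(\sup\|\nabla_u f\|)\,\|\signal{u}_1-\signal{u}_2\|$, the supremum being finite over a compact set. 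Since a locally Lipschitz, Gateaux-differentiable function is Fréchet differentiable, this completes the proof. The only subtlety to handle carefully is the continuity of $v$ used in the lower-bound step, which itself follows from the same compactness-plus-continuity reasoning.
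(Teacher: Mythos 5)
The paper does not actually prove this statement: it is quoted as a Fact directly from the cited reference \cite{bonnans1998optimization} and used only as an external ingredient (in Lemma~\ref{lemma:regularity}(i), to get continuous differentiability of $l_t$). So there is no in-paper argument to compare against; judged on its own, your proof is correct in structure and is precisely the classical Danskin-type argument underlying the cited result: Weierstrass gives nonemptiness and compactness of $\setS(\signal{u})$, the upper bound comes from evaluating $v$ at a fixed minimizer, the lower bound from minimizers at the perturbed parameter, a convergent subsequence, stability of the argmin, and the mean value theorem combined with joint continuity of $\nabla_u f$, and the two bounds match in the Danskin formula $v'(\signal{u}_0;\di)=\min_{\x\in\setS(\signal{u}_0)}\langle\nabla_u f(\x,\signal{u}_0),\di\rangle$.

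Two refinements. First, the continuity of $v$ that your lower bound invokes can be sidestepped entirely: instead of passing to the limit in $f(\x_k,\signal{u}_0+t_k\di)=v(\signal{u}_0+t_k\di)$, pass to the limit in $f(\x_k,\signal{u}_0+t_k\di)\leq f(\x,\signal{u}_0+t_k\di)$ for each fixed $\x\in\setC$, which yields $\bar{\x}\in\setS(\signal{u}_0)$ using only continuity of $f$. Second---and this is the one step you assert without justification---``locally Lipschitz $+$ Gateaux $\Rightarrow$ Fr\'echet'' is false in general normed spaces; it is valid here only because the parameter lives in the finite-dimensional $\real^q$: the difference quotients $t^{-1}\left(v(\signal{u}_0+t\di)-v(\signal{u}_0)\right)-\langle\nabla_u f(\x_0,\signal{u}_0),\di\rangle$ are equi-Lipschitz in $\di$ on the unit sphere, which is compact, so their pointwise convergence to zero upgrades to uniform convergence over directions, i.e., Fr\'echet differentiability. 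You should either include this compactness argument or, more cleanly, avoid the lemma altogether: uniqueness forces any selection $\x_{\signal{u}}\in\setS(\signal{u})$ to converge to $\x_0$ as $\signal{u}\rightarrow\signal{u}_0$ (every cluster point is a minimizer at $\signal{u}_0$), and the two-sided sandwich $f(\x_{\signal{u}},\signal{u})-f(\x_{\signal{u}},\signal{u}_0)\leq v(\signal{u})-v(\signal{u}_0)\leq f(\x_0,\signal{u})-f(\x_0,\signal{u}_0)$, together with the mean value theorem and continuity of $\nabla_u f$, gives Fr\'echet differentiability with gradient $\nabla_u f(\x_0,\signal{u}_0)$ directly.
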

\begin{fact}[A Corollary of Donsker's theorem \cite{van2000asymptotic}]
	\label{lemma:donsker}
	Let $F=\{f_\theta:\setX\rightarrow\real,\theta\in\Theta\}$ be a set of measurable functions indexed by a bounded subset $\Theta\in\real^d.$ Suppose that the functions $f_\theta\forall\theta\in\Theta$ are Lipschitz continuous, $\|f\|_\infty<K$, $\E_X[f(X)^2]\leq\delta^2$ for some $\delta>0$, and that the random elements $X_1,X_2,...$ are Borel-measurable. Define the empirical average $u_n(f)=\frac{1}{n}\sum_{i=1}^n f(X_i).$ Then, 
	$$\E\left[\underset{f\in F}{\textup{sup}}~\sqrt{n}\left|(u_n(f)-\E_X[f])\right|\right]\in\mathcal{O}(1).$$
\end{fact}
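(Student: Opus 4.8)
The plan is to read the claim as a maximal inequality for the empirical process $\mathbb{G}_n f := \sqrt{n}\,\big(u_n(f)-\E_X[f]\big)$ indexed by the class $F$, and to bound $\E[\sup_{f\in F}|\mathbb{G}_n f|]$ through the classical bracketing--entropy integral of empirical process theory \cite{van2000asymptotic}. The point is that the three hypotheses---a uniform envelope $\|f\|_\infty<K$, a variance bound $\E_X[f(X)^2]\le\delta^2$, and a Lipschitz parametrization by the bounded set $\Theta\subset\real^d$---are precisely the ingredients that make the relevant entropy integral finite and, crucially, independent of $n$, which is what the $\mathcal{O}(1)$ statement asserts.

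First I would bound the bracketing numbers of $F$. Since $\Theta$ is bounded, it admits an $\epsilon$-net $\{\theta_1,\dots,\theta_N\}$ of cardinality of order $(\mathrm{diam}(\Theta)/\epsilon)^d$. Exploiting the Lipschitz dependence on the parameter, the brackets $[\,f_{\theta_j}-L\epsilon,\;f_{\theta_j}+L\epsilon\,]$ cover $F$, and by the uniform boundedness their $L_2$-width is of order $\epsilon$. Hence
$$N_{[]}\big(\epsilon,F,L_2\big)\le C_1\left(\frac{\mathrm{diam}(\Theta)\,L}{\epsilon}\right)^{d},$$
so that $F$ has merely polynomial (finite VC-type) entropy, with $\log N_{[]}(\epsilon,F,L_2)$ of order $d\log(1/\epsilon)$.

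Next I would invoke the maximal inequality controlling the empirical process by its bracketing integral,
$$\E\Big[\sup_{f\in F}|\mathbb{G}_n f|\Big]\le C_2\int_0^{\delta}\sqrt{1+\log N_{[]}\big(\epsilon,F,L_2\big)}\;d\epsilon,$$
where the variance bound $\E_X[f(X)^2]\le\delta^2$ supplies the upper limit of integration and the envelope $K$ controls the remainder. Substituting the polynomial entropy estimate, the integrand behaves like $\sqrt{d\log(1/\epsilon)}$, whose integral over $(0,\delta)$ converges to a finite constant that carries no dependence on $n$. This constant is exactly the claimed $\mathcal{O}(1)$ bound.

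The main obstacle I anticipate is not the entropy computation but the bookkeeping that keeps every constant free of $n$: one must apply the chaining bound with the $L_2$ radius governed by $\delta$ rather than by the coarser envelope $K$, and handle the measurability of the supremum over the possibly uncountable class $F$---either by passing to outer expectation or by invoking separability of $\theta\mapsto f_\theta$. These steps are standard in the Donsker machinery, so the conclusion follows once they are stated with care.
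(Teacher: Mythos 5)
You should know at the outset that the paper offers no proof of this statement to compare against: it is imported verbatim as a Fact with a citation to \cite{van2000asymptotic}, where it amounts to combining the parametric-class bracketing bound (Example 19.7 in that reference) with the bracketing maximal inequality (Lemma 19.34/Corollary 19.35). Judged on its own merits, your argument is exactly that standard route and is correct in outline: an $\epsilon$-net of the bounded set $\Theta\subset\real^d$ of cardinality $\mathcal{O}\left((\mathrm{diam}(\Theta)/\epsilon)^d\right)$ yields brackets $[f_{\theta_j}-L\epsilon,\,f_{\theta_j}+L\epsilon]$ of $L_2$-width $\mathcal{O}(\epsilon)$, hence $\log N_{[\,]}(\epsilon,F,L_2)=\mathcal{O}(d\log(1/\epsilon))$; the entropy integral $\int_0^\delta\sqrt{1+\log N_{[\,]}(\epsilon,F,L_2)}\,d\epsilon$ is then finite and carries no dependence on $n$; and the chaining bound turns this into the claimed $\mathcal{O}(1)$ control of $\E\left[\sup_{f\in F}\sqrt{n}\left|u_n(f)-\E_X[f]\right|\right]$. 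Your closing remarks on measurability (outer expectation, or separability inherited from continuity in $\theta$) are the right way to handle the uncountable supremum.

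Two points deserve to be made explicit rather than implicit. First, the hypothesis ``the functions $f_\theta$ are Lipschitz continuous'' must be read as Lipschitz in the \emph{parameter} $\theta$, uniformly in $x$ (or with a square-integrable Lipschitz modulus $m(x)$, as in the cited reference); Lipschitz continuity of each $f_\theta$ in $x$ alone gives no control whatsoever on the bracketing numbers of the family. Your net construction silently adopts the correct reading, but since the statement as printed is ambiguous, say so. Second, the sharper maximal inequality that uses the variance radius $\delta$ rather than the envelope norm comes with a remainder term of the form $\sqrt{n}\,\E^*\left[F\,\mathbf{1}\{F>\sqrt{n}\,a(\delta)\}\right]$; with the uniform envelope $\|f\|_\infty<K$ this term is identically zero once $\sqrt{n}\,a(\delta)>K$ and is trivially bounded for the finitely many remaining $n$, which is precisely why the final constant is $n$-free. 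You gesture at this (``the envelope $K$ controls the remainder''), and spelling it out closes the only genuine bookkeeping gap in your write-up.
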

\begin{fact}[Lemma on positive converging sums \cite{razaviyayn2016stochastic}]
	\label{lemma:positive_congerging_sum}
	Let $(u_t)_{t=1}^\infty$ be a nonnegative sequence, i.e., $(\forall t\in\Natural)~u_t\geq0,$ and $\sum_{t=1}^\infty u_t/t<\infty.$ Furthermore, suppose that $(\forall t\in\Natural)~|u_{t+1}-u_t|\leq c/t$ for some $c>0.$ Then, $\limit_{t\rightarrow\infty}u_t=0.$
	%Let $(a_n),(b_n)$ be two real sequences such that for all $n,~a_n\geq0,b_n\geq0,\sum_{n=1}^{\infty}a_nb_n<\infty,~\exists K_4>0~s.t.~|b_{n+1}-b_n|<K_4a_n.$ Then, $\underset{n\rightarrow\infty}{\textup{lim}}b_n=0.$
\end{fact}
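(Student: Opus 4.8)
The plan is to argue by contradiction, using the slow-variation bound $|u_{t+1}-u_t|\le c/t$ to promote a single large value of $u_t$ into a whole block of comparably large values whose $1/t$-weighted sum is bounded below by a positive constant independent of the block. Suppose the conclusion fails. Since $(u_t)$ is nonnegative, $\limit_{t\to\infty}u_t=0$ is equivalent to $\limsup_{t\to\infty}u_t=0$, so its failure forces $\limsup_{t\to\infty}u_t>0$; hence there exist a constant $\epsilon>0$ and infinitely many indices $n$ with $u_n\ge\epsilon$. I would show that each such index generates a fixed amount of ``mass'' in the series $\sum_t u_t/t$, and that infinitely many of these masses can be collected over disjoint index blocks, contradicting $\sum_t u_t/t<\infty$.

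First I would make the block precise. Fix an index $n$ with $u_n\ge\epsilon$ and set $\lambda:=e^{\epsilon/(4c)}>1$. For $t>n$, telescoping together with the triangle inequality gives $u_t\ge u_n-\sum_{s=n}^{t-1}|u_{s+1}-u_s|\ge \epsilon-c\sum_{s=n}^{t-1}1/s$. Bounding the harmonic sum by $\sum_{s=n}^{t-1}1/s\le 1/n+\log(t/n)$ and restricting attention to the window $W_n:=\{t\in\Natural:\ n\le t\le\lfloor\lambda n\rfloor\}$ yields $u_t\ge \epsilon-c\log\lambda-c/n=\tfrac34\epsilon-c/n\ge \epsilon/2$ for every $t\in W_n$, provided $n\ge 4c/\epsilon$. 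Thus a lone excursion above $\epsilon$ pins the sequence at level $\ge\epsilon/2$ across a block whose right endpoint is a fixed multiplicative factor $\lambda$ beyond $n$.

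Next I would lower-bound the block's contribution to the series. Since $\sum_{t=n}^{\lfloor\lambda n\rfloor}1/t\ge\log\big((\lfloor\lambda n\rfloor+1)/n\big)>\log\lambda=\epsilon/(4c)$, I obtain $\sum_{t\in W_n}u_t/t\ge(\epsilon/2)\cdot(\epsilon/(4c))=\epsilon^2/(8c)$, a strictly positive constant free of $n$. Because infinitely many indices satisfy $u_n\ge\epsilon$, I can greedily extract a subsequence $n_1<n_2<\cdots$ (all beyond the threshold $4c/\epsilon$) with $n_{k+1}>\lambda n_k$, which makes the windows $W_{n_1},W_{n_2},\dots$ pairwise disjoint. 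Summing the per-block bounds over these disjoint windows gives $\sum_{t=1}^\infty u_t/t\ge\sum_k \epsilon^2/(8c)=+\infty$, contradicting the hypothesis $\sum_{t=1}^\infty u_t/t<\infty$. Therefore $\limit_{t\to\infty}u_t=0$.

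The difficulty here is essentially bookkeeping rather than conceptual: one must handle the integer floors defining $W_n$ and absorb the $o(1)$ harmonic-tail terms, which is precisely why I take $n$ beyond a fixed threshold and build in slack by choosing $\lambda=e^{\epsilon/(4c)}$ rather than the naive $e^{\epsilon/(2c)}$. One must also guarantee that the harvested windows do not overlap, so that their contributions genuinely add without double counting; the geometric spacing $n_{k+1}>\lambda n_k$ settles the disjointness, while the uniformity of the harmonic estimates for $n$ past the threshold settles the lower bounds.
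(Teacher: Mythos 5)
Your proof is correct: the telescoping bound $u_t\ge u_n-c\sum_{s=n}^{t-1}1/s$, the choice $\lambda=e^{\epsilon/(4c)}$ with the threshold $n\ge 4c/\epsilon$, the lower bound $\sum_{t\in W_n}u_t/t\ge\epsilon^2/(8c)$, and the geometric extraction $n_{k+1}>\lambda n_k$ ensuring disjoint windows all check out, including the integer-floor bookkeeping (even when $\lfloor\lambda n\rfloor=n$, the estimate $\sum_{t=n}^{\lfloor\lambda n\rfloor}1/t\ge\log\bigl((\lfloor\lambda n\rfloor+1)/n\bigr)>\log\lambda$ survives). The paper itself states this as a Fact imported from \cite{razaviyayn2016stochastic} without proof, and your contradiction-via-blocks argument is precisely the standard proof given in that cited literature, so nothing further is needed.
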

\begin{fact}[Theorem on the sufficient condition of convergence for a stochastic process \cite{fristedt2013modern}]
	\label{theorem:quasi_martingales}
	Let $(\Omega,\setN,P)$ be a measurable probability space, $u_t$, for $t\geq 0,$ be the realization of a stochastic process and $\setN_t$ be the filtration determined by the past information at time $t$. Let 
	$$ \delta_t=\left\{\begin{matrix}
	1&~if ~\E[u_{t+1}-u|\setN_t]>0,\\ 
	0&~otherwise
	\end{matrix}\right. $$
	If for all $t,u_t\geq0$ and $\sum_{t=1}^\infty \E[u_{t+1}-u_t]<\infty,$ then $u_t$ is a quasi-martingale and converges almost surely. Moreover, $$ \sum_{t=1}^\infty |\E[u_{t+1}-u_t|\setN_t]|<+\infty~a.s. $$ 
\end{fact}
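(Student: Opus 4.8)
The plan is to recognize the statement as the classical quasi-martingale convergence theorem and to prove it through the Doob--Rao decomposition combined with the martingale convergence theorem. The first move is to rewrite the summability hypothesis in a workable form using the indicator $\delta_t$ introduced in the statement. Since $\delta_t$ is $\setN_t$-measurable, I would observe that
$$\E[\delta_t(u_{t+1}-u_t)\mid\setN_t]=\delta_t\,\E[u_{t+1}-u_t\mid\setN_t]=\big(\E[u_{t+1}-u_t\mid\setN_t]\big)^{+},$$
so that the relevant summability condition $\sum_t\E[\delta_t(u_{t+1}-u_t)]<\infty$ is exactly $\sum_t\E\big[(\E[u_{t+1}-u_t\mid\setN_t])^{+}\big]<\infty$, i.e. control of the \emph{positive} variation of the conditional increments.

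The crucial step, and the one that makes the nonnegativity of $u_t$ indispensable, is to upgrade this to control of the \emph{negative} part as well. Taking expectations and telescoping gives
$$\E[u_{T+1}]-\E[u_1]=\sum_{t=1}^{T}\E\big[(\E[u_{t+1}-u_t\mid\setN_t])^{+}\big]-\sum_{t=1}^{T}\E\big[(\E[u_{t+1}-u_t\mid\setN_t])^{-}\big],$$
and since $u_{T+1}\ge 0$ forces $\E[u_{T+1}]\ge 0$, I can bound the negative-part sum by the (finite) positive-part sum plus $\E[u_1]$. Hence both variations have finite total expected value, and in particular $\sum_t\E\big[|\E[u_{t+1}-u_t\mid\setN_t]|\big]<\infty$, which is precisely the assertion that $u_t$ is a quasi-martingale.

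For the almost-sure convergence I would introduce the predictable compensator $A_t=\sum_{s=1}^{t-1}\E[u_{s+1}-u_s\mid\setN_s]$ and the process $M_t=u_t-A_t$; a direct check that $\E[M_{t+1}\mid\setN_t]=M_t$ confirms $M_t$ is a martingale. Splitting $A_t=A_t^{+}-A_t^{-}$ into its nondecreasing positive and negative variations and using $\E[A_\infty^{\pm}]<\infty$ from the previous step, monotone convergence gives that $A_t$ converges almost surely to a finite limit. A short computation using $\E[A_t]=\E[u_t]-\E[u_1]$ together with the finiteness of $\E[A_t^{\pm}]$ shows that $M_t$ is $L^1$-bounded uniformly in $t$, so Doob's martingale convergence theorem yields an almost-sure finite limit for $M_t$. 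Since $u_t=M_t+A_t$, it converges almost surely, and the ``moreover'' claim follows at once because $\sum_t|\E[u_{t+1}-u_t\mid\setN_t]|=A_\infty^{+}+A_\infty^{-}<\infty$ a.s.

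The hard part will not be any single calculation but the simultaneous bookkeeping of the two variations: establishing $L^1$-boundedness of $M_t$ requires controlling $\E[u_t]$, $\E[A_t^{+}]$ and $\E[A_t^{-}]$ at the same time, and the entire argument hinges on the sign constraint $u_t\ge 0$, which is what prevents the telescoping sum from draining to $-\infty$ and thereby forces the negative variation to be summable. Without that nonnegativity the compensator need not converge and the conclusion fails.
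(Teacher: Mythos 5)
Your proof is correct, but there is nothing in the paper to compare it against: the statement is imported as a Fact with a citation to \cite{fristedt2013modern} (it is Fisk's quasi-martingale convergence theorem \cite{fisk1965quasi}), the paper never proves it, and it is only invoked as a black box inside the proof of Lemma \ref{lemma:convergence_objective}. Your argument is the standard textbook route, and every step checks out: the $\setN_t$-measurability of $\delta_t$ identifies the hypothesis with summability of the expected positive parts $\E\bigl[(\E[u_{t+1}-u_t\mid\setN_t])^{+}\bigr]$; nonnegativity of $u_t$ plus telescoping bounds the negative variation by $\E[u_1]$ plus the positive variation; the Doob--Rao decomposition $u_t=M_t+A_t$ with predictable compensator $A_t=A_t^{+}-A_t^{-}$ is a genuine martingale decomposition (the verification uses that $A_{t+1}$ is $\setN_t$-measurable); $A_t$ converges a.s.\ by monotone convergence applied to each variation, $M_t$ is $L^1$-bounded because $\E[u_t]=\E[u_1]+\E[A_t]$ stays bounded, so Doob's martingale convergence theorem applies; and $A_\infty^{+}+A_\infty^{-}<\infty$ a.s.\ gives both the quasi-martingale property (in the sense of the paper's Definition \ref{def:quasi_martingale}) and the ``moreover'' claim --- in fact you obtain the stronger conclusion $\sum_t\E\bigl[\,|\E[u_{t+1}-u_t\mid\setN_t]|\,\bigr]<\infty$.

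One thing you did silently and should state openly: you repaired the hypothesis. As literally printed, the condition is $\sum_{t=1}^\infty\E[u_{t+1}-u_t]<\infty$ and the indicator $\delta_t$ (which itself contains a typo, $u$ for $u_t$) is defined but never used; in that literal form the theorem is false --- take $u_t$ i.i.d.\ uniform on $[0,1]$, so that every $\E[u_{t+1}-u_t]=0$ and $u_t\geq 0$, yet $u_t$ does not converge a.s. The intended condition, which your proof actually uses and which appears in \cite{fisk1965quasi} and in the form quoted by \cite{mairal2010online,razaviyayn2016stochastic}, is $\sum_{t=1}^\infty\E[\delta_t(u_{t+1}-u_t)]<\infty$, i.e.\ summability of the expected positive parts of the conditional increments. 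You should also note that integrability of $u_t$ (as in the paper's definition of a martingale) is tacitly assumed, since your telescoping step needs $\E[u_1]<\infty$; with that made explicit, the proof is complete.
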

\begin{fact}[Glivenko-Cantelli theorem \cite{van2000asymptotic}]
	\label{theorem:Glivenko}
	Let $F$ be the cumulative distribution function of the stochastic process generating i.i.d. samples $X_1,X_2,...$, and let $F_n$ be its empirical cumulative distribution function after $n\in\Natural$ samples, given by $F_n(n)=\frac{1}{n}\sum_{i=1}^n 1[X_i\leq x].$ Then, $$ \|F_n-F\|_\infty\rightarrow 0. ~a.s.$$
\end{fact}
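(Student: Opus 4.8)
The plan is to deduce the uniform almost-sure convergence $\|F_n-F\|_\infty=\sup_{x\in\real}|F_n(x)-F(x)|\to 0$ from pointwise almost-sure convergence, exploiting the monotonicity that $F_n$ and $F$ share in order to replace the supremum over $\real$ by a maximum over a finite grid of points.

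First I would establish pointwise convergence through the strong law of large numbers. For a fixed $x\in\real$ the indicators $1[X_i\leq x]$, $i=1,2,\dots$, are i.i.d.\ and bounded with mean $\E[1[X_i\leq x]]=F(x)$, so $F_n(x)\to F(x)$ almost surely. Applying the same argument to the indicators $1[X_i<x]$, whose common mean is the left limit $F(x^-):=\lim_{y\uparrow x}F(y)$, yields $F_n(x^-)\to F(x^-)$ almost surely, where $F_n(x^-):=\frac{1}{n}\sum_{i=1}^n 1[X_i<x]$.

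Next I would upgrade this to uniform convergence. Fixing $\epsilon>0$ and using that $F$ is nondecreasing with $\lim_{x\to-\infty}F(x)=0$ and $\lim_{x\to+\infty}F(x)=1$, I would choose a finite grid $-\infty=t_0<t_1<\cdots<t_k=+\infty$ with $F(t_j^-)-F(t_{j-1})\leq\epsilon$ for every $j$; working with the one-sided limits lets this grid absorb the jumps of $F$, so it exists even when $F$ is discontinuous. For any $x\in[t_{j-1},t_j)$ the monotonicity of $F_n$ and $F$ yields $F_n(t_{j-1})\leq F_n(x)\leq F_n(t_j^-)$ and $F(t_{j-1})\leq F(x)\leq F(t_j^-)$, and subtracting gives
$$ \left(F_n(t_{j-1})-F(t_{j-1})\right)-\epsilon\;\leq\; F_n(x)-F(x)\;\leq\;\left(F_n(t_j^-)-F(t_j^-)\right)+\epsilon. $$
Writing $\Delta_n:=\max_{0\leq j\leq k}\left(|F_n(t_j)-F(t_j)|+|F_n(t_j^-)-F(t_j^-)|\right)$, these two bounds combine into $\|F_n-F\|_\infty\leq\Delta_n+\epsilon$.

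Finally, because $\Delta_n$ is a maximum over the finitely many grid points and a finite intersection of almost-sure events is again almost sure, the pointwise convergence from the second step forces $\Delta_n\to 0$ almost surely. Taking $\limsup_n$ then gives $\limsup_n\|F_n-F\|_\infty\leq\epsilon$ almost surely, and letting $\epsilon$ decrease to $0$ along a countable sequence delivers the claim on a single almost-sure event. I expect the only real delicacy to be the grid construction: the oscillation of $F$ on each subinterval must be controlled uniformly, which is precisely why the bound is phrased with the left limits $F(t_j^-)$ rather than the values $F(t_j)$, so that atoms of the distribution cannot violate $F(t_j^-)-F(t_{j-1})\leq\epsilon$.
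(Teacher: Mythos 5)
Your proof is correct, and it is the classical argument: pointwise strong-law convergence of $F_n(x)$ and $F_n(x^-)$ at the points of a finite quantile grid, upgraded to uniformity via the shared monotonicity of $F_n$ and $F$, with the left limits $F(t_j^-)$ correctly handling atoms of the distribution. Note that the paper itself does not prove this statement at all --- it imports it as a Fact with a citation to \cite{van2000asymptotic} --- and your argument is essentially the standard proof given in that reference, so there is nothing to reconcile; the only cosmetic remark is that the paper's statement contains a typo ($F_n(n)$ should read $F_n(x)$), which your proof implicitly corrects.
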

\begin{fact}[Arzelà–Ascoli theorem \cite{dunford1958linear}]
	\label{theorem:arzela}
	Consider a sequence of real-valued functions $(f_n)_{n\in\Natural}$ defined in a closed and bounded set $\setF.$ If this sequence is uniformly bounded and uniformly equicontinuous, then there exists a subsequence $(f_{n_k})_{k\in\Natural}$ that converges uniformly. The converse is also true, in the sense that if every subsequence of $(f_n)_{n\in\Natural}$ itself has a uniformly convergent subsequence, then $(f_n)$ is uniformly bounded and equicontinuous.
\end{fact}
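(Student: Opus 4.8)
The plan is to prove the forward implication by the classical two-stage extraction and the converse by contraposition. Since $\setF$ is closed and bounded in Euclidean space, it is compact by Heine--Borel, and in particular separable, so I would begin by fixing a countable dense subset $\{x_1,x_2,\dots\}\subseteq\setF$. For Stage one (pointwise convergence on this dense set), uniform boundedness confines the scalar sequence $(f_n(x_1))_n$ to a bounded interval, so Bolzano--Weierstrass extracts a convergent subsequence along some $\Natural^{(1)}\subseteq\Natural$; repeating along $\Natural^{(1)}$ at $x_2$ gives $\Natural^{(2)}\subseteq\Natural^{(1)}$, and so on. The Cantor diagonal subsequence $(f_{n_k})$, taking the $k$-th index from $\Natural^{(k)}$, then converges at every point of the dense set.

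For Stage two (upgrading to uniform convergence) the equicontinuity hypothesis does the essential work. Given $\varepsilon>0$, uniform equicontinuity supplies $\gamma>0$ with $|f_n(x)-f_n(x')|<\varepsilon/3$ for all $n$ whenever $d(x,x')<\gamma$. By compactness I cover $\setF$ with finitely many balls of radius $\gamma$ centered at dense points $x_{i(1)},\dots,x_{i(m)}$; since $(f_{n_k})$ converges, hence is Cauchy, at these finitely many centers, there is $K$ with $|f_{n_k}(x_{i(j)})-f_{n_l}(x_{i(j)})|<\varepsilon/3$ for all $k,l\ge K$ and all $j$. For an arbitrary $x\in\setF$ lying in the ball about $x_{i(j)}$, splitting $|f_{n_k}(x)-f_{n_l}(x)|$ through that center by the triangle inequality bounds it by $\varepsilon$ for $k,l\ge K$. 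Thus $(f_{n_k})$ is uniformly Cauchy, and completeness of $\real$ forces uniform convergence to a (necessarily continuous) limit.

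For the converse I would argue by contraposition against the hypothesis that every subsequence admits a uniformly convergent sub-subsequence. If $(f_n)$ fails to be uniformly bounded, some subsequence satisfies $\|f_{n_k}\|_\infty\to\infty$ and can contain no uniformly convergent sub-subsequence, since a uniform limit is bounded. If instead $(f_n)$ fails to be equicontinuous, there exist $\varepsilon_0>0$, an increasing index sequence $n_k$, and points $x_k,y_k$ with $d(x_k,y_k)\to0$ yet $|f_{n_k}(x_k)-f_{n_k}(y_k)|\ge\varepsilon_0$; any uniform limit along a sub-subsequence would be uniformly continuous on the compact set $\setF$, which together with $d(x_k,y_k)\to0$ and uniform convergence contradicts the persistent separation $\varepsilon_0$. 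In either case the stated hypothesis is violated, establishing the converse.

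The delicate step, and the main obstacle, is Stage two: converting pointwise convergence on a countable dense set into uniform convergence. This is precisely where equicontinuity and compactness must be played against each other, with equicontinuity controlling the oscillation inside each $\gamma$-ball \emph{uniformly in $n$}, while compactness collapses the uncountably many points of $\setF$ to finitely many Cauchy tests. By comparison, the diagonal extraction of Stage one is routine.
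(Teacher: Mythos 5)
The paper does not prove this statement at all: it is imported verbatim as Fact~\ref{theorem:arzela} from the cited source \cite{dunford1958linear} and then used as a black box in the proof of Proposition~\ref{theorem:convergence_args}, so there is no in-paper argument to compare yours against. On its own merits, your proof of the forward direction is the standard and correct one: Heine--Borel gives compactness and separability, Bolzano--Weierstrass plus Cantor diagonalization yields convergence on a countable dense subset, and the $\varepsilon/3$ finite-cover argument upgrades this to the uniform Cauchy property; note also that uniform equicontinuity already forces each $f_n$ to be (uniformly) continuous, so the forward half needs no hypothesis beyond what is stated.

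Two points on your converse, both traceable to the fact's loose phrasing (``real-valued functions'' with no continuity assumption). First, your unboundedness branch invokes ``a uniform limit is bounded'' and your equicontinuity branch invokes uniform continuity of the uniform limit; both hold only when the $f_n$ themselves are continuous (hence bounded on the compact set $\setF$), which is how the theorem is stated in the cited source but not in the paper. Taken literally, the converse as stated is false for arbitrary real-valued functions: the constant sequence $f_n=f$ with $f$ unbounded and discontinuous on $\setF$ has every subsequence uniformly convergent, yet is neither uniformly bounded nor equicontinuous. Your argument is therefore a proof of the correct classical statement, and you should make the assumption $f_n\in C(\setF)$ explicit. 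Second, in the equicontinuity branch you assert an \emph{increasing} index sequence $n_k$ without justification; for each fixed $n$, uniform continuity of $f_n$ on the compact set $\setF$ prevents the index $n$ from recurring once $d(x_k,y_k)$ drops below the modulus of continuity of $f_n$, so the violating indices tend to infinity and an increasing subsequence can indeed be extracted --- but this one-line step belongs in the proof, since without continuity of the individual $f_n$ it fails (a single discontinuous function could supply all the violations). With these two repairs your converse is also correct.
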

\begin{fact}[Proposition on the existence of directional derivative \cite{mordukhovich2013easy}]
	\label{prop:exitence_direc_deriv}
	For any convex function $f:\real^n\rightarrow(-\infty,\infty]$ and any $\bar{\signal{x}}\in\mathrm{dom}(f),$ the directional derivative $f^'(\bar{\signal{x}};\di)$ exists for every direction $\di\in\real^n.$ Furthermore, if $\bar{\signal{x}}\in\mathrm{int}(\mathrm{dom}~f)$ then $f^'(\bar{\signal{x}},\di)$ is a real number for every $\di\in\real^n.$
\end{fact}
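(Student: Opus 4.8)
The plan is to reduce the statement to a one-dimensional monotonicity argument for the difference quotient. Fix $\bar{\signal{x}}\in\mathrm{dom}(f)$ and a direction $\di\in\real^n$, and define, for $t>0$,
$$ \phi(t) := \frac{f(\bar{\signal{x}}+t\di)-f(\bar{\signal{x}})}{t}. $$
Since $\bar{\signal{x}}\in\mathrm{dom}(f)$, the quantity $f(\bar{\signal{x}})$ is finite and $\phi$ is a well-defined extended-real-valued function of $t>0$. I would first show that $\phi$ is nondecreasing, from which the existence of $\limit_{t\downarrow 0}\phi(t)=\inf_{t>0}\phi(t)$ in $[-\infty,+\infty]$ follows at once, because a monotone function always admits a one-sided limit. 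This limit coincides with the limit inferior appearing in Definition \ref{def:direc_derivative}, so $f'(\bar{\signal{x}};\di)$ exists for every $\di\in\real^n$, proving the first claim.

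To establish monotonicity, I would take $0<s<t$ and write $\bar{\signal{x}}+s\di$ as the convex combination
$$ \bar{\signal{x}}+s\di = \left(1-\tfrac{s}{t}\right)\bar{\signal{x}}+\tfrac{s}{t}(\bar{\signal{x}}+t\di). $$
Convexity of $f$ then yields $f(\bar{\signal{x}}+s\di)\leq(1-\tfrac{s}{t})f(\bar{\signal{x}})+\tfrac{s}{t}f(\bar{\signal{x}}+t\di)$; subtracting $f(\bar{\signal{x}})$ and dividing by $s>0$ gives exactly $\phi(s)\leq\phi(t)$. This single three-point inequality is the heart of the whole argument.

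For the second claim, suppose $\bar{\signal{x}}\in\mathrm{int}(\mathrm{dom}\,f)$. Then there is $t_0>0$ with $\bar{\signal{x}}\pm t_0\di\in\mathrm{dom}(f)$, so both $f(\bar{\signal{x}}+t_0\di)$ and $f(\bar{\signal{x}}-t_0\di)$ are finite. The upper bound is immediate from monotonicity: $f'(\bar{\signal{x}};\di)=\inf_{t>0}\phi(t)\leq\phi(t_0)<+\infty$. For the lower bound I would place $\bar{\signal{x}}$ strictly between the two feasible points, using, for $s,t>0$,
$$ \bar{\signal{x}}=\tfrac{s}{s+t}(\bar{\signal{x}}+t\di)+\tfrac{t}{s+t}(\bar{\signal{x}}-s\di). $$
Convexity gives $(s+t)f(\bar{\signal{x}})\leq s\,f(\bar{\signal{x}}+t\di)+t\,f(\bar{\signal{x}}-s\di)$, which after rearranging and dividing by $st$ becomes
$$ -\frac{f(\bar{\signal{x}}+t\di)-f(\bar{\signal{x}})}{t}\leq\frac{f(\bar{\signal{x}}-s\di)-f(\bar{\signal{x}})}{s}. $$
Specializing to $s=t_0$ shows $\phi(t)\geq-\big(f(\bar{\signal{x}}-t_0\di)-f(\bar{\signal{x}})\big)/t_0>-\infty$ uniformly in $t>0$ (the case $\phi(t)=+\infty$ being trivial), hence $f'(\bar{\signal{x}};\di)=\inf_{t>0}\phi(t)$ is bounded below. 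Combining the two bounds shows $f'(\bar{\signal{x}};\di)$ is a real number.

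The only genuine subtlety, rather than a true obstacle since the argument is elementary, is the role of the interiority hypothesis. Without $\bar{\signal{x}}\in\mathrm{int}(\mathrm{dom}\,f)$ the opposite direction $-\di$ may leave the domain for every step size, making $f(\bar{\signal{x}}-s\di)=+\infty$ and the lower bound vacuous; the directional derivative can then legitimately equal $-\infty$ (or $+\infty$ if $\bar{\signal{x}}+t\di\notin\mathrm{dom}(f)$ for all $t>0$). This is exactly why the first claim is phrased in the extended reals while finiteness requires the interior assumption, and one must only keep track throughout that $f(\bar{\signal{x}})$ is finite, which is guaranteed by $\bar{\signal{x}}\in\mathrm{dom}(f)$.
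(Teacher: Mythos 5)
Your proof is correct: the monotonicity of the difference quotient $\phi(t)$ via the three-point convexity inequality, followed by the two-sided bound at an interior point using $\bar{\signal{x}}$ as a convex combination of $\bar{\signal{x}}+t\signal{d}$ and $\bar{\signal{x}}-s\signal{d}$, is exactly the classical argument. Note that the paper itself offers no proof of this statement---it is imported as a Fact with a citation to Mordukhovich and Nam---and your argument is essentially the one found in that reference, including the correct observation that monotonicity upgrades the $\liminf$ in the paper's Definition~\ref{def:direc_derivative} to a genuine limit in the extended reals, so there is nothing to compare beyond confirming you have supplied the standard proof the paper delegates.
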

\begin{fact}[Mean value theorem of vector calculus \cite{rudin1964principles}]
	\label{theo:mean_value}
	Let $f:U\rightarrow\real$ be a differentiable function, where $U$ is a convex and open subset of $\real^k$. Let $\signal{a},\signal{b}$ be points in $U$, with $\signal{b}\succeq\signal{a}$. Then, there exists $\x\in]\signal{a},\signal{b}[$ such that 
	$$\langle\nabla f(\x),\signal{b}-\signal{a}\rangle=f(\signal{b})-f(\signal{a}).$$
\end{fact}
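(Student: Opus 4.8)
The plan is to reduce this multivariate statement to the classical one-dimensional mean value theorem by restricting $f$ to the line segment joining $\signal{a}$ and $\signal{b}$. First I would define the auxiliary function $\phi:[0,1]\rightarrow\real$ by $\phi(t):=f(\signal{a}+t(\signal{b}-\signal{a}))$. Because $U$ is convex and contains both $\signal{a}$ and $\signal{b}$, the entire segment $\{\signal{a}+t(\signal{b}-\signal{a}):t\in[0,1]\}$ lies inside $U$, so $\phi$ is well defined on all of $[0,1]$.

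Next I would verify that $\phi$ is differentiable on $[0,1]$ and compute its derivative. Since $f$ is differentiable on the open set $U$ and the map $t\mapsto\signal{a}+t(\signal{b}-\signal{a})$ is affine, the chain rule gives $\phi'(t)=\innerprod{\nabla f(\signal{a}+t(\signal{b}-\signal{a}))}{\signal{b}-\signal{a}}$ for every $t\in[0,1]$. This is the step that genuinely uses the hypotheses: differentiability of $f$ throughout $U$ guarantees the gradient exists at each point of the segment, while openness of $U$ together with convexity ensures the segment never leaves the domain, so the composition is differentiable at every $t$.

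Finally I would apply the scalar mean value theorem to $\phi$ on $[0,1]$: there exists $t_0\in\,]0,1[$ with $\phi(1)-\phi(0)=\phi'(t_0)$. Unwinding the definitions, $\phi(1)-\phi(0)=f(\signal{b})-f(\signal{a})$ and $\phi'(t_0)=\innerprod{\nabla f(\signal{a}+t_0(\signal{b}-\signal{a}))}{\signal{b}-\signal{a}}$. Setting $\x:=\signal{a}+t_0(\signal{b}-\signal{a})$, which lies in the open segment $]\signal{a},\signal{b}[$ precisely because $t_0\in\,]0,1[$, yields $\innerprod{\nabla f(\x)}{\signal{b}-\signal{a}}=f(\signal{b})-f(\signal{a})$, as claimed.

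There is no serious obstacle here; the only point that requires care is confirming that $\phi$ inherits differentiability on the whole interval, which is exactly where convexity (to keep the segment in $U$) and openness (to keep each point interior, so that the gradient and the chain rule apply) are needed. The ordering assumption $\signal{b}\succeq\signal{a}$ plays no essential role in the argument beyond fixing the orientation of the segment.
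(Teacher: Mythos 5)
Your proof is correct and follows the standard argument from the cited reference: the paper states this result as a Fact imported from \cite{rudin1964principles} without proving it, and your reduction to the scalar mean value theorem via $\phi(t)=f(\signal{a}+t(\signal{b}-\signal{a}))$ on $[0,1]$, using convexity to keep the segment in $U$ and openness plus differentiability for the chain rule, is precisely the canonical proof. Your observation that the ordering hypothesis $\signal{b}\succeq\signal{a}$ is inessential is also accurate --- the result holds for any two points of the convex open set.
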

\begin{fact}[Firmly non-expansiveness of proximal operators \cite{combettes2005signal}]
	\label{lemma:non_expansiveness}
	Let $\varphi\in\Gamma_0(\setX)$, where $\Gamma_0$ is the class of lower semicontinuous convex functions from $\setX.$ Then $\textup{prox}_\varphi$ and $\textup{Id}-\textup{prox}_\varphi$ are firmly nonexpansive.
\end{fact}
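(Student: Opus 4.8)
The plan is to characterize $\prox{\varphi}$ through its first-order optimality condition and then exploit the monotonicity of the subdifferential of a convex function. Fixing $\signal{x}\in\setX$ and writing $\signal{p}=\prox{\varphi}(\signal{x})$, I would first argue that the minimizer in Definition~\ref{def:prox_operator} is well defined: since $\varphi\in\Gamma_0(\setX)$ is convex and lower semicontinuous while the quadratic penalty $\gamma\|\signal{x}-\cdot\|_2^2$ is strongly convex and coercive, the objective is strictly convex and coercive and therefore admits a unique minimizer. Applying the subdifferential sum rule (legitimate because the quadratic term is everywhere differentiable) to the optimality condition $\signal{0}\in\partial\varphi(\signal{p})+2\gamma(\signal{p}-\signal{x})$ yields the inclusion $2\gamma(\signal{x}-\signal{p})\in\partial\varphi(\signal{p})$.

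Next I would write this inclusion at two arbitrary points. Let $\signal{p}=\prox{\varphi}(\signal{x})$ and $\signal{q}=\prox{\varphi}(\signal{y})$, so that $2\gamma(\signal{x}-\signal{p})\in\partial\varphi(\signal{p})$ and $2\gamma(\signal{y}-\signal{q})\in\partial\varphi(\signal{q})$. Invoking the subgradient inequality at each point and adding the two resulting inequalities causes the $\varphi$ terms to cancel, leaving the monotonicity estimate $\innerprod{(\signal{x}-\signal{p})-(\signal{y}-\signal{q})}{\signal{p}-\signal{q}}\geq0$. Dividing by $2\gamma>0$ and rearranging delivers the co-coercivity bound
$$\innerprod{\signal{x}-\signal{y}}{\signal{p}-\signal{q}}\geq\|\signal{p}-\signal{q}\|_2^2.$$
Note that the attraction parameter $\gamma$ cancels here, so the conclusion does not depend on its value.

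The final step is purely algebraic. Setting $T=\prox{\varphi}$, $\signal{a}=T(\signal{x})-T(\signal{y})=\signal{p}-\signal{q}$ and $\signal{b}=\signal{x}-\signal{y}$, the co-coercivity bound reads $\innerprod{\signal{b}}{\signal{a}}\geq\|\signal{a}\|_2^2$, which, after expanding $\|\signal{b}-\signal{a}\|_2^2=\|\signal{b}\|_2^2-2\innerprod{\signal{a}}{\signal{b}}+\|\signal{a}\|_2^2$, is equivalent to $\|\signal{a}\|_2^2+\|\signal{b}-\signal{a}\|_2^2\leq\|\signal{b}\|_2^2$. Since $\signal{b}-\signal{a}=\bar T(\signal{x})-\bar T(\signal{y})$ with $\bar T=\Id-T$, this is exactly the defining inequality of firm non-expansiveness in Definition~\ref{def:nonexpansive_map}, establishing the claim for $\prox{\varphi}$. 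Moreover, that inequality is symmetric under the exchange $\signal{a}\leftrightarrow\signal{b}-\signal{a}$, i.e.\ under $T\leftrightarrow\bar T$, so the identical estimate immediately shows that $\Id-\prox{\varphi}$ is firmly non-expansive as well.

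I expect the only delicate point to be the rigorous justification of the optimality condition, namely existence and uniqueness of the proximal point together with the validity of the subdifferential sum rule for a $\Gamma_0$ function, since the monotonicity argument and the closing algebraic equivalence are both routine.
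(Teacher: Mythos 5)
Your proof is correct, and it is essentially the standard argument from the cited reference \cite{combettes2005signal}: the paper itself states this as an imported Fact and gives no proof, so there is nothing internal to compare against. You correctly adapt the subdifferential characterization to the paper's $\gamma$-scaled definition of $\textup{prox}_{\gamma\varphi}$ (obtaining $2\gamma(\signal{x}-\signal{p})\in\partial\varphi(\signal{p})$ and rightly observing that $\gamma$ cancels in the co-coercivity bound), and the monotonicity-plus-algebra route, together with the symmetry $T\leftrightarrow\Id-T$ of the firm non-expansiveness inequality, is exactly how the result is proved in the literature; the only implicit assumption worth stating is that $\Gamma_0$ consists of \emph{proper} lower semicontinuous convex functions, which is needed both for the existence of the minimizer (via an affine minorant guaranteeing coercivity) and for the subdifferential sum rule you invoke.
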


\section{System Model and Problem Statement}
\label{sec:system_model}
%The system model is the same as in \cite{gutierrez2018nonparametric}. We include it here for completeness.

Consider a two-dimensional area $\mathcal{A}\subset\real^2$. We model the long-term average path loss between any two points $\mathbf{x}_i,\mathbf{x}_j\in\mathcal{A}$ in logarithmic scale by:
\begin{equation}
\label{eq:pathloss}
\mathrm{pl}(\mathbf{x}_i,\mathbf{x}_j)=\mathrm{pl}_0+10\delta\log_{10}\left(\frac{||\mathbf{x}_i-\mathbf{x}_j||_2}{d_0}\right)+s(\mathbf{x}_i,\mathbf{x}_j)+\epsilon_s,
\end{equation}
where $\mathrm{pl}_0$ is the path loss at a reference distance $d_0$, $\delta>0$ is the path loss exponential decay, $s:\real^2\times\real^2\rightarrow\real$ represents the shadowing function between $\mathbf{x}_i$ and $\mathbf{x}_j$, and $\epsilon_s>0$ is a scalar that accounts for the error in the measurements.

As in \cite{hamilton2013propagation,romero2016blind,lee2017channel,gutierrez2018nonparametric}, we model the shadow fading with a \ac{TPT}. More precisely, we consider the shadowing to be modeled as follows:
\begin{equation}
\label{eq:discrete_shadow_fading}
s(\mathbf{x}_i,\mathbf{x}_j)= \sum_{p=1}^{P_xP_y}w(\phi_1(\mathbf{x}_i,\mathbf{x}_j),\phi_2(\mathbf{x}_{p},\mathbf{x}_i,\mathbf{x}_j))f(\mathbf{x}_{p}),
\end{equation}
where $w:\real\times\real\rightarrow\real$ is the window function; $f:\mathcal{A}\rightarrow\real$ is the \ac{SLF} function; the distance functions $\phi_1:\real^2\times\real^2\rightarrow\real_+,(\mathbf{x}_i,\mathbf{x}_j)\mapsto||\mathbf{x}_i-\mathbf{x}_j||_2$ and $\phi_2:\real^2\times\real^2\times\real^2\rightarrow\real_+,(\mathbf{x}_i,\mathbf{x}_j,\mathbf{x}_p)\mapsto ||\mathbf{x}_p-\mathbf{x}_i||_2+||\mathbf{x}_p-\mathbf{x}_j||_2$ are functions that measure the length of the direct link and the length of the path going through an intermediate point, respectively; $P_x$, $P_y$ is the number of horizontal and vertical pixels of the map, respectively, and $\mathbf{x}_{p}$ is the coordinate of the pixel $p\in \overline{1,P}$, with $P=P_xP_y$. Intuitively speaking, the shadowing between any two points in a map is potentially influenced by the SLF at any point of the map. This assumption is due to the multi-path nature of radio signals propagation. To capture these effects, the SLF is weighted by a window function that models the influence of each position on a link.

%The multipath nature of the radio propagation environments is modeled in \ac{TPT} through the window function. This function accommodates the influence that every location in the area has into a link. This is done based on the distance between transmitter and receiver through $\phi_1$, and the sum of the distances of the transmitter and receiver to a third location in the map through $\phi_2$. 

There are several models in the literature that exploit the concept of the Fresnel zone and try to model the window function in an statistical form \cite{agrawal2009correlated}. For example, the normalized elliptical model considers that an ellipsoid with foci at each node location determines the influence for each link in the area \cite{patwari2008effects}.
%To this end, the normalized elliptical model assigns a value to the window function of $1/\sqrt{\phi_1}$ if the distance $\phi_2$ is smaller or equal than the sum of the distant $\phi_1$ and a third parameter $\eta$, which usually depends on the wavelength of the transmitted signal. Otherwise, the window function is zero \cite{wilson2010radio}.
Mathematically speaking, the window function is defined according to the normalized elliptical model as follows \cite{hamilton2013propagation,wilson2010radio}:
\begin{equation}
\label{eq:elliptical_model}
w(\phi_1,\phi_2):=\left\{\begin{matrix}
0& \textup{if}~\phi_2>\phi_1+\eta/2\\ 
\frac{1}{\sqrt{\phi_1}}& \textup{otherwise},
\end{matrix}\right.
\end{equation}
with $\eta$ being in this case the signal wavelength. 

The normalized elliptical model assumes that all the points inside the ellipse have equal weight. Another model called inverse area elliptical model considers that some parts of the ellipse have a greater contribution than others \cite{hamilton2013propagation}. The reasoning behind this is that signal paths closer to the edge of the ellipse travel longer distances than those closer to the line of sight, so their contribution to the shadowing should be lower. Mathematically, the model is described as follows:
\begin{equation*}
w(\phi_1,\phi_2):=\left\{\begin{matrix}
0& \!\!\!\!\!\!\!\!\!\!\!\textup{if}~\phi_2>\phi_1+\eta/2\\ 
\textup{min}(\Gamma(\phi_1,\phi_2),\Gamma(\phi_1,\phi_1+\nu))& \textup{otherwise},
\end{matrix}\right.
\end{equation*}
with $\nu>0$ being an user-selected parameter and
$$ \Gamma(\phi_1,\phi_2)=\frac{4}{\pi\phi_2\sqrt{\phi_2^2-\phi_1^2}}. $$

We now proceed to write \eqref{eq:discrete_shadow_fading} in a matrix form. To this end, let $\signal{f}=[f_1,...,f_{P}]^{\top}\in\real^{P}$, where $f_{p}:=f(\mathbf{x}_p)$. Assuming channel reciprocity in the path loss between any two points, the total number of links $T$ in a map of $P$ pixels is $T:=P(P-1)/2,$ and the index set $\mathcal{M}$ of all links is $\mathcal{M}:=\overline{1,...,T}$. We define a bijective mapping $$B:\mathcal{P}\times\mathcal{P}\rightarrow \mathcal{M}:~m\mapsto P(j-1)+i$$ 
that maps any two indexes $i,j \in \mathcal{P}$ onto a link index $m\in\mathcal{M}$.
By $W\in\real^{T\times P}$ we define a matrix containing all possible weight values of the map, where  $w_{m,p}:=w(\phi_1(\mathbf{x}_i,\mathbf{x}_{j}),\phi_2(\mathbf{x}_{i},\mathbf{x}_j,\mathbf{x}_{p}))$ with $m:=B(i,j)$. The shadow fading vector $\signal{s}\in\real^{T}$ is generated by stacking all possible values of $s(\mathbf{x}_i,\mathbf{x}_j)$, such that:
\begin{equation}
\label{eq:vector_s}
\signal{s}=W\signal{f}.
\end{equation}

%\section{Problem Statement}
%\label{sec:problem_statement}
%For this paper, we assume that a central entity such as a base station collects all measurements to perform the algorithmic iterations, and then communicate the estimations back to the users. Such a scenario is well suited for applications such as \ac{V2X} communications, where a cellular infrastructure exists on top of the vehicular network and controls some aspects of the communication.

Consider that measured path loss values arrive at a central entity at different time instants. Let $\shahat{_t}\in\real^{M_t}$ be the (noisy) shadowing measurements acquired at time instant $t\in\Natural$, and $\Omega_t:=\{\omega_t^1,...,\omega_t^{M_t}\}$ is the measurements index set with cardinality $M_t$ \cite{kim2010cooperative,dall2011channel}. For the sake of simplicity, we assume that $M_1=M_2=\cdots=M$ for the reminder of this manuscript. The elements in $\shahat_{t}$ represent a selection of all elements contained in $\shahat\in\real^{T}$, which is in turn a vector containing all possible (noisy) measurements in a map. Analogously, the matrix $W_{t}\in\real^{M\times P}$ denotes the weight matrix whose rows correspond to the link measurements received at time $t$. Let $(\forall i\in\Omega_t)~\w_i\in\real^P$ be the rows of $W_t$. The convex set $\setC_i$ is defined as the $\ell_2$-ball of radius $r$ around the vector $\w_i$:
$$ \setC_i:=\{\x\in\real^P|\|\x-\w_i\|_2\leq r\}. $$
Through the sets $\setC_i$ we add the context information related to a certain model such as the elliptical model, or the normalized elliptical previously introduced. But, instead of assuming that the window function has to follow one of these or any other model, we encode this information in the form of a constraint convex set to our optimization problem. Such an approach allows us to include information coming from a model, since a solution to our problem is expected to be close to the model, but it also gives us \textit{some} freedom in case the physical environment is not precisely expressed by it. Note that we have restricted our constraint sets to the $\ell_2$-ball of radius $r$ and centered in  $\w_i$. One can change or add the constraint sets to include more context information, provided that the sets are convex ones and their intersection is not empty \cite{artacho2019douglas}. %Indeed, as we will see in Subsection \ref{sec:improving_alg}, some more valuable information may be added to the problem to improve performance.

The proposed approach is based on the assumption that the \ac{SLF} vector $\signal{f}$ is a block-sparse vector \cite{eldar2010}. The block-sparsity of $\signal{f}$ is justified by the fact that most pixels of a map represent the free space, whose absorption value is negligible compared to the absorption of solid bodies, and therefore assumed to be zero. Further, non-zero entries of $\signal{f}$ are those belonging to walls and other physical structures, therefore they are concentrated in groups.

In  light of the above assumption, an intuitive approach to estimate the \ac{SLF} is to minimize the least squares error regularized by the elastic net to improve block-sparsity.

Previous studies such as \cite{romero2016blind,gutierrez2018nonparametric} in this particular application domain have shown that attempts at minimizing with respect to $W$ fail to give good results because the problems are in general severely ill-posed. To address this limitation, we impose additional structure on $W$ by considering a non-linear kernel approach similar to \cite{gutierrez2018nonparametric}.

With some abuse of notation, we define the vector $\signal{c}:=[c_{1},...,c_{M}]^{\top}\in\real^{M}$, where $c_{m}:=\phi(\mathbf{x}_i,\mathbf{x}_j)$. Similarly, denote $\signal{d}:=[d_{1,1},...,d_{M,1},d_{1,2},...,d_{M,P}]^{\top}\in\real^{M P}$, where $d_{m,p}:=\phi_2(\mathbf{x}_i,\mathbf{x}_j,\mathbf{x}_p)$. Define $\signal{\Phi}_{m,p}:=[c_m,d_{m,p}]^{\top}\in\real^2_+$ as a two-dimensional vector with arbitrary $c_m,d_{m,p}$ stacked together;  and we assume that the window function can be written as a function of a positive definite kernel in the following form:
\begin{equation}
\label{eq:w_of_phi}
	w(\signal{\Phi})=\sum_{m=\omega_t^1}^{\omega_t^M}\sum_{p=1}^{P}\alpha_{q(m,p)} \kappa(\signal{\Phi}_{m,p},\signal{\Phi}),
\end{equation}
where $\alpha_{q(m,p)}\in\real$ are appropriate scalars to be determined, $\omega_t^1,...,\omega_t^M$ are the ordered elements of $\Omega_t$, and $q(m,p)$ is an index obtained as $q(m,p):=P(m-1)+p$. In particular, in this study we use \ac{RBF} as kernel:
$$
\kappa_{q,q'}:=\kappa(\signal{\Phi}_{q},\signal{\Phi}_{q'})=\exp\left(-\frac{\left \|\signal{\Phi}_{q}-\signal{\Phi}_{q'} \right \|^2_2}{2\sigma^2}\right),
$$
where $\sigma>0$ is the width of the kernel.
Define the kernel matrix $K_t\in \real^{M P\times M P}$ with the $(q,q')$ element given by  $\kappa_{q,q'}$, and also define the vector $\alphas_t$ as $\alphas_t=[\alpha_{q_1},\cdots,\alpha_{q_{M P}}]^{\top}\in\real^{MP}$.

%Let $\bar{K}_t\in\real^{MP\times P}$ be a sub-matrix of $K_t$ such that its columns correspond to the indexes $\omega_t^1,...\omega_t^M$ of the index set $\Omega_t$. 
Let $j\in\Omega_t$, and define the convex set 
$$ \setQ_{t,j}:=\{\x\in\real^P| \|\x-\av_{t,j}\|_2\leq r\}, $$
where $\av_{t,j}\in\real^P$ is a vector given by $\av_{t,j}:=[\alphas_t]_j$, with $[\cdot]_j$ representing the selection of entries $P(j-1)+1$ until $Pj$ of a vector.
The set in the $\alphas$-parameter space $\setC_t^{\alpha}$ is defined at time $t$ as
$$ \setC_t^{\alpha}:=\bigtimes_{j=1}^M\setQ_{t,j} \subset\real^{MP},$$
which is also convex, since the Cartesian product of convex sets is a convex set \cite{artacho2019douglas}. Let $\x\in\real^P$. The projection of $\x$ onto the set  $\setQ_{t,j}$ is given by:
$$ \Pi_{\setQ_{t,j}}(\x)=\left\{\mathbf{q}\in\setQ_{t,j}|\mathbf{q}=r_{\textup{min}}\frac{\x-\av_{t,j}}{\|\x-\av_{t,j}\|_2}+\av_{t,j}\right\}, $$
with $r_{\textup{min}}=\min(r,\|\x-\av_{t,j}\|_2)$.
We can construct the projection of a vector $\alphas\in\real^{MP}$ onto $\setC_t^\alpha$ as \cite[Proposition 8]{artacho2019douglas}:
\begin{equation}
	\label{eq:proj_C_j_a}
	\Pi_{C_t^\alpha}(\alphas)=\bigtimes_{j=1}^M \Pi_{\setQ_{t,j}}(\x_j),
\end{equation}
where $\x_j:=[\alphas]_j\in\real^P$.

Considering the non-linear kernel approach, we can write an optimization problem over $\f,\alphas_{1},...,\alphas_t$ as follows:
\begin{multline}
\label{eq:problem_3}
\underset{\alphas_{\tau}\in \setC^{\alpha}_{\tau},\f\in\real^P}{\textup{minimize}}\frac{1}{t}\sum_{\tau=1}^t\left \|\shahat_{\tau}-A_{\signal{f}}K_{\tau}\signal{\alpha}_{\tau} \right \|^2_2
\\+ \lambda_1\left\|\signal{f}\right\|_1+\lambda_2\left \|\signal{f}  \right \|^2_2+\lambda_3\|\alphas_\tau\|^2_2,
\end{multline}
where $\lambda_1$, $\lambda_2$ and $\lambda_3$ are nonnegative regularization parameters \cite{zou2005regularization}, $A_{\signal{f}}=I_{M}\otimes \f^{\top}\in \real^{M\times MP}$ is the Kronecker product between the identity matrix $I_{M}\in\real^{M\times M}$ and $\f^{\top}$, and the term $\lambda_3\|\alphas_\tau\|^2_2$ is included to guarantee stability of the solutions. 

Problem \eqref{eq:problem_3} can be viewed as the minimization of the empirical cost of a model-constrained least squares regression regularized with the elastic net. Note however that \eqref{eq:problem_3} is not jointly convex in $\signal{f},\signal{\alpha}_{1},...,\signal{\alpha}_{t}$, but it is convex in $\signal{f}$ if $\signal{\alpha}_{1},...,\signal{\alpha}_{t}$ are fixed, and vice versa \cite{jain2017non}. As a result, we consider an alternating minimization strategy to address \eqref{eq:problem_3} where, at time $t$ of arrival of new measurements, one set of variables is updated while the remaining ones are kept constant. This process is carried out until a stopping criterion is met.

\section{Online Path Loss Learning}
\label{sec:online_learning}

In scenarios like the one presented in the previous section, one is in general interested in obtaining an estimate $\check{\f}_t$ of $\f$ at time $t$, instead of waiting for all measurements to arrive before solving Problem \eqref{eq:problem_3}. In these cases, a standard alternative is to try to optimize an empirical cost function \cite{mairal2010online}. To this end, we first define $(\forall t\in\Natural)~l(\shahat_t,\f)$ as the optimal value of the partial optimization problem with respect to $\alphas_t$:
\begin{equation}
\label{eq:cost_alpha}
	l_t(\f)\triangleq l(\shahat_t,\f)=\underset{\alphas_t\in\setC^{\alpha}_t}{\textup{min}} \frac{1}{2}\left \|\shahat_t-A_{\f}K_{t}\alphas_t \right \|^2_2+\lambda_3\|\alphas_t\|^2_2,
\end{equation}
and then we define the empirical cost function of the \ac{SLF} problem as
\begin{equation}
	\label{eq:empirical_cost}
	h_t(\f)=\frac{1}{t}\sum_{\tau=1}^{t}\ell_\tau(\f)\triangleq l_\tau(\f)+\lambda_1\|\f\|_1+\frac{1}{2}\lambda_2\|\f\|_2^2.
\end{equation}
The problem of online \ac{SLF} learning corresponds to the minimization of \eqref{eq:empirical_cost}:
\begin{equation}
	\label{eq:min_empirical_cost}
	\underset{\f\in\real^P}{\textup{minimize}}~h_t(\f).
\end{equation}
Note that \eqref{eq:problem_3} and \eqref{eq:min_empirical_cost} are equivalent problems in the sense that their set of minimizers are identical. However, given a finite training set, one should not spend too much time on accurately minimizing the empirical cost, since it is only an approximation of the expected cost and might not provide good solutions, especially when $t$ is small \cite{bottou2008tradeoffs}. Another limitation of directly minimizing \eqref{eq:empirical_cost} is the fact that the complexity increases with the acquired number of samples, making the approach unsuitable for online settings. Therefore, our interest lies on the minimization of the expected cost of $h_t(\f)$:
\begin{equation}
	\label{eq:h}
	h(\f)\triangleq\E_\shahat[\ell_t(\f)]=\underset{t\rightarrow \infty}{\textup{lim}}h_t(\f)~a.s.,
\end{equation}
where the expectation, which for now is supposed to be finite, is taken relative to the probability distribution $p(\shahat)$ of the measurements. Later in Section \ref{sec:convergence_analysis} we will state the necessary conditions for \eqref{eq:h} to be true.

\subsection{Addressing the elastic net subproblem}

Instead of minimizing the empirical cost in \eqref{eq:empirical_cost}, we propose the minimization at time instant $t$ of a new function $(\forall\alphas_1,...,\alphas_t\in\real^{MP})$
\begin{equation}
	\label{eq:surrogate}
	\check{h}_t(\f)=\frac{1}{t}\sum_{\tau=1}^t\frac{1}{2}\left \|\shahat_{t}-A_{\signal{\alpha}_{\tau}}\signal{f} \right \|^2_2+\lambda_1\left\|\signal{f}\right\|_1+\frac{1}{2}\lambda_2\left \|\signal{f}  \right \|^2_2,
\end{equation}
where $A_{\signal{\alpha}_{\tau}}=\sum_{n=1}^{M}\signal{e}_n\otimes(\signal{\alpha}_{\tau}^{\top}K_{\tau})\in\real^{M \times P}$, and $\signal{e}_n\in\real^{M}$ is a unitary vector with all zeros but the $nth$ entry one.

The motivation behind this approach lays on the fact that $\check{h}_t$ is convex in $\f$, and also because one can readily show that it upperbounds the empirical cost function $h_t(\f)$. Indeed, we will prove in Section \ref{sec:convergence_analysis} that $\check{h}_t$ acts as a surrogate of $h_t$, i.e. $\check{h}_t(\f)$ and $h_t(\f)$ converge to the same limit when $t\rightarrow\infty$.

%The problem of minimizing $\check{h}_t(\f)$ is still coupled on $\f$ over $\tau=1,...,t$. This produces in general an increase of the problem complexity and of the required memory at every time step, which is an undesirable characteristic for online problems. However, we can overcome this limitation by rewriting $\check{h}_t(\f)$ in the following way:

We can rewrite $\check{h}_t(\f)$ in a more convenient way for our online algorithm in the following way:
\begin{equation}
\label{eq:h_check_t_modified}
\check{h}_t(\f)\!=\!\frac{\Tr(\bar{A}_t \f\f^\top\!)\!\!-\!\!\Tr(\textbf{b}_t\f^\top\!)}{2t}\!+\!\frac{1}{2}\lambda_2\|\f\|^2_2+\lambda_1\|\f\|_1+C,
\end{equation}
%$\bar{\shahat}_t=\sum_{\tau=1}^t \shahat_t\in\real^M$
where $C=\Tr(\shahat_t\shahat_t^\top)/2t$ is a constant, $\bar{A}_t=\sum_{\tau=1}^t A_{\alphas_\tau}^\top A_{\alphas_\tau}\in\real^{P\times P}$, and $\textbf{b}_t=\sum_{\tau=1}^t A^\top_{\alphas_\tau}\shahat_\tau\in\real^P.$ Note that the structures $\bar{A}_t$ and $\textbf{b}_t$ do not change size with increasing $t$. This suggests an algorithm in which, at time $t$, we keep track and update  $\bar{A}_t$ and $\textbf{b}_t$, and a new estimate $\check{\f}_t$ is found after minimizing function $\check{h}_t(\f)$ in \eqref{eq:h_check_t_modified} w.r.t. $\f$. Note that $(\forall\alphas_1,...,\alphas_t\in\real^{MP})$ the function $\check{h}_{t}$ is coercive, proper and strongly convex, therefore 
\begin{equation}
\label{eq:f_star_update}
\check{\f}_{t}\in\textup{argmin}_{\f}~\check{h}_{t}(\f)
\end{equation}
exists and is unique \cite{bauschke2011convex}.

The function $\check{h}_t$ can be expressed as the sum of two functions $g_1$ and $g_2$, where $g_1$ is convex and differentiable, while $g_2$ is convex but non-smooth. More precisely, define 
\begin{equation}
	\label{eq:g_1}
	g_1\!:\!\real^{P}\!\rightarrow\!\real\!:\!\f\!\mapsto\! \frac{1}{2t}\!\left(\!\Tr(\bar{A}_t \f\f^\top)\!-\!\Tr(\textbf{b}_t\f^\top)\!+\!t\lambda_2\|\f\|^2_2\!\right)\!+\!C,
	%g_1:\real^{P}\rightarrow\real:\f\mapsto \frac{1}{2t}\left(\Tr(\bar{\shahat}_t\bar{\shahat}_t^\top)+\Tr(\bar{A}_t \f\f^\top)-\Tr(\textbf{b}_t\f^\top)+t\lambda_2\|\f\|^2_2\right),
\end{equation}
and 
\begin{equation}
	\label{eq:g_2}
	g_2:\real^{P}\rightarrow\real:\f\mapsto \lambda_1||\f||_1.
\end{equation} 
The problem $\textup{min}_{\f}~\check{h}_{t}(\f)$ can then be formulated as:
\begin{equation}
\label{eq:canonical_f_problem}
\underset{\f}{\textup{min.}} ~g_1(\f)+g_2(\f).
\end{equation}
This kind of problems are well understood and there is a plethora of algorithms to solve them \cite{wang2014low,gandy2011tensor,bauschke2011convex,combettes2011proximal}. We propose using the forward-backward splitting method \cite{combettes2011proximal} due to its good performance compared to other methods in this particular application domain. It can be shown \cite{combettes2011proximal} that, if $g_1$ is Lipschitz-differentiable, Problem \eqref{eq:canonical_f_problem} admits one solution and that, for certain $\gamma\in]0,\epsilon[$ and $\epsilon>0$, its solution is characterized by the fixed point equation 

\begin{equation}
\label{eq:fixed_point_eq_f_problem}
\f=\textup{prox}_{\gamma g_2}(\f-\gamma\nabla g_1(\f)),
\end{equation}
where $\textup{prox}_{\gamma g_2}$ is the proximal operator of $g_2$ with attracting factor $\gamma$. 

%\begin{remark}
%	Under suitable conditions, we will prove in Section \ref{sec:convergence_analysis} that $g_1$ is indeed a Lipschitz-differentiable function $\forall (\alphas_1,...,\alphas_t) \in\setC_1^\alpha\times...\times \setC_t^\alpha$.
%\end{remark}

An iterative solution to \eqref{eq:canonical_f_problem} is then given by

\begin{equation}
\label{eq:its_f_problem}
\f^{(n+1)}=\textup{soft}_{\lambda_1}\left(\f^{(n)}+\gamma\left(\frac{1}{2t}\textbf{b}^\top_t -\frac{1}{t}\bar{A}_t\f^{(n)}-\lambda_2\f^{(n)}\right)\right),
\end{equation}
%\begin{equation}
%\label{eq:its_f_problem_old}
%\f^{(n+1)}=\textup{soft}_{\lambda_1}\left(\f^{(n)}+\gamma(A_{\alpha_t}^{\top}(\hat{\signal{s}}_t-A_{\alpha_t}\f^{(n)})-\lambda_2\f^{(n)})\right),
%\end{equation}
where $\textup{soft}_{\lambda_1}(\cdot)$ is the soft thresholding function with threshold $\lambda_1$, and $n\in\Natural$ is the iteration index.%, $\textbf{1}\in\real^P$ is a vector of ones

\subsection{Addressing the constraint least squares subproblem}
Unlike the minimization of $\f$ in Problem \eqref{eq:problem_3}, the minimization over $\alphas_1,...,\alphas_t$ is not coupled in the summation of functions through its variables, meaning that we can separate Problem \eqref{eq:problem_3} with respect to $\alphas_1,...,\alphas_t$, while keeping $\f$ fixed. 

The problem of minimizing $\alphas_t$ is defined $(\forall\f\in\real^P)$ as follows:

\begin{equation}
\label{eq:least_squares_problem}
\underset{\alphas_t\in\setC^{\alpha}_t}{\textup{min}}~k_t(\alphas_t)\triangleq\frac{1}{2}\left \|\shahat_t-A_{K_t}\alphas_t \right \|^2_2+\lambda_3\|\alphas_t\|^2_2,
\end{equation}
%Provided that the matrix $A_{K_t} A^{\top}_{K_t}$, with $A_{K_t}=A_\f K_t\in \real^{M\times MP},$ is a positive semi-definite matrix, 
where $A_{K_t}=A_\f K_t\in \real^{M\times MP}.$ Problem \eqref{eq:least_squares_problem} has $(\forall t\in\Natural)$ a unique solution because $(\forall\alphas_t\in\real^{MP})~k_t(\alphas_t)$ is a quadratic function and the lowest eigenvalue of its Hessian is at least $\lambda_3$. Such solution can be attained with the projected gradient method:

\begin{equation}
	\label{eq:projected_gradient_its}
	\alphas_t^{(n+1)}\!=\!\Pi_{\setC^{\alpha}_t}\!\left(\alphas_t^{(n)}\!+\!\mu\! \left(A_{K_t}^\top\shahat\!-\!A_{K_t}^\top A_{K_t}\alphas_t^{(n)}\!-\!\lambda_3\alphas_t^{(n)}\right)\right)\!,
\end{equation}
where $\Pi_{\setC^{\alpha}_t}$ is the projection onto the convex set $\setC^{\alpha}_t $ given by \eqref{eq:proj_C_j_a}, $\mu>0$ is the step size, and $n\in\Natural$ is the iteration index. 
%\begin{remark}
%	Under suitable conditions, we will prove in Section \ref{sec:convergence_analysis} that $g_1$ is indeed a Lipschitz-differentiable function $\forall (\alphas_1,...,\alphas_t) \in\setC_1^\alpha\times...\times \setC_t^\alpha$.
%\end{remark}
The reason for the selection of an iterative method such as the projected gradient to solve \eqref{eq:least_squares_problem} instead of solving the dual problem will become apparent in the next section, but for now we mention the need of the intermediate values $\alphas_t^{(n)}$ for our algorithmic solution.

%Similar as in the previous section, we define the problem of minimizing the sum of two convex functions 
%$$k_1:\real^{M P}\rightarrow\real:\alphas_t\mapsto \frac{1}{2}||\hat{\signal{s}}_{t}-A_{\signal{f}_t}K_t\alphas_t||^2_2+\frac{1}{2}\lambda_2||K_t\alphas_t||^2_2$$ and 
%$$k_2:\real^{M P}\rightarrow\real:\alphas_t\mapsto \lambda_1||K_t\alphas_t||_1,$$ where $k_1$ is continuous and differentiable, and $k_2$ is continuous but non-smooth. The problem of estimating $\signal{\alpha}_t$ becomes:
%\begin{equation}
%\underset{\signal{\alpha}_{t}}{\textup{min.}}~ k_1(\alphas_t)+k_2(\alphas_t).
%\end{equation}

%Using again the forward-backward splitting method, we obtain the following iterations:
%\begin{equation}
%\label{eq:its_alpha_problem}
%\signal{\alpha}^{(n+1)}_t=\textup{soft}_{\mu_1}(\av_{t}^{(n)}+\beta(A_{\signal{f}_t}^{\top}(\hat{\signal{s}}_t-A_{\signal{f}_t}\av_{t}^{(n)})-\mu_2\av_{t}^{(n)})),
%\end{equation}
%where $\av_t=K_t\alphas_t$, and $\beta$ is the attraction factor of the operator $\textup{prox}_{\beta,k_2}$.

\subsection{Algorithmic Solution}
\label{sec:algorithmic_solution}
The missing piece in the online \ac{SLF} learning problem is the combination in an algorithm of the estimates $\check{\f}_t$ in \eqref{eq:f_star_update} with the iterative solutions of $\f^{(n)}$ and $\alphas^{(n)}_t$ in \eqref{eq:its_f_problem} and \eqref{eq:projected_gradient_its}, respectively. 

One important caveat of the online algorithm is that we implement a ``descent" version of the outlined alternating minimization process. This means that instead of running the iterations in \eqref{eq:its_f_problem} until a stopping criterion is met, and then proceed with the iterations in \eqref{eq:projected_gradient_its} again until improvements are small enough, we take only one step at a time of the iterations in \eqref{eq:its_f_problem}, and another step of iterations in \eqref{eq:projected_gradient_its}, alternatively until a combined stopping criterion is met. The rationale behind this is that the improvement from $\f_t^{(1)}$ to $\f_t^{(N)}$ and from $\alphas_t^{(1)}$ to $\alphas_t^{(N)}$ might not be relevant enough to justify finding the optimal solutions of the two convex sub-problems in $\alphas_t$ and $\f_t$ alternatively. Indeed, simulations for this particular application have consistently shown better performance and shorter execution time with the ``descent" strategy.
%the improvement from $\f_t^{(1)}$ to $\f_t^{(N)}$ and from $\alphas_t^{(1)}$ to $\alphas_t^{(N)}$ might not be relevant enough to justify finding the optimal solutions of the two convex sub-problems in $\alphas_t$ and $\f_t$ alternatively. Instead, it is more efficient to implement the ``descent" version of the alternating minimization process since the price in compute time to generate $\f^{(N)}_t$ is $N$ times larger than $\f^{(1)}_t$.
%there is no guarantee that $\f^{(N)}_t$, with $N\gg 1$, is a ``better" solution for Problem \eqref{eq:problem_3} than $\f^{(1)}_t$ is. And the price in compute time to generate $\f^{(N)}_t$ is $N$ times larger than $\f^{(1)}_t$. A similar argument can be made regarding the updates $\alphas^{(N)}_t$ and $\alphas^{(1)}_t$ in \eqref{eq:projected_gradient_its}.

Consider that the samples $\shahat_1,\shahat_2,...$ are i.i.d samples drawn from a common distribution $p(\shahat_t)$. In order to guarantee stability of the iterations, we need to choose the update parameters $\mu$ and $\gamma$ for every iteration index $n$ before updating $\alphas_t$ and $\f_t$, respectively. More precisely, we need to choose $\mu$ ($\gamma$) strictly smaller than the Lipschitz constant of $\nabla k_t(\alphas)$ ($\nabla g_1(\f)$), given by $L_k=\|A_\f^\top A_\f+\lambda_3 I_{MP}\|_2$ ($L_g=\|\bar{A}_t+\lambda_2 I_{P}\|_2$), where $\|\cdot\|_2$ here is the spectral norm of a matrix. Section \ref{sec:convergence_analysis} deals with the convergence analysis of the algorithm and provides formal proof for the selection of $\mu$ and $\gamma$. One important element to guarantee the convergence of the algorithm is the selection of an $\epsilon$ such that $\epsilon\in]0,1[$.

Finally, since $\check{h}_t$ is expected to be close to $\check{h}_{t-1}$ for large values of $t$, so are under suitable conditions $\check{\f}_t$ and $\check{\f}_{t-1}$, which makes it efficient to use $\check{\f}_{t-1}$ as ``warm" initialization for computing $\check{\f}_t$. Our procedure is summarized in Algorithm \ref{alg:online}.

\begin{algorithm}
	\SetKwInOut{Input}{Input}
	\SetKwInOut{Fix}{Fix}
	\SetKwInOut{Init}{Init}
	\SetKwInOut{Output}{Output}
	\Input{$\f^{(0)}_1$}
	\Fix{$\lambda_1,\lambda_2,\lambda_3,\epsilon$}
	\Init{$\bar{A}_0\in\real^{P\times P}\gets 0$, $\textup{b}_0\in\real^{P}\gets 0$}
	\For{$t=1,...,t_{\textup{max}}$}{
		$n\gets 0$\\
		Draw $\shahat_t$ from $p(\shahat)$\\
		Random Init $\alphas^{(n)}_t$\\
		\While{\textup{stopping criterion NOT met}}{
			Choose $\mu\in]0,(1-\epsilon)/L_k]$\\
			Update $\alphas^{(n+1)}_t$ according to \eqref{eq:projected_gradient_its}\\
			$\bar{A}_t\gets \bar{A}_{t-1}+A_{\alphas^{(n+1)}_t}^\top A_{\alphas^{(n+1)}_t}$\\
			$\textbf{b}_t\gets \textbf{b}_{t-1}+A_{\alphas^{(n+1)}_t}^\top\shahat_t$\\
			Choose $\gamma\in]0,(1-\epsilon)/L_g]$\\
			Update $\f^{(n+1)}_t$ according to \eqref{eq:its_f_problem}
			
		}
		Update $\check{\f}_t\gets \f^{(n)}_t$ \\
		Warm Init $\f^{(0)}_{t+1}\gets \check{\f}_{t}$\\%\f^{(n)}_{t}
	}
	\Output{$\alphas_1,...,\alphas_{t_{\textup{max}}},\check{\f}_{t_{\textup{max}}}$}
	%\vspace{10pt}
	\caption{Online algorithm}
	\label{alg:online}
\end{algorithm}

\subsection{Complexity}

The complexity of Alg. \ref{alg:online} is dominated by the two matrix multiplications (one to compute $\alphas^{(n+1)}_t$, the other one to compute $\f^{(n+1)}_t$) required in each iteration $n$, which we assume $\mathcal{O}(n^3)$. We also assume that the stopping criterion in both cases is given by a maximum number of iterations $N$. The complexity is then given by $\mathcal{O}(t_{\textup{max}}N(P^3M^3+P^2M))$. Because the complexity is dominated by the first term ($ P^3M^3$), we see that it scales linearly with the total number of iterations ($t_{\textup{max}}N$), and cubicly for the number of pixels in the map ($P$) times the number of samples acquired at $t$ ($M$). 

\section{Convergence Analysis}
\label{sec:convergence_analysis}
In this section, we provide a convergence analysis of our proposed algorithm both in the objective and in the arguments. We focus our analysis on a modified version of Alg.~\ref{alg:online}, presented in Alg.~\ref{alg:alt_minimization}. Algorithm \ref{alg:alt_minimization} represents the non-descent version of Alg.~\ref{alg:online}, in the sense that, at time $t$, both sub-problems in $\alphas_t$ and $\f$ are solved sequentially, instead of alternating step by step between the two sub-problems as in Alg.~\ref{alg:online}.  We begin with the necessary (and reasonable) assumptions for the convergence of both algorithms. Several lemmata are then presented as intermediate statements to achieve the first of the two most relevant results, which is the proof of convergence of Alg.~\ref{alg:alt_minimization} to a stationary point of Problem \eqref{eq:min_empirical_cost}. After this, we trace back the connection with the original Alg.~\ref{alg:online} and also prove its convergence in the arguments.

\begin{algorithm}
	\SetKwInOut{Input}{Input}
	\SetKwInOut{Fix}{Fix}
	\SetKwInOut{Init}{Init}
	\SetKwInOut{Output}{Output}
	\Input{$\f_0$}
	\Fix{$\lambda_1,\lambda_2,\lambda_3$}
	\Init{$\bar{A}_0\in\real^{P\times P}\gets 0$, $\textup{b}_0\in\real^{P}\gets 0$}
	
	\For{$t=1,...,t_{\textup{max}}$}{
		Draw $\shahat_t$ from $p(\shahat)$\\
		$\alphas_t\gets \underset{\alphas_t\in\setC^{\alpha}_t}{\textup{argmin}} \frac{1}{2} \|\shahat_t-A_{\f_{t-1}}K_{t}\alphas_t  \|^2_2$\\
		$\bar{A}_t\gets \bar{A}_{t-1}+A_{\alphas_t}^\top A_{\alphas_t}$\\
		$\textbf{b}_t\gets \textbf{b}_{t-1}+A_{\alphas_t}^\top\shahat_t$\\
		$\f_t\gets \underset{\f}{\textup{argmin}}~\frac{1}{t}\sum_{\tau=1}^t\frac{1}{2} \|\hat{\signal{s}}_{\tau}-A_{\signal{\alpha}_{\tau}}\f  \|^2_2+\lambda_1\|\f\|_1+\frac{1}{2}\lambda_2 \|\f \|^2_2$ \\
		
		Update $\check{\f}_t\gets \f_t$ \\
		Warm Init $\f_{t+1}\gets \check{\f}_{t}$\\%\f^{(n)}_{t}
	}
	\Output{$\alphas_1,...,\alphas_{t_{\textup{max}}},\check{\f}_{t_{\textup{max}}}$}
	%\vspace{10pt}
	\caption{Alternating minimization algorithm}
	\label{alg:alt_minimization}
\end{algorithm}

%\subsection{Assumptions}

\begin{assumption}
	\label{ass:iid}
	The samples $\shahat_1,\shahat_2,...$ are i.i.d. samples drawn from a common distribution $p(\shahat)$ with compact support $\chi$.
\end{assumption}

%\begin{assumption}
%	\label{ass:hess_lower_bound}
%	The functions $k_t:\real^{MP}\rightarrow\real:~\alphas_t\mapsto  \frac{1}{2}\left \|\shahat_t-A_{\f_{t}}K_{t}\alphas_t \right \|^2_2+\lambda_3\|\alphas_t\|^2_2,~\forall t\in\{1,2,...,t_{\textup{max}}\}$ have lower-bounded Hessians.
%\end{assumption}
%Assumption \ref{ass:hess_lower_bound} implies that the Hessian matrix $A_{K_t}^\top A_{K_t}\succeq \mu I_M$, with $\mu>0$ being a constant and $I_M\in\real^{M\times M}$ being the identity matrix. In other words, the minimum eigenvalue of the Hessian matrix of $k_t$ is at least $\mu$. Note that this assumption could be easily dropped if we added an $\ell_2$ regularization term $\frac{\mu}{2}\|\alphas_tK_t\|^2_2$ to Problem \eqref{eq:problem_3}. We omitted this for simplicity. 
\begin{assumption}
	\label{ass:compact_set_f_it}
	Let $\setF\subset\real^p$ be a convex, compact and non-empty set. We assume that the iterates $(\f_t)_{t\in\Natural}$ are in $\setF$. %Further, we also assume that the set of stationary points of Problem \eqref{eq:min_empirical_cost} denoted as $\setF^*$ is contained in $\setF,$ i.e., $\setF^*\subset\setF.$
\end{assumption}

%\subsection{Main results}

We now proceed to state our main results, namely, that $(\forall \f\in\setF)~h(\f)$ defined in \eqref{eq:h} exists almost surely, that $\check{h}_t$ acts asymptotically as a surrogate function of $h$, and that both Alg.~\ref{alg:online} and \ref{alg:alt_minimization} converge to a stationary point of $h$ asymptotically.
%the most important of which shows the converge of Alg.~\ref{alg:online} to a stationary point of Problem \eqref{eq:min_empirical_cost}. 
To facilitate the analysis, let us define the following functions:
\begin{equation}
	\label{eq:l_1_t}
	l^1_t(\f):=l_t(\f)+\frac{1}{2}\lambda_2\left \|\signal{f}  \right \|^2_2
\end{equation}
\begin{equation}
\label{eq:l_hat_1_t}
\hat{l}^1_t(\f):=\frac{1}{2}\left \|\hat{\signal{s}}_{t}-A_{\signal{\alpha}_{\tau}}\signal{f} \right \|^2_2+\frac{1}{2}\lambda_2\left \|\signal{f}  \right \|^2_2
\end{equation}
\begin{equation}
\label{eq:l_hat_t}
\hat{l}_t(\f):=\hat{l}^1_t(\f)+g_2(\f)
\end{equation}
\begin{equation}
\label{eq:h_1_t}
h^1_t(\f):=\frac{1}{t}\sum_{\tau=1}^{t} l^1_\tau(\f)
\end{equation}
%$$ l^2(\f):=\lambda_1\|\f\|_1$$
%$$ \hat{l}^1_t(\f):=\frac{1}{2}\left \|\hat{\signal{s}}_{t}-A_{\signal{\alpha}_{\tau}}\signal{f} \right \|^2_2+\frac{1}{2}\lambda_2\left \|\signal{f}  \right \|^2_2$$
%$$ \hat{l}_t(\f):=\hat{l}^1_t(\f)+g_2(\f) $$
%$$ h^1_t(\f):=\frac{1}{t}\sum_{\tau=1}^{t} l^1_\tau(\f)$$
%$$ h^2_t(\f):=\frac{1}{t}\sum_{\tau=1}^{t} l^2_\tau(\f)$$
\begin{equation}
	\label{eq:h_check_1_t}
	\check{h}^1_t(\f):=\frac{1}{t}\sum_{\tau=1}^{t} \hat{l}^1_\tau(\f)
\end{equation}
%$$ $$
%$$ h_t(\f):=h_t^1(\f)+g_2(\f) $$
%$$ \check{h}_t(\f):=\check{h}_t^1(\f)+g_2(\f). $$

%Fact \ref{fact:compact_set_f_it} can be easily verified since Assumption \ref{ass:iid} ensures that the vectors $\f_t$ are bounded for $\shahat_t\in\chi$ and $\alphas_t\in\setC_t^\alpha$. 
The following lemma states necessary properties of the previously defined functions. 
\begin{lemma}
	\label{lemma:regularity}
	Under Assumptions \ref{ass:iid}, \ref{ass:compact_set_f_it}, the following is true:
	\begin{enumerate}[label=(\roman*)]
		\item \textup{(Continuous differentiability)} $(\forall t\in\Natural)(\forall\shahat_t\in\chi)$ the functions $l_t$ defined in \eqref{eq:cost_alpha}, and $l^1_t$ defined in \eqref{eq:l_1_t} are continuously differentiable.
		\item \textup{(Local Lipschitz-continuity)} $(\forall t\in\Natural)(\forall \f_1,\f_2\in\setF)(\forall\shahat_t\in\chi)$ let $\hat{l}_t$ be the function defined in \eqref{eq:l_hat_t}. Then,
		$$ |\hat{l}_t(\f_1)-\hat{l}_t(\f_2)|\leq c_1\|\f_1-\f_2\|, $$
		where $c_1>0$ is a constant.
		%\item The function $h$ is Lipschitz-continuous $\forall~(\f,\shahat)\in\setF\times\chi.$
		\item \textup{(Uniformly strong convexity)} $(\forall t\in\Natural)(\forall \f_1,\f_2\in\setF)(\forall\shahat_t\in\chi)(\forall \signal{g}\in\partial \check{h}(\f_2))$ let $\check{h}_t$ be the function defined in \eqref{eq:surrogate}. Then,
		$$\check{h}_t(\f_1)-\check{h}_t(\f_2)\geq \signal{g}^\top(\f_1-\f_2)+\frac{c_2}{2}\|\f_1-\f_2\|^2,$$ 
		where $c_2>0$ is a constant.
		\item \textup{(Uniform boundedness)} $(\forall t\in\Natural)(\forall\shahat_t\in\chi)(\forall \f\in\setF)$ let $l^1_t$, $\hat{l}^1_t$, $g_2$ and $\hat{l}_t$ be the functions defined in \eqref{eq:l_1_t}, \eqref{eq:l_hat_1_t}, \eqref{eq:g_2} and \eqref{eq:l_hat_t}, respectively. There exist constants $K_3>0,$ $K_2>0$ and $K_1>0$ s.t. 
		$$ |l^1_t(\f)|\leq K_3,~\|\nabla l^1_t(\f)\|\leq K_3,~\|\nabla^2 l^1_t(\f)\|\leq K_3, $$
		$$ |\hat{l}^1_t(\f)|\leq K_3,~\|\nabla \hat{l}^1_t(\f)\|\leq K_3,~\|\nabla^2 \hat{l}^1_t(\f)\|\leq K_3, $$
		$$|g_2(\f)|\leq K_2,$$ 
		$$ (\forall \di\in\real^P~s.t.~\f+\di\in\setF)~|g_2'(\f;\di)|\leq K_2\|\di\|, $$ 
		$$|\hat{l}_t(\f)|\leq K_1.$$
	\end{enumerate}
\end{lemma}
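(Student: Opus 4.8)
The plan is to first record the compactness facts that feed every part, then dispatch the explicit quadratic functions directly and reserve the Bonnans--Shapiro machinery for the value function $l_t$. By Assumption \ref{ass:iid} the support $\chi$ is compact, so $\|\shahat_t\|_2$ is uniformly bounded; by Assumption \ref{ass:compact_set_f_it} the set $\setF$ is compact, so $\|\f\|_2 \le R$ for some $R>0$; the constraint set $\setC_t^{\alpha}$ is a finite Cartesian product of $\ell_2$-balls, hence compact, so $\|\alphas_\tau\|_2$ is bounded; and since the RBF kernel entries lie in $(0,1]$, the norms $\|K_t\|_2$, $\|A_{\alphas_\tau}\|_2$ and $\|A_{\f}\|_2$ are bounded as well. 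For part (iv) applied to $\hat l^1_t$ in \eqref{eq:l_hat_1_t}, I would simply read off its value, its gradient $A_{\alphas_\tau}^\top A_{\alphas_\tau}\f - A_{\alphas_\tau}^\top \shahat_t + \lambda_2\f$, and its constant Hessian $A_{\alphas_\tau}^\top A_{\alphas_\tau} + \lambda_2 I_P$; each is a continuous function of uniformly bounded quantities, so all three are dominated by a common $K_3$. The bounds on $g_2$ follow from $\|\f\|_1 \le \sqrt{P}\|\f\|_2$, giving $|g_2(\f)| \le \lambda_1\sqrt{P}R =: K_2$, while the directional derivative $g_2'(\f;\di)$ exists by Fact \ref{prop:exitence_direc_deriv} and obeys $|g_2'(\f;\di)| \le \lambda_1\|\di\|_1 \le K_2\|\di\|$; finally $|\hat l_t| \le K_3 + K_2 =: K_1$.

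For the continuous differentiability in (i), I would invoke Fact \ref{theo:bonnas} with the inner variable identified as $\alphas_t$ and the parameter as $\f$. The inner objective in \eqref{eq:cost_alpha} is jointly continuous, continuously differentiable in $\f$ with a jointly continuous $\f$-gradient, and strongly convex in $\alphas$ with modulus at least $\lambda_3 > 0$, so its minimizer over the compact convex set $\setC_t^{\alpha}$ is unique for every $\f$. Fact \ref{theo:bonnas} then gives differentiability of $l_t$ through the envelope identity $\nabla l_t(\f) = \nabla_{\f}\bigl[\tfrac12\|\shahat_t - A_{\f}K_t\alphas^{\star}(\f)\|_2^2\bigr]$, and continuity of this gradient follows from the continuity of the unique-minimizer map $\alphas^{\star}(\cdot)$. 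Adding the smooth quadratic $\tfrac12\lambda_2\|\f\|_2^2$ yields continuous differentiability of $l^1_t$ in \eqref{eq:l_1_t}; since $A_{\f}$ is linear in $\f$, the envelope formula together with the compactness bounds of the first paragraph also furnishes the $|l^1_t|$ and $\|\nabla l^1_t\|$ bounds required by (iv).

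Parts (ii) and (iii) are then short. Writing $\hat l_t = \hat l^1_t + \lambda_1\|\cdot\|_1$ from \eqref{eq:l_hat_t}, on the compact set $\setF$ the quadratic part has a gradient bounded by $K_3$ (hence is Lipschitz there), whereas $\lambda_1\|\cdot\|_1$ is globally $\lambda_1\sqrt{P}$-Lipschitz; their sum is Lipschitz with $c_1 = K_3 + \lambda_1\sqrt{P}$, proving (ii). For (iii), the smooth part of $\check h_t$ in \eqref{eq:surrogate} has Hessian $\tfrac1t\sum_{\tau=1}^{t} A_{\alphas_\tau}^\top A_{\alphas_\tau} + \lambda_2 I_P \succeq \lambda_2 I_P$, uniformly in $t$ and in the sample realizations; hence $\check h_t$ is $\lambda_2$-strongly convex, and the claimed inequality is precisely the first-order characterization of strong convexity evaluated at any $\signal g \in \partial \check h_t(\f_2)$, with $c_2 = \lambda_2$.

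The genuinely delicate step is the Hessian bound $\|\nabla^2 l^1_t(\f)\| \le K_3$ in (iv). Fact \ref{theo:bonnas} only certifies $l_t \in C^1$, and the value function of a ball-constrained strongly convex quadratic program is in general only $C^{1,1}$: as $\f$ varies, the active set of $\setC_t^{\alpha}$ can change and induce kinks in $\alphas^{\star}(\cdot)$, so a classical second derivative need not exist everywhere. I would handle this either by restricting to the open region where the constraint is inactive, on which $\alphas^{\star}(\f)$ solves a linear system and is smooth with a Hessian bounded by the same compactness constants, and separately establishing Lipschitz continuity of $\nabla l^1_t$ across the boundary-crossing set, or by reading $\nabla^2 l^1_t$ as a generalized (Clarke) Hessian whose norm is controlled identically. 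Either route suffices, since what the convergence analysis ultimately uses is the Lipschitz continuity of $\nabla l^1_t$ rather than a genuine second derivative.
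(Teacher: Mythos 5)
Your proposal is correct, and for parts (i)--(iii) it travels essentially the paper's road: Fact~\ref{theo:bonnas} applied with $\alphas_t$ as the inner variable and $\f$ as the parameter (uniqueness of the minimizer from the $\lambda_3$-strong convexity over the compact convex set $\setC_t^{\alpha}$) for (i); compactness-driven Lipschitz bounds for (ii); and the decomposition of $\check{h}_t$ into a $\lambda_2$-strongly convex smooth part plus the convex $\ell_1$ term for (iii), with the same modulus $c_2=\lambda_2$ obtained by adding the strong-convexity inequality of the smooth part to the subgradient inequality of $g_2$, exactly as in the paper. Two differences are worth recording. First, in (ii) the paper expands $\f_1^\top A_{\alphas_t}^\top A_{\alphas_t}\f_1-\f_2^\top A_{\alphas_t}^\top A_{\alphas_t}\f_2$ explicitly and assembles a constant $L$ via the triangle inequality, whereas you bound $\|\nabla \hat{l}^1_t\|$ on the convex compact set $\setF$ and add the global $\lambda_1\sqrt{P}$-Lipschitz constant of $g_2$; your route is cleaner and reaches the same conclusion. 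Second --- the substantive divergence --- in (iv) the paper disposes of all the bounds on $l^1_t$ by asserting that $l^1_t$ is ``continuously second order differentiable'' and invoking compactness, but, as you correctly observe, Fact~\ref{theo:bonnas} only certifies $l_t\in C^1$, and the value function of a ball-constrained strongly convex program is in general only $C^{1,1}$: the minimizer map $\alphas^{\star}(\cdot)$ can have kinks where the active set of $\setC_t^{\alpha}$ changes, so a classical Hessian need not exist everywhere. Your repair --- reading $\|\nabla^2 l^1_t(\f)\|\leq K_3$ as a bound on the generalized (Clarke) Hessian, or equivalently as uniform Lipschitz continuity of $\nabla l^1_t$ established piecewise across the boundary-crossing set --- is more careful than the paper's own argument at this point, and you rightly note that the downstream uses of this bound (the equicontinuity of $(h^1_t)_{t=1}^\infty$ and the differentiability of the limit function in Proposition~\ref{theorem:convergence_args}) require only the uniform Lipschitz property of the gradients, not a genuine second derivative. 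In short: same skeleton throughout, with your treatment of (iv) patching a real imprecision in the published proof rather than introducing a gap of its own; the only cosmetic caveat is that your constant for the directional-derivative bound should be enlarged to $\max(\lambda_1\sqrt{P}R,\lambda_1\sqrt{P})$ so that a single $K_2$ serves both displayed inequalities.
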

\begin{proof}
	(i) Consider the problem $$\underset{\alphas\in\setC^{\alpha}}{\textup{min}}\frac{1}{2}\|\shahat_t-A_fK_t\alphas\|^2_2+\frac{1}{2}\lambda_3\|\alphas\|^2_2.$$ Under Assumptions 1 and 2, the solution $\alphas^*$ to the problem exists and is unique, since $\setC^{\alpha}$ is a convex set, and $\shahat_t\in\chi$ and $\f\in\setF$ belong to compact sets. We can apply Fact \ref{theo:bonnas}, which directly gives us the continuous-differentiability of $l_t$. The fact that $l^1_t$ is also continuously differentiable follows since $l^1_t$ is the sum of two continuous and differentiable functions.
	
	(ii) Let $\f_1,\f_2\in\setF$, and notice that $\|\f_1\|^2,\|\f_2\|^2\leq K$ for some constant $K>0$, since $\f_1,\f_2$ are in a compact set. Using the associativity property of matrix multiplication, we can easily verify that
	\begin{align*}
		&\f_1^\top A_{\alphas_t}^\top A_{\alphas_t}\f_1-\f_2^\top A_{\alphas_t}^\top A_{\alphas_t}\f_2\\&=(\f_1-\f_2)^\top A_{\alphas_t}^\top A_{\alphas_t}\f_2\\&+\f_2^\top A_{\alphas_t}^\top A_{\alphas_t}(\f_1-\f_2)+(\f_1-\f_2)^\top A_{\alphas_t}^\top A_{\alphas_t}(\f_1-\f_2).
	\end{align*}
	%$$ \f_1^\top A_{\alphas_t}^\top A_{\alphas_t}\f_1-\f_2^\top A_{\alphas_t}^\top A_{\alphas_t}\f_2=(\f_1-\f_2)^\top A_{\alphas_t}^\top A_{\alphas_t}\f_2+\f_2^\top A_{\alphas_t}^\top A_{\alphas_t}(\f_1-\f_2)+(\f_1-\f_2)^\top A_{\alphas_t}^\top A_{\alphas_t}(\f_1-\f_2).$$
	By the triangle inequality, we have
	\begin{align*}
	&\|\hat{l}^1_t(\f_1)-\hat{l}^1_t(\f_2)\|^2\leq(\|\f_2\|^2\|A_{\alphas_t}^\top A_{\alphas_t}\|^2+\|\shahat_t^\top A_{\alphas_t}\|^2\\&+\frac{1}{2}\|\f_1-\f_2\|^2\|A_{\alphas_t}^\top A_{\alphas_t}\|^2+\frac{\lambda_2}{2})\|\f_1-\f_2\|^2,
	\end{align*}
	%$$ \|\hat{l}^1_t(\f_1)-\hat{l}^1_t(\f_2)\|^2\leq(\|\f_2\|^2\|A_{\alphas_t}^\top A_{\alphas_t}\|^2+\|\shahat_t^\top A_{\alphas_t}\|^2+\frac{1}{2}\|\f_1-\f_2\|^2\|A_{\alphas_t}^\top A_{\alphas_t}\|^2+\frac{\lambda_2}{2})\|\f_1-\f_2\|^2, $$
	where the norm is the operator norm if the argument is a matrix. Upon setting $$ L=\sqrt{2K\|A_{\alphas_t}^\top A_{\alphas_t}\|^2+\|\shahat_t^\top A_{\alphas_t}\|^2+\frac{\lambda_2}{2}}, $$
	and noticing that all terms are bounded independently from $t$, the local Lipschitz-continuity of $\hat{l}_t^1$ is obtained. The local Lipschitz-continuity of $\hat{l}_t$ is verified by observing that the function $g_2$ is clearly locally Lipschitz-continuous in the compact set $\setF$, and by noticing that the sum of locally Lipschitz-continuous functions is also locally Lipschitz-continuous.
	
	(iii) The sum of strongly convex functions is a strongly convex function, so we have to prove that $\hat{l}_t(\f)$ is strongly convex $\forall \shahat_t\in\chi$ to satisfy the statement. $\hat{l}_t^1$ is strongly convex since $(\forall t\in\Natural)(\forall \shahat_t\in\chi)$, its Hessian $\nabla^2\hat{l}_t^1(\f)\succeq \lambda_2 I$, i.e., the lowest eigenvalue of the matrix $\nabla^2\hat{l}_t^1(\f)-\lambda_2I$ is grater than zero. The sum of a convex function and a strongly convex function is a strongly convex function. To show this, let $\f_1,\f_2\in\setF$. Since $\hat{l}^1_t$ is strongly convex, we have
	$$ \hat{l}^1_t(\f_1)\geq\hat{l}^1_t(\f_2)+\nabla\hat{l}^1_t(\f_2)^\top(\f_1-\f_2)+\frac{c_2}{2}\|\f_1-\f_2\|^2. $$
	On the other hand, since $g_2$ is convex in $\setF$, then for some $\signal{g}\in\partial g_2(\f_2)$:
	$$ g_2(\f_1)\geq g_2(\f_2)+\signal{g}^\top(\f_1-\f_2).$$
	Adding both inequalities, we have that
	\begin{align*}
	&\hat{l}^1_t(\f_1)+g_2(\f_1)\geq \hat{l}^1_t(\f_2)+g_2(\f_2)+(\nabla\hat{l}_t(\f_2)\\&+\signal{g})^\top(\f_1-\f_2)+\frac{c_2}{2}\|\f_1-\f_2\|^2,
	\end{align*}
	%$$ \hat{l}^1_t(\f_1)+g_2(\f_1)\geq \hat{l}^1_t(\f_2)+g_2(\f_2)+(\nabla_{\f_2}\hat{l}_t(\f_2)+\signal{g})^\top(\f_1-\f_2)+\frac{c_2}{2}\|\f_1-\f_2\|^2,$$
	so $\hat{l}_t$ is strongly convex with $\nabla\hat{l}_t(\f_2)+\signal{g}\subseteq\partial\hat{l}_t(\f_2).$

	(iv) $(\forall t\in\Natural)$ the boundedness of $l^1_t$, $\hat{l}^1_t$, $l^{1'}_t$, $\hat{l}^{1'}_t$, $\nabla^2l^1_t$ and $\nabla^2\hat{l}^1_t$ is automatically satisfied since the functions $l^1_t$ and $\hat{l}^1_t$ are continuously second order differentiable with respect to $\f\in\setF~\forall\shahat_t\in\chi$ and the set $\chi$ is bounded by Assumption 1 \cite{razaviyayn2016stochastic}. The boundedness of $g_2$, $g_2'$ and $\hat{l}_t$ follows from the Extreme value Theorem, since $g_2$, $g_2'$, and $(\forall\f\in\setF)~\hat{l}_t(\f)$ are continuous and the set $\setF$ is compact, and therefore they attain a maximum and a minimum in $\setF.$
	%From e.g. \cite{nocedal2006numerical}, we know that the directional derivative of $l^{2}_t$ in the direction $\di$ is given by:
	%$$ l^{2'}_t(\f;\di)=\lambda_1\left(\sum_{i|f_i<0}-d_i+ \sum_{i|f_i>0}d_i+\sum_{i|f_i=0}|d_i|\right),~\forall \di\in\setF.$$ 
	%One can easily verify that $l^{2'}_t$ is included in the compact set
	%$$ S:=\{s\in\real|s\in[-\lambda_1\|\di\|_1,\lambda_1\|\di\|_1]\}, $$
	%since $\di$ is in a compact set, and therefore $l^{2'}_t$ is also bounded.

\end{proof}

We now prove that, for consecutive estimates $\f_t$ and $\f_{t+1}$ of Alg.~\ref{alg:alt_minimization}, we have that $\|\f_t-\f_{t+1}\|\in\mathcal{O}(\frac{1}{t})$, a necessary condition for its convergence.

\begin{lemma}
	\label{lemma:decrease_f_t}
	Let $\f_t$ and $\f_{t+1}$ be the estimates of Alg.~\ref{alg:alt_minimization} after iterations $t$ and $t+1$, respectively. Under Assumptions 1 and 2, we have that 
	$$ \|\f_t-\f_{t+1}\|\in\mathcal{O}\left(\frac{1}{t}\right). $$
\end{lemma}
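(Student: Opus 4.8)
The plan is to exploit the uniform strong convexity of the surrogate $\check{h}_t$ (Lemma~\ref{lemma:regularity}(iii)) together with the exact optimality of the iterates produced by Alg.~\ref{alg:alt_minimization}. Since in this non-descent version $\f_t$ is the true minimizer of $\check{h}_t$, we have $\ve{0}\in\partial\check{h}_t(\f_t)$, so applying the strong convexity inequality with $\signal{g}=\ve{0}$ to the pair $(\f_{t+1},\f_t)$ yields the lower bound
$$\check{h}_t(\f_{t+1})-\check{h}_t(\f_t)\geq \frac{c_2}{2}\|\f_{t+1}-\f_t\|^2.$$
The goal is then to match this with an upper bound on the left-hand side that is \emph{linear} in $\|\f_{t+1}-\f_t\|$ and of order $1/t$; dividing through by $\|\f_{t+1}-\f_t\|$ will deliver the claim.

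First I would introduce the increment function $\psi_t:=\check{h}_t-\check{h}_{t+1}$ and decompose
$$\check{h}_t(\f_{t+1})-\check{h}_t(\f_t)=\underbrace{[\check{h}_{t+1}(\f_{t+1})-\check{h}_{t+1}(\f_t)]}_{\leq 0}+\psi_t(\f_{t+1})-\psi_t(\f_t),$$
where the bracketed term is non-positive because $\f_{t+1}$ minimizes $\check{h}_{t+1}$. This isolates $\psi_t(\f_{t+1})-\psi_t(\f_t)$ as the only quantity still to be controlled.

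Next I would compute $\psi_t$ explicitly. Since the $\ell_1$ term $g_2$ and the coefficient vectors $\alphas_1,\dots,\alphas_t$ are common to both surrogates, only the smooth averages $\check{h}^1_t=\frac{1}{t}\sum_{\tau=1}^{t}\hat{l}^1_\tau$ differ, and a short rearrangement gives
$$\psi_t(\f)=\frac{1}{t+1}\left(\check{h}^1_t(\f)-\hat{l}^1_{t+1}(\f)\right).$$
Invoking the uniform gradient bound $\|\nabla \hat{l}^1_\tau(\f)\|\leq K_3$ from Lemma~\ref{lemma:regularity}(iv), each $\hat{l}^1_\tau$ (and hence its average $\check{h}^1_t$) is $K_3$-Lipschitz on $\setF$, so $\psi_t$ is $\tfrac{2K_3}{t+1}$-Lipschitz, giving $\psi_t(\f_{t+1})-\psi_t(\f_t)\leq \tfrac{2K_3}{t+1}\|\f_{t+1}-\f_t\|$.

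Combining the two bounds yields $\tfrac{c_2}{2}\|\f_{t+1}-\f_t\|^2\leq \tfrac{2K_3}{t+1}\|\f_{t+1}-\f_t\|$, and dividing by $\|\f_{t+1}-\f_t\|$ (the inequality being trivial when the iterates coincide) gives $\|\f_{t+1}-\f_t\|\leq \tfrac{4K_3}{c_2(t+1)}\in\mathcal{O}(1/t)$. The step I expect to be the crux is bounding $\psi_t(\f_{t+1})-\psi_t(\f_t)$ \emph{linearly} in the iterate gap through the Lipschitz continuity of the increment $\psi_t$, rather than estimating the two surrogate values separately by a constant: the cruder estimate only produces $\mathcal{O}(1/\sqrt{t})$, whereas pairing the $\mathcal{O}(1/t)$-Lipschitz increment with the quadratic strong-convexity floor is precisely what recovers the extra factor of $\|\f_{t+1}-\f_t\|$ and hence the sharper $\mathcal{O}(1/t)$ rate.
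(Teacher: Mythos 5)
Your proposal is correct and follows essentially the same route as the paper's proof: the strong-convexity floor $\frac{c_2}{2}\|\f_{t+1}-\f_t\|^2$ from Lemma~\ref{lemma:regularity}(iii), the add-and-subtract decomposition that discards $\check{h}_{t+1}(\f_{t+1})-\check{h}_{t+1}(\f_t)\leq 0$ by minimality of $\f_{t+1}$ (your $\psi_t$ is exactly the paper's rearranged telescoping terms, and your formula $\psi_t=\frac{1}{t+1}(\check{h}^1_t-\hat{l}^1_{t+1})$ is the same $\mathcal{O}(1/t)$ increment), and a Lipschitz bound linear in the iterate gap, yielding the identical conclusion. The only cosmetic difference is that you exploit the cancellation of $g_2$ and bound the increment via the gradient bound $K_3$ of Lemma~\ref{lemma:regularity}(iv), whereas the paper invokes the Lipschitz constant $c_1$ of the full $\hat{l}_\tau$ from Lemma~\ref{lemma:regularity}(ii); both are valid and give the same rate.
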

\begin{proof}
	
	From the definition of convex functions, we have $(\forall t\in\Natural)(\forall \f_1,\f_2 \in \setF)(\forall \signal{g}\in\partial \check{h}_t(\f_2))$: 

	$$\check{h}_t(\f_1)\geq \left \langle  \signal{g},\f_1-\f_2\right \rangle+\check{h}_t(\f_2).$$

%	$$ ~\left \langle  \signal{g},\f-\f_t\right \rangle\geq \check{h}(\f)-\check{h}(\f_t)\geq 0.$$
	%If $\f_t\in\mathrm{int}(\setF),$ the first order optimality condition implies that $(\forall \di\in\real^P)~\check{h}^'(\f,\di)\geq 0.$ 
	%Since $\check{h}_t$ is strongly convex from Lemma \ref{lemma:regularity} iii), it follows that 
	Note that $(\forall t\in\Natural)~\f_t$ is the minimizer of $\check{h}_t$ over the set $\setF$. Since $(\forall \f_2 \in \setF)~\check{h}_t(\f_2)-\check{h}_t(\f_t)\geq 0,$ we can write:
	$$\left\langle\signal{g},\f_2-\f_t\right \rangle\geq\check{h}_t(\f_2)-\check{h}_t(\f_t)\geq 0.$$
	From the strong convexity of $\check{h}_t$ we have that:
	\begin{equation}
	\label{eq:strong_convexity_h}
		\check{h}_t(\f_{t+1})-\check{h}_t(\f_{t})\geq \frac{c_2}{2}\|\f_{t+1}-\f_t\|^2.
	\end{equation}
	On the other hand,
	%\begin{subequations}
	\begin{align*}
			&\check{h}_{t}(\f_{t+1})-\check{h}_t(\f_t)\\&=\check{h}_{t}(\f_{t+1})+\check{h}_{t+1}(\f_{t+1})-\check{h}_{t+1}(\f_{t+1})\\&+\check{h}_{t+1}(\f_{t})-\check{h}_{t+1}(\f_{t})-\check{h}_{t}(\f_{t})\\
			&\overset{(a)}{\leq}\check{h}_{t}(\f_{t+1})-\check{h}_{t+1}(\f_{t+1})+\check{h}_{t+1}(\f_{t})-\check{h}_{t}(\f_{t})\\
			&= \frac{1}{t}\sum_{\tau=1}^{t}\hat{l}_{\tau}(\f_{t+1})-\frac{1}{t+1}\sum_{\tau=1}^{t+1}\hat{l}_{\tau}(\f_{t+1})\\&+\frac{1}{t+1}\sum_{\tau=1}^{t+1}\hat{l}_{\tau}(\f_{t})-\frac{1}{t}\sum_{\tau=1}^{t}\hat{l}_{\tau}(\f_{t})\\
			&=\frac{1}{t(t+1)}\sum_{\tau=1}^{t}\hat{l}_{\tau}(\f_{t+1})-\frac{1}{t+1}\hat{l}_{t+1}(\f_{t+1})\\&-\frac{1}{t(t+1)}\sum_{\tau=1}^{t}\hat{l}_{\tau}(\f_{t})+\frac{1}{t+1}\hat{l}_{t+1}(\f_{t})\\
			&\leq\frac{1}{t(t+1)}\sum_{\tau=1}^{t}|\hat{l}_{\tau}(\f_{t+1})-\hat{l}_{\tau}(\f_{t})|\\&+\frac{1}{t+1}|\hat{l}_{t+1}(\f_{t+1})-\hat{l}_{t+1}(\f_{t})|
			\overset{(b)}{\leq}\frac{c_1}{t}\|\f_{t+1}-\f_{t}\|,
	\end{align*}
	%\end{subequations}
	where (a) follows from the fact that $\f_{t+1}$ is the minimizer of $\check{h}_{t+1}$, while (b) follows from the Lipschitz-continuity of $\hat{l}_t$ and $\hat{l}_{t+1}$. Combining \eqref{eq:strong_convexity_h} and (b) yields the desired result.
\end{proof}

The next result shows that the sequence of functions $(\check{h}_t)_{t=1}^\infty$ acts asymptotically as a surrogate of $h$. Moreover, we prove the almost sure convergence of $h$.

\begin{lemma}
	\label{lemma:convergence_objective}
	Let $(\forall t\in\Natural)~h_t:\setF\rightarrow\real$ be the function defined in \eqref{eq:h_c_f_b_equals_h_f_b}, let $h$ be its limit when $t\rightarrow\infty$ as in \eqref{eq:h}, and denote by $\check{h}_t:\setF\rightarrow\real$ the surrogate function defined in \eqref{eq:surrogate}. Under Assumptions 1 and 2, the following is true: 
	\begin{enumerate}[label=(\roman*)]
		\item $(\check{h}_t(\f_t))_{t=1}^\infty$ converges a.s.,
		\item $ \underset{t\rightarrow\infty}{\textup{lim}}\!~\left(\check{h}_t(\f_t)-h_t(\f_t)\right)=0$ a.s.,
		\item $ \underset{t\rightarrow\infty}{\textup{lim}}\check{h}_t(\f_t)-h(\f_t)=0$ a.s., and
		\item $(h(\f_t))_{t=1}^\infty$ converges a.s.
	\end{enumerate}
\end{lemma}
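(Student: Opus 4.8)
The plan is to make the scalar process $u_t := \check{h}_t(\f_t)$ the central object and to show that it is a nonnegative quasi-martingale, so that part (i) follows directly from Fact~\ref{theorem:quasi_martingales}; parts (ii)--(iv) are then harvested from the same increment analysis. First I would record that $u_t\geq 0$ (a sum of squared residuals and norms) and that $u_t$ is integrable by Assumption~\ref{ass:compact_set_f_it} together with the uniform boundedness in Lemma~\ref{lemma:regularity}(iv). The entire argument hinges on controlling the conditional drift $\E[u_{t+1}-u_t\mid\setN_t]$.

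To bound the drift I would use the decomposition
\[
u_{t+1}-u_t=\underbrace{\bigl(\check{h}_{t+1}(\f_{t+1})-\check{h}_{t+1}(\f_t)\bigr)}_{\leq\,0}+\bigl(\check{h}_{t+1}(\f_t)-\check{h}_t(\f_t)\bigr),
\]
where the first bracket is nonpositive because $\f_{t+1}$ minimizes $\check{h}_{t+1}$, and the second equals $\tfrac{1}{t+1}\bigl(\hat{l}_{t+1}(\f_t)-\check{h}_t(\f_t)\bigr)$ by the telescoping identity for empirical averages used in Lemma~\ref{lemma:decrease_f_t}. The key structural fact I would exploit is that the $\alphas_{t+1}$ produced by Alg.~\ref{alg:alt_minimization} is the marginal minimizer computed \emph{at} $\f_t$, so the surrogate is tight there, $\hat{l}_{t+1}(\f_t)=\ell_{t+1}(\f_t)$, whence $\E[\hat{l}_{t+1}(\f_t)\mid\setN_t]=h(\f_t)$ since $\f_t$ is $\setN_t$-measurable and $\shahat_{t+1}$ is a fresh draw (Assumption~\ref{ass:iid}). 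Combining this with the majorization $\check{h}_t(\f_t)\geq h_t(\f_t)$ gives
\[
\E[u_{t+1}-u_t\mid\setN_t]\leq\frac{h(\f_t)-h_t(\f_t)}{t+1}\leq\frac{\|h-h_t\|_\infty}{t+1}.
\]
By the Donsker corollary (Fact~\ref{lemma:donsker}), $\E\|h_t-h\|_\infty\in\mathcal{O}(t^{-1/2})$, so $\sum_t\E[(\E[u_{t+1}-u_t\mid\setN_t])^{+}]\in\mathcal{O}(\sum_t t^{-3/2})<\infty$. Fact~\ref{theorem:quasi_martingales} then yields that $u_t$ converges a.s.\ (part (i)) and that $\sum_t|\E[u_{t+1}-u_t\mid\setN_t]|<\infty$ a.s.

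For part (ii) I would keep the gap term rather than discard it: with $g_t:=\check{h}_t(\f_t)-h_t(\f_t)\geq 0$, the same steps give $\E[u_{t+1}-u_t\mid\setN_t]\leq\tfrac{h(\f_t)-h_t(\f_t)}{t+1}-\tfrac{g_t}{t+1}$, so rearranging and summing yields $\sum_t\tfrac{g_t}{t+1}\leq\sum_t\tfrac{\|h-h_t\|_\infty}{t+1}-\sum_t\E[u_{t+1}-u_t\mid\setN_t]$, which is finite a.s.\ (the first sum by Tonelli and the Donsker bound, the second by the quasi-martingale conclusion). Hence $\sum_t g_t/t<\infty$ a.s. To invoke Fact~\ref{lemma:positive_congerging_sum} I still need $|g_{t+1}-g_t|\in\mathcal{O}(1/t)$, which I would obtain by combining $\|\f_{t+1}-\f_t\|\in\mathcal{O}(1/t)$ (Lemma~\ref{lemma:decrease_f_t}) with the local Lipschitz-continuity and uniform boundedness of the summands (Lemma~\ref{lemma:regularity}(ii),(iv)), the residual $\mathcal{O}(1/t)$ coming from appending one term to each empirical average. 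Fact~\ref{lemma:positive_congerging_sum} then forces $g_t\to 0$, which is (ii). Part (iii) follows from $\check{h}_t(\f_t)-h(\f_t)=g_t+\bigl(h_t(\f_t)-h(\f_t)\bigr)$, bounding the last term by $\|h_t-h\|_\infty\to 0$ a.s.\ (Fact~\ref{theorem:Glivenko}); and (iv) is immediate since $h(\f_t)=u_t-\bigl(\check{h}_t(\f_t)-h(\f_t)\bigr)$ with $u_t$ convergent by (i) and the bracket vanishing by (iii).

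I expect the main obstacle to be rigorously justifying the two points where surrogate and true loss are identified: (a) the tightness identity $\hat{l}_{t+1}(\f_t)=\ell_{t+1}(\f_t)$, which relies on the $\alphas$-update being the exact marginal minimizer at $\f_t$ and on reconciling the regularization conventions between $\ell_t$ in \eqref{eq:empirical_cost} and the surrogate in \eqref{eq:surrogate}; and (b) the use of a uniform law of large numbers, since the summands defining $\check{h}_t$ are built from \emph{past iterates} and are therefore not i.i.d.---this forces the analysis through the i.i.d.\ true-loss class (for Facts~\ref{lemma:donsker} and \ref{theorem:Glivenko}) while separately controlling the surrogate-to-loss gap. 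The verification of $|g_{t+1}-g_t|\in\mathcal{O}(1/t)$ is routine but bookkeeping-heavy and will lean on every regularity estimate collected in Lemma~\ref{lemma:regularity}.
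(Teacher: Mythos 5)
Your proposal is correct and follows essentially the same route as the paper's proof: the identical two-term decomposition of $\check{h}_{t+1}(\f_{t+1})-\check{h}_t(\f_t)$, the quasi-martingale argument via Fact~\ref{theorem:quasi_martingales} with the drift bounded by $\|h-h_t\|_\infty/(t+1)$ and Fact~\ref{lemma:donsker}, then Fact~\ref{lemma:positive_congerging_sum} (with Lemma~\ref{lemma:decrease_f_t} and Lemma~\ref{lemma:regularity} supplying $|g_{t+1}-g_t|\in\mathcal{O}(1/t)$) for (ii), and Fact~\ref{theorem:Glivenko} for (iii)--(iv). If anything, you are more careful than the paper: you re-derive inline the summability step the paper outsources to \cite{razaviyayn2016stochastic}, and you correctly flag the tightness identity $\hat{l}_{t+1}(\f_t)=\ell_{t+1}(\f_t)$ and the regularization-convention mismatch, both of which the paper uses silently (and states with minor typos, e.g.\ $\check{h}_{t+1}(\f_t)=\bigl(t\check{h}_t(\f_t)+\hat{l}_{t+1}(\f_t)\bigr)/(t+1)$, not the unweighted average written there).
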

\begin{proof}
	The proof requires the use of the convergence of empirical processes \cite{van2000asymptotic} and of quasi-martingales \cite{fisk1965quasi}. First, we have 
	%\begin{subequations}
		\begin{multline}
		\check{h}_{t+1}(\f_{t+1})-\check{h}_{t}(\f_{t})\\=\check{h}_{t+1}(\f_{t+1})-\check{h}_{t}(\f_{t})+\check{h}_{t+1}(\f_{t})-\check{h}_{t+1}(\f_{t})\\
		\leq\frac{l_{t+1}(\f_t)-\check{h}_t(\f_t)}{t+1} \label{subeq:bound_u_t}
		\end{multline}
	%\end{subequations}		
	after noticing that $ \check{h}_{t+1}(\f_{t})=\frac{l_{t+1}(\f_t)+\check{h}_t(\f_t)}{t+1}, $ 
	$\check{h}_{t+1}(\f_{t+1})-\check{h}_{t+1}(\f_t)\leq 0$ since $\f_{t+1}$ minimizes $\check{h}_{t+1}$, and $h_t(\f_t)-\check{h}_t(\f_t)\leq 0$, since $\check{h}_t$ upperbounds the empirical cost $h_t$. 
	
	Let $\mathcal{N}_t$ be the filtration of the past information of $\check{h}_t$, and let $\sigma(X_1,...,X_t)$ be the $\sigma$-algebra generated by the random variables $X_1,...,X_t$. The filtration of $\check{h}_t$ up to $t$ is defined as $ \mathcal{N}_t=\sigma(\check{h}_1,...,\check{h}_t) $. Taking the expectation with respect to the filtration $\mathcal{N}_t$, one can write
	%$$ \mathbb{E}\left [ \check{h}_{t+1}(\f_{t+1})-\check{h}_t(\f_{t})|\mathcal{N}_t \right ] \leq\frac{\|h-h_t\|_\infty}{t+1}, $$
	\begin{align*}
		&\mathbb{E}\left [ \check{h}_{t+1}(\f_{t+1})-\check{h}_t(\f_{t})|\mathcal{N}_t \right ] \leq\E\left[\frac{l_{t+1}(\f_t)-\check{h}_t(\f_t)}{t+1}|\setN_t\right]  \\
		&=\frac{h(\f_t)}{t+1}-\frac{\check{h}_t(\f_t)}{t+1}\\
		&=\frac{h(\f_t)-h_t(\f_t)}{t+1}-\frac{\check{h}_t(\f_t)-h_t(\f_t)}{t+1}\\
		&\overset{(a)}{\leq}\frac{h(\f_t)-h_t(\f_t)}{t+1} \\
		&\overset{(b)}{\leq}\frac{\|h-h_t\|_\infty}{t+1},
	\end{align*}
	where (a) is obtained from the fact that $\check{h}_t(\f_t)\geq h_t(\f_t)$ and (b) from the definition of $\|\cdot\|_\infty$.
	
	We need now to prove that $\|h-h_t\|_\infty<\infty$. To this end, we can make use of Fact \ref{lemma:donsker}, a corollary of the Donsker's theorem, which states that, under some necessary conditions, $\E[\sqrt{t}\|h-h_t\|_\infty]<\infty$. Concretely, we can verify that all necessary conditions are fulfilled in our case, namely, that $(\forall t\in\Natural)~\ell_t$ is Lipschitz continuous and bounded by Lemma \ref{lemma:regularity}, that the set $\setF$ is bounded by Assumption \ref{ass:compact_set_f_it}, and that $\E_\shahat[\ell_t(\f)^2]$ exists and is uniformly bounded. 
	%The conditions to apply Donsker theorem (see \cite[Chapter 19]{van2000asymptotic}, also \cite[Lemma 7]{mairal2010online}), namely that $l(\cdot,\f)$ is uniformly Lipschitz and bounded, the set $\setF\subset \real^P$ is bounded, and that $\mathbb{E}_{\shahat}[l(\shahat,\f)^2]$ exists and is uniformly bounded, can be verified, obtaining the following bound:
	
	The Donsker's theorem also implies that there exists a constant $K_1>0$ such that
	\begin{equation}
	\label{eq:quasi_martingale}
		\mathbb{E}\left[\mathbb{E}\left [ \check{h}_{t+1}(\f_{t+1})-\check{h}_t(\f_{t})|\mathcal{N}_t \right ]^+\right]\leq\frac{K_1}{t^{3/2}},
	\end{equation}
	where the operator $[\cdot]^+$ represents the projection onto the non-negative orthant. Summing \eqref{eq:quasi_martingale} over $t,$ we obtain
	$$ \sum_{t=1}^{\infty}\mathbb{E}\left[\mathbb{E}\left [ \check{h}_{t+1}(\f_{t+1})-\check{h}_t(\f_{t})|\mathcal{N}_t \right ]^+\right]<\infty. $$
	%, and $K_3>0$ is a constant.
	
	Now by applying Fact \ref{theorem:quasi_martingales} on the convergence of quasi-martingales, we obtain both that
	\begin{equation}
		\label{eq:convergence_quasi_martingale}
		\sum_{t=1}^{\infty}\left|\mathbb{E}\left[ \check{h}_{t+1}(\f_{t+1})-\check{h}_t(\f_{t})|\mathcal{N}_t \right]\right|<\infty~ a.s.,
	\end{equation}
	and that $\check{h}_t(\f_t)$ converges almost surely, obtaining (i).

	Using \eqref{subeq:bound_u_t} and \eqref{eq:convergence_quasi_martingale}, it can be shown \cite[Lemma 1]{razaviyayn2016stochastic} the almost sure convergence of the positive sum
	\begin{equation}
		\label{eq:convergence_sum_h_chech_t_minus_h_t}
		\sum_{t=1}^{\infty}\frac{\check{h}_t(\f_t)-h_t(\f_t)}{t+1}<\infty~a.s.
	\end{equation}
	Using Lemma \ref{lemma:decrease_f_t}, Eq.~\eqref{eq:convergence_sum_h_chech_t_minus_h_t}, and the fact that $\check{h}_t(\f_t)-h_t(\f_t)\geq 0$, the hypotheses of Fact \ref{lemma:positive_congerging_sum} on positive converging sums can be verified and Fact \ref{lemma:positive_congerging_sum} can be applied, obtaining (ii).
	
	In addition, we can use the Glivenko-Cantelli theorem (Fact \ref{theorem:Glivenko}) that determines the asymptotic behavior of the empirical distribution function as the number of i.i.d. observations grows, which gives us $$\underset{t\rightarrow \infty}{\textup{lim}}~\|h_t-h\|_\infty=0~a.s.$$ Therefore,
	$$ \underset{t\rightarrow\infty}{\textup{lim}}h(\f_t)-\check{h}_t(\f_t)\rightarrow 0~a.s., $$
	and $(h(\f_t))_{t=1}^\infty$ converges almost surely, which proves (iii) and (iv).
	
%	From \cite[Theorem 6]{mairal2010online}, we also obtain the almost surely convergence of the infinite sum of the expectation of the filtration of the past information of $\check{h}_t$. 
	
\end{proof}

We are now in position of proving the convergence of Alg.~\ref{alg:alt_minimization} to a stationary point of Problem \eqref{eq:min_empirical_cost}.

\begin{proposition}
	\label{theorem:convergence_args}
	Let $(\f_t)_{t=1}^\infty$ be the sequence of iterates generated by Alg.~\ref{alg:alt_minimization} with parameters $\lambda_1,\lambda_2,\lambda_3>0$ when started with an arbitrary $\f_0\in\setF$. Suppose that Assumptions 1 and 2 are satisfied. Then, the following statements are true:
%	$$ \underset{t\rightarrow\infty}{\textup{lim}}~\underset{\f\in\setF^*}{\textup{inf}}\|\f_t-\f\|=0~a.s.$$
	\begin{enumerate}[label=(\roman*)]
		\item Let $(\forall t\in\Natural)~h^1_t:\setF\rightarrow\real, $ and $l^1_t:\setF\rightarrow\real$ be the functions defined in \eqref{eq:h_1_t} and \eqref{eq:l_1_t}, respectively. The limit function of $h^1_t$ when $t\rightarrow\infty$, defined by 
		$h_1(\f)=\underset{t\rightarrow\infty}{\textup{lim}}h^1_t(\f)=\mathbb{E}_\shahat[l^1_t(\f)],$
		 exists.
		\item Let $(\forall t\in\Natural)~\check{h}^1_t:\setF\rightarrow\real$ be the function defined in \eqref{eq:h_check_1_t}. The sequences of functions $(h^1_{t})_{t=1}^\infty$ and $(\check{h}^1_{t})_{t=1}^\infty$ are equicontinuous.
		\item Let $(\f_{t_j})_{j=1}^\infty$ be a subsequence converging to a point $\f^*\in\setF.$ Then,
			\begin{equation}
		\label{eq:h_c_f_b_equals_h_f_b}
		\check{h}^1(\f^*)=h^1(\f^*).
		\end{equation}
		\item Let $\setF^*$ denote the set of stationary points of the \ac{SLF} Problem \eqref{eq:min_empirical_cost}. Suppose that $\f^*\in\mathrm{int}(\setF),$ then,
		$$ \underset{t\rightarrow\infty}{\textup{lim}}~\underset{\f\in\setF^*}{\textup{inf}}\|\f_t-\f\|=0~a.s.$$%In other words, the iterates $(\f_t)_{t=1}^\infty$ converge to a point $\f^*\in\setF^*$ almost surely.
	\end{enumerate}
	
\end{proposition}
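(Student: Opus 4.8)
\emph{Plan.} I would establish the four claims in order, treating (i)--(iii) as the analytic scaffolding for the convergence statement (iv). For (i), note that by \eqref{eq:cost_alpha} and \eqref{eq:l_1_t} each $l^1_\tau(\f)$ is, for fixed $\f$, a deterministic function of the single i.i.d. sample $\shahat_\tau$ (Assumption~\ref{ass:iid}), and Lemma~\ref{lemma:regularity}(iv) bounds it uniformly, so $\E_\shahat[l^1_t(\f)]$ is finite; hence $h^1_t(\f)=\tfrac1t\sum_{\tau=1}^t l^1_\tau(\f)$ is an empirical mean of i.i.d. integrable terms and the strong law of large numbers supplies the a.s. pointwise limit $h_1(\f)=\E_\shahat[l^1_t(\f)]$ (differentiability of $h_1$, needed later, follows by differentiating under the expectation via Fact~\ref{theo:bonnas}). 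For (ii), equicontinuity can be read off uniform Lipschitz constants: since $\|\nabla l^1_\tau(\f)\|\le K_3$ and $\|\nabla\hat l^1_\tau(\f)\|\le K_3$ for all $\tau$ and all $\f\in\setF$ (Lemma~\ref{lemma:regularity}(iv)), the averages $h^1_t$ and $\check h^1_t$ are $K_3$-Lipschitz for \emph{every} $t$, and a family sharing one Lipschitz constant is equicontinuous.

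For (iii), the idea is a ``touching'' argument at the limit point. First observe that $h_t=h^1_t+g_2$ and $\check h_t=\check h^1_t+g_2$, so the nonsmooth $g_2=\lambda_1\|\cdot\|_1$ cancels and Lemma~\ref{lemma:convergence_objective}(ii) already gives $\check h^1_t(\f_t)-h^1_t(\f_t)\to0$ a.s. Using the equicontinuity from (ii) with the uniform boundedness of Lemma~\ref{lemma:regularity}(iv), I would invoke Arzel\`a--Ascoli (Fact~\ref{theorem:arzela}) to pass to a subsequence along which both $\check h^1_{t_j}$ and $h^1_{t_j}$ converge uniformly to continuous limits $\check h^1$ and $h^1$. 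Then I would write $\check h^1(\f^*)-h^1(\f^*)$ as a telescoping sum across the mismatches between $\check h^1,\check h^1_{t_j}$ and between $h^1,h^1_{t_j}$ evaluated at $\f^*$ and at $\f_{t_j}$: the outer terms vanish by uniform convergence, the terms comparing evaluations at $\f^*$ and $\f_{t_j}$ vanish by equicontinuity together with $\f_{t_j}\to\f^*$, and the central term $\check h^1_{t_j}(\f_{t_j})-h^1_{t_j}(\f_{t_j})$ vanishes by Lemma~\ref{lemma:convergence_objective}(ii), yielding \eqref{eq:h_c_f_b_equals_h_f_b}.

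For (iv), which is the crux, I would exploit that $\check h^1_t$ is a \emph{dominating} surrogate of $h^1_t$. Since $\check h_t$ was constructed to upperbound the empirical cost $h_t$ and the $g_2$ term is common to both, we have $\check h^1_t\ge h^1_t$ pointwise, hence $\check h^1\ge h^1$ in the limit; combined with (iii) this makes $\f^*$ a global minimizer of the nonnegative, differentiable function $\check h^1-h^1$. Because $\f^*\in\mathrm{int}(\setF)$, first-order optimality forces $\nabla\check h^1(\f^*)=\nabla h^1(\f^*)$. Finally, the first-order optimality of each (unconstrained) minimizer $\f_{t_j}$ of the convex $\check h_{t_j}=\check h^1_{t_j}+g_2$ reads $\nabla\check h^1_{t_j}(\f_{t_j})^\top\di+g_2'(\f_{t_j};\di)\ge0$ for every $\di\in\real^P$; passing to the limit and substituting $\nabla\check h^1(\f^*)=\nabla h^1(\f^*)$ yields $h'(\f^*;\di)=\nabla h^1(\f^*)^\top\di+g_2'(\f^*;\di)\ge0$ for all $\di$, i.e. $\f^*\in\setF^*$ in the sense of Definition~\ref{def:stationary_point}. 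Since every convergent subsequence of $(\f_t)$ thus has its limit in $\setF^*$ and $\setF$ is compact (Assumption~\ref{ass:compact_set_f_it}), the conclusion $\lim_t\inf_{\f\in\setF^*}\|\f_t-\f\|=0$ a.s. follows.

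The hard part will be making the final limit rigorous, because it requires \emph{gradient} convergence $\nabla\check h^1_{t_j}(\f_{t_j})\to\nabla\check h^1(\f^*)$, which uniform convergence of the functions alone does not provide. I would obtain it through a second application of Arzel\`a--Ascoli, now to the sequence $(\nabla\check h^1_{t_j})$, which is uniformly Lipschitz and uniformly bounded because the Hessians $\nabla^2\hat l^1_\tau$ are bounded by $K_3$ (Lemma~\ref{lemma:regularity}(iv)); identifying its uniform limit with $\nabla\check h^1$ (this also certifies that $\check h^1$ is $C^1$) and combining with $\f_{t_j}\to\f^*$ and the continuity and boundedness of the directional derivative $g_2'$ (Lemma~\ref{lemma:regularity}(iv) and Fact~\ref{prop:exitence_direc_deriv}) closes the argument.
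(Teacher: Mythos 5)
Your parts (i)--(iii) track the paper's own proof almost step for step: the strong law of large numbers with the uniform bound $K_3$ for (i), the shared Lipschitz constant extracted from Lemma~\ref{lemma:regularity}(iv) for (ii), and Arzel\`a--Ascoli combined with Lemma~\ref{lemma:convergence_objective}(ii) (after cancelling $g_2$) for (iii), so no comment is needed there. In (iv) you take a genuinely different route at the final step. You pass the \emph{first-order optimality conditions} of the iterates $\f_{t_j}$ to the limit, which obliges you to prove gradient convergence $\nabla\check{h}^1_{t_j}(\f_{t_j})\rightarrow\nabla\check{h}^1(\f^*)$, and you correctly identify this as the hard part and supply it via a second application of Arzel\`a--Ascoli to $(\nabla\check{h}^1_{t_j})_j$, legitimized by the Hessian bound $\|\nabla^2\hat{l}^1_\tau\|\leq K_3$ from Lemma~\ref{lemma:regularity}(iv). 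The paper sidesteps this entirely: it passes the \emph{function-value} minimality $\check{h}^1_{t_j}(\f_{t_j})+g_2(\f_{t_j})\leq\check{h}^1_{t_j}(\f)+g_2(\f)$ to the limit --- which needs only the uniform convergence and equicontinuity already in hand --- concluding that $\f^*$ minimizes $\check{h}^1+g_2$ over $\setF$, and then obtains the directional-derivative inequality by extending $\check{h}_t$ to a convex function $\tilde{h}$ on all of $\real^P$ and invoking Fact~\ref{prop:exitence_direc_deriv} to guarantee that $\check{h}'(\f^*;\di)$ exists despite $\setF$ being compact. The paper's argument is thus more economical; yours is heavier but buys something the paper leaves tacit, namely a certification that $\check{h}^1$ is $C^1$ (identification of the uniform limit of the gradients with $\nabla\check{h}^1$), which the paper implicitly assumes when writing $\nabla\check{h}^1(\f^*)=\nabla h^1(\f^*)$ in the touching argument that you reproduce identically. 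One repair is needed in your limit passage: you invoke ``continuity'' of $\f\mapsto g_2'(\f;\di)$, but the directional derivative of $\lambda_1\|\cdot\|_1$ is \emph{not} continuous in the base point (e.g., $g_2'(1/j;-1)=-\lambda_1$ for all $j$ while $g_2'(0;-1)=\lambda_1$); it is only upper semicontinuous, i.e., $\limsup_{j}g_2'(\f_{t_j};\di)\leq g_2'(\f^*;\di)$. Fortunately that is exactly the inequality direction your argument needs --- taking $\limsup$ in $\nabla\check{h}^1_{t_j}(\f_{t_j})^\top\di+g_2'(\f_{t_j};\di)\geq 0$ still yields $\nabla\check{h}^1(\f^*)^\top\di+g_2'(\f^*;\di)\geq 0$ --- so the proof survives once restated in those terms.
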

\begin{proof}
	The proof uses the Arzelà–Ascoli theorem (Fact \ref{theorem:arzela}), which gives necessary and sufficient conditions to decide whether every sequence of a given family of real-valued continuous functions defined on a compact set has a uniformly convergent subsequence.
	
	(i) From Lemma \ref{lemma:regularity} (iv) we know that $(\forall (t,\f)\in\Natural\times\setF)(\forall \shahat_t\in\chi)~\|l^1_t(\f)\|\leq K_3,$ where $K_3>0$ is a constant. Then, the limit of $h^1_t(\f)$ when $t\rightarrow\infty$ exists and the proof follows from the strong law of large numbers \cite{fristedt2013modern}.
	%$$(\forall\f\in\setF)~h^1(\f):=\underset{t\rightarrow\infty}{\textup{lim}}h^1_t(\f)=\mathbb{E}_\shahat[l^1_t(\f)].$$
	
	(ii) Since $(\forall (t,\f)\in(\Natural\times\setF))(\forall \shahat_t\in\chi)~\|\nabla l^1_t(\f)\|\leq K_3$ by Lemma \ref{lemma:regularity} (iv), by the mean value theorem we have that the sequence of functions $(h^1_{t})_{t=1}^\infty$ is equicontinuous. Also from Lemma \ref{lemma:regularity} (iv), we have that $\|\nabla \hat{l}^1_t(\f)\|\leq K_3$. Then, the sequence of functions $(\check{h}^1_{t})_{t=1}^\infty$ is also equicontinuous, bounded and defined over the compact set $\setF$. 
	
	(iii) Consider a subsequence $(h^1_{t_j})_{j=1}^\infty$. By restricting to this subsequence, we have 
	\begin{equation}
	\label{eq:e_l_t_1_f_bar}
	\underset{j\rightarrow\infty}{\textup{lim}}h^1_{t_j}(\f_{t_j})=\mathbb{E}_\shahat[l^1_t(\f^*)].
	\end{equation}
	Applying the Arzelà–Ascoli theorem implies that, by restricting to a subsequence $(\check{h}^1_{t_j})_{j=1}^\infty$, there exists a uniformly continuous function $\check{h}^1(\f)$ such that 
	\begin{equation}
	\label{eq:lim_h1_tj}
	(\forall\f\in\setF)~\underset{j\rightarrow\infty}{\textup{lim}}\check{h}^1_{t_j}(\f)=\check{h}^1(\f),
	\end{equation}
	and therefore
	\begin{equation}
	\label{eq:lim_h1_tj_f_tj}
	\underset{j\rightarrow\infty}{\textup{lim}}\check{h}^1_{t_j}(\f_{t_j})=\check{h}^1(\f^*).
	\end{equation}	
	We know by definition that $(\forall \f\in\setF)~\check{h}^1_{t_j}(\f)\geq h_{t_j}^1(\f).$ %Since the sequence $(\check{h}^1_{t_j})_{j=1}^\infty$ is equicontinuous. 
	Letting $j\rightarrow\infty$, we obtain
	\begin{equation}
	\label{eq:h_check_geq_h}
	(\forall\f\in\setF)~\check{h}^1(\f)\geq h^1(\f).
	\end{equation}
	Using Lemma \ref{lemma:convergence_objective} (ii), \eqref{eq:e_l_t_1_f_bar} and \eqref{eq:lim_h1_tj_f_tj}, Eq.~\eqref{eq:h_c_f_b_equals_h_f_b} yields.

	(iv) Define the function 
	$u:\setF\rightarrow\real: \f\mapsto \check{h}^1(\f)-h^1(\f).$
	From \eqref{eq:h_check_geq_h}, we know that $(\forall\f\in\setF)~u(\f)\geq0,$ and $u$ attains a minimum at $\f^*$ due to \eqref{eq:h_c_f_b_equals_h_f_b}. Since $\f^*\in\mathrm{int}(\setF),$ 
	%the
	%It follows that the function $\check{h}^1(\f)-h^1(\f)$ takes its minimum value at the point $\bar{\f}$ over the set $\setF$. 
	the first order optimality condition in $u(\f^*)$ implies that $\nabla u(\f^*)=\nabla\check{h}^1(\f^*)-\nabla h^1(\f^*)=0,$ or equivalently 
	\begin{equation}
		\label{eq:nabla_eq_nabla}
		\nabla\check{h}^1(\f^*)=\nabla h^1(\f^*).
	\end{equation} 
	Using the updates of Alg.~\ref{alg:alt_minimization}, we have 
	$$ (\forall\f\in\setF)~\check{h}_{t_j}(\f_{t_j})=\check{h}^1_{t_j}(\f_{t_j})+g_2(\f_{t_j})\leq\check{h}^1_{t_j}(\f)+g_2(\f). $$
	Letting $j\rightarrow\infty$, and using \eqref{eq:lim_h1_tj_f_tj} and the fact that the sequence $(\check{h}^1_{t_j})_{j=1}^\infty$ is equicontinuous, it yields
	\begin{equation}
		\label{eq:h_f_bar_leq_h}
		(\forall\f\in\setF)~\check{h}^1(\f^*)+g_2(\f^*)\leq\check{h}^1(\f)+g_2(\f).
	\end{equation}
	We have to show the existence of the directional derivative $(\forall\di\in\real^P)~\check{h}'(\f;\di),$ which is not guaranteed because the set $\setF$ is compact by Assumption \ref{ass:compact_set_f_it}. To obtain its existence, let us first define a new function 
	$$\tilde{h}:\real^P\rightarrow\real:~\f\mapsto \underset{t\rightarrow\infty}{\limit}~\check{h}_t(\f)~a.s,$$ 
	which, provided that the limit exists, it is a continuous and convex function. From Fact \ref{prop:exitence_direc_deriv}, we have that for any $\tilde{\f}\in\mathrm{int}(\mathrm{dom}~\tilde{h})$, the directional derivative $(\forall \di\in\real^P)~\tilde{h}'(\tilde{\f};\di)$ is a real number. Note that, since $\setF\subseteq \mathrm{int}(\mathrm{dom}~\tilde{h}),$ the function $\check{h}:\setF\rightarrow\real$ is the restriction of the function $\tilde{h}$ over the compact set $\setF$. Therefore, we conclude that $(\forall\f\in\setF)(\forall\di\in\real^P)~\check{h}'(\f;\di)$ exists. Equation \eqref{eq:h_f_bar_leq_h} implies that $\f^*$ is a minimizer of $\check{h}$, and, combining this fact with the existence of its directional derivative, we have
	$$ (\forall\di\in\real^P)~\left \langle \nabla\check{h}^1(\f^*),\di \right \rangle +g_2'(\f^*;\di)\geq 0.$$
	Combining this result with \eqref{eq:nabla_eq_nabla} and given the fact that $h=h_1+g_2$, we obtain
	$$ (\forall\di\in\real^P)~h'(\f^*;\di)\geq0, $$
	which means that $\f^*$ is a stationary point of $h$.
	
	%Clearly, since $\setF^'$ is convex and open, then $\partial \tilde{h}(\f)$ is convex, compact and nonempty, and its directional derivative exists $\forall\di\in\real^P$ \cite{hiriart2013convex}. Since $\setF\subset\setF^',$ we conclude that $(\forall\di\in\real^P),h^{'}(\f;\di)$ also exists.
	%The directional derivative of $h_2$ exists due to the bounded convergence theorem \cite{fristedt2013modern}. 
	
\end{proof}

The remaining step in the convergence analysis consists in showing that the online algorithm also converges in the arguments to a stationary point of Problem \eqref{eq:min_empirical_cost}, which represents our most important result. We formally state this fact in the following theorem.

\begin{theorem}
	\label{theorem:convergence_args_alg1}
	Let $(\f_t)_{t=1}^\infty$ be the sequence of iterates generated by Alg.~\ref{alg:online} with parameters $\lambda_1,\lambda_2,\lambda_3>0$ when started with an arbitrary $\f^{(0)}_1\in\setF$. Suppose that Assumptions 1 and 2 are satisfied. Then, the following statements are true:
%	Let $\setF^*$ denote the set of stationary points of the \ac{SLF} problem \eqref{eq:min_empirical_cost}. Suppose that Assumptions 1 and 2 are satisfied. Then, the iterates generated by Alg.~\ref{alg:online} converge to $\setF^*$ almost surely, i.e. 
%	$$ \underset{t\rightarrow\infty}{\textup{lim}}~\underset{\f\in\setF^*}{\textup{inf}}\|\f_t-\f\|=0~a.s.$$
	\begin{enumerate}[label=(\roman*)]
		\item $(\forall t\in\Natural)(\forall n\in\Natural)$ let $L_g$ be the Lipschitz-constant of the gradient of $g_1$ defined in \eqref{eq:g_1}, and let $L_k$ be the Lipschitz-constant of the gradient of $k_t$ defined in \eqref{eq:least_squares_problem}. Define $T_1:\setF\rightarrow\setF:\f\mapsto \f-\gamma\nabla_{g_1}(\f)$ and $T_2:\setF\rightarrow\setF:\f\mapsto \textup{soft}_{\lambda_1}(\f).$ If $0<\gamma\leq(1-\epsilon)L_g$ and $0<\mu\leq(1-\epsilon)L_k$, for some $\epsilon>0$, then the composition mapping $T:\setF\rightarrow\setF:\f\mapsto T_2T_1(\f)$ of $T_2$ and $T_1$ is a contraction mapping.
		\item $ (\forall t\in\Natural) $ let $T^{(n)}$ be the mapping $T$ at iteration index $n\in\Natural.$ Define the mapping $F_n:\setF\rightarrow\setF:\f\mapsto T^{(n)}T^{(n-1)}...T^{(1)}(\f)$ as the composition of $T^{(1)},T^{(2)},...,T^{(n)}$. Then, the range of the mapping $F_\infty$ defined as $F_\infty=\underset{n\rightarrow\infty}{\limit}~F_n(\f)$ is a singleton.
		\item $ (\forall t\in\Natural) $ let $(\f_t^{(n)})_{n=1}^\infty$ be the sequence generated by the inner loop of Alg.\ref{alg:online}. Then, $(\f_t^{(n)})_{n=1}^\infty\rightarrow\f_t\in\setF.$
		\item $(\forall t\in\Natural)$ suppose that the stopping criterion of Alg.~\ref{alg:online} is reached after $N\in\Natural$ iterations. Let $(\alphas_1^{(N)},...,\alphas_t^{(N)},\f_t)\in\setC_1^\alphas\times...\times\setC_t^\alphas\times\setF$ be the estimates at time $t$ of Alg.~\ref{alg:online}. Define $\check{h}_t:\setF\rightarrow\real$ as in \eqref{eq:surrogate} with $\alphas_1=\alphas_1^{(N)},...,\alphas_t=\alphas_t^{(N)}.$ Let $(\f_{t_j})_{j=1}^\infty$ be a subsequence converging to a point $\f^*\in\setF,$ and let $\setF^*$ be the set of stationary points of the \ac{SLF} Problem \eqref{eq:min_empirical_cost}. Suppose that $\f^*\in\mathrm{int}(\setF).$ Then,
		$$ \underset{t\rightarrow\infty}{\textup{lim}}~\underset{\f\in\setF^*}{\textup{inf}}\|\f_t-\f\|=0~a.s.$$%In other words, the iterates $(\f_t)_{t=1}^\infty$ converge to a point $\f^*\in\setF^*$ almost surely.
	\end{enumerate}
\end{theorem}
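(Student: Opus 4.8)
The plan is to reduce statement (iv) to Proposition~\ref{theorem:convergence_args}~(iv), which already certifies convergence to a stationary point of Problem~\eqref{eq:min_empirical_cost} for the alternating-minimization Algorithm~\ref{alg:alt_minimization}. The bridge between the two algorithms is supplied by parts (i)--(iii) of the present theorem, whose role is to show that the inner loop of the descent-type Algorithm~\ref{alg:online} produces, in the limit, exactly the quantities that Algorithm~\ref{alg:alt_minimization} obtains by exact subproblem minimization. Once that identification is in place, the outer sequences $(\f_t)_{t=1}^\infty$ of the two algorithms coincide and the conclusion transfers verbatim.

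First I would invoke (i)--(iii). Part~(i) gives that each one-step $\f$-map $T=T_2T_1$ is a contraction whose modulus is bounded away from $1$ uniformly in the iteration index $n$, because $\gamma$ is confined to the fixed admissible interval specified in Algorithm~\ref{alg:online}. Part~(ii) then guarantees that the composition $F_n=T^{(n)}\cdots T^{(1)}$ of these time-varying contractions has a singleton range, and part~(iii) identifies the inner-loop iterates $(\f_t^{(n)})_{n=1}^\infty$ as converging to that unique point $\f_t\in\setF$. In parallel, the projected-gradient recursion \eqref{eq:projected_gradient_its} is the gradient-projection map of the strongly convex objective $k_t$ in \eqref{eq:least_squares_problem} over the convex set $\setC_t^{\alpha}$; with $\mu$ in its admissible range it is likewise a contraction, so $\alphas_t^{(n)}$ converges to the unique minimizer of \eqref{eq:least_squares_problem}. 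These statements are deterministic along each sample path, so the only randomness remaining is that of the outer sampling.

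Second I would identify the limit. Since the limiting $\f$-map inherits the fixed-point characterization \eqref{eq:fixed_point_eq_f_problem}, its unique fixed point is precisely the minimizer $\check{\f}_t$ of the surrogate $\check{h}_t$ in \eqref{eq:surrogate} evaluated at the converged $\alphas$-values. Hence, as the inner loop runs, the outputs $(\alphas_t^{(N)},\f_t)$ of Algorithm~\ref{alg:online} coincide with the exact subproblem minimizers used at each outer step of Algorithm~\ref{alg:alt_minimization}. Consequently the outer sequence $(\f_t)_{t=1}^\infty$ generated by Algorithm~\ref{alg:online} obeys the same recursion as that of Algorithm~\ref{alg:alt_minimization}, and in particular Lemma~\ref{lemma:decrease_f_t} (the $\mathcal{O}(1/t)$ increment bound) and Lemma~\ref{lemma:convergence_objective} (the surrogate/quasi-martingale argument) apply to it without modification. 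With the two sequences identified, the hypotheses of Proposition~\ref{theorem:convergence_args}~(iv) hold for $(\f_t)_{t=1}^\infty$: the subsequence $(\f_{t_j})_{j=1}^\infty$ converges to $\f^*\in\mathrm{int}(\setF)$, the surrogate is defined as in \eqref{eq:surrogate} with the converged $\alphas$, and $\lambda_1,\lambda_2,\lambda_3>0$; applying that proposition yields $\lim_{t\rightarrow\infty}\inf_{\f\in\setF^*}\|\f_t-\f\|=0$ a.s.

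I expect the main obstacle to be the rigorous passage from the finite-$N$ descent iterates to the exact minimizers of Algorithm~\ref{alg:alt_minimization}. One must argue that stopping the inner loop at a finite $N$ does not corrupt the asymptotic analysis, either by letting $N\rightarrow\infty$ through parts (i)--(iii) or, more sharply, by showing that the residual inner-loop error stays $\mathcal{O}(1/t)$ so that the estimate $\|\f_t-\f_{t+1}\|\in\mathcal{O}(1/t)$ of Lemma~\ref{lemma:decrease_f_t} and the summability used in Lemma~\ref{lemma:convergence_objective} survive the perturbation. The technical crux underpinning the whole reduction is part~(ii): one must verify that the accumulated perturbations from the evolving $\alphas_t^{(n)}$ do not prevent the composition of time-varying contractions from collapsing to a single point, which is what legitimizes treating the descent iterates as exact minimizers in the limit.
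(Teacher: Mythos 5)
Your reduction of (iv) breaks at the identification step. You claim that, once the inner loop has converged, the outputs $(\alphas_t^{(N)},\f_t)$ of Alg.~\ref{alg:online} ``coincide with the exact subproblem minimizers used at each outer step of Algorithm~\ref{alg:alt_minimization},'' so that the two outer sequences obey the same recursion and Proposition~\ref{theorem:convergence_args}(iv) transfers verbatim. This is false: in Alg.~\ref{alg:alt_minimization} the update $\alphas_t$ minimizes $\frac{1}{2}\|\shahat_t-A_{\f_{t-1}}K_t\alphas_t\|_2^2$ over $\setC_t^{\alpha}$ with the \emph{previous} outer iterate $\f_{t-1}$ held fixed, whereas in Alg.~\ref{alg:online} the $\alphas$- and $\f$-steps are interleaved, so even in the limit the $\alphas$-coordinate is marginally optimal with respect to the \emph{limiting} $\f_t$ --- the inner loop approaches a bistable point (Definition~\ref{def:bistable_point}) of the joint cost, not the pair produced by Alg.~\ref{alg:alt_minimization}. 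Consequently the surrogates $\check{h}_t$ of the two algorithms are built from different $\alphas$-vectors, their minimizers differ, and the outer sequences $(\f_t)_{t=1}^\infty$ need not coincide, so nothing ``transfers verbatim.'' The same structural oversight appears when you assert that the recursion \eqref{eq:projected_gradient_its} contracts toward ``the unique minimizer of \eqref{eq:least_squares_problem}'': the matrix $A_{K_t}=A_{\f}K_t$ changes at every inner iteration because $\f$ is updated in alternation, so there is no fixed contraction target for the $\alphas$-iterates.

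The paper's proof avoids the transfer altogether. It shows that the time-$t$ estimates of Alg.~\ref{alg:online} form a bistable point of the joint function $u_t(\alphas_1,\dots,\alphas_t,\f)$: the $\f$-marginal optimality comes from the fixed-point/contraction analysis of parts (i)--(iii), and the $\alphas$-marginal optimality from monotonicity of the projected-gradient iterates (strong convexity of $k_t$), which is also what settles the finite-$N$ issue without sending $N\rightarrow\infty$. Bistability yields that $\f_t$ is the \emph{exact} minimizer of Alg.~\ref{alg:online}'s own surrogate $\check{h}_t$ in \eqref{eq:surrogate} with $\alphas_\tau=\alphas_\tau^{(N)}$; the paper then \emph{re-runs} Lemma~\ref{lemma:convergence_objective} and the reasoning of Proposition~\ref{theorem:convergence_args} for this surrogate, which is legitimate because those arguments use only that $\f_t$ minimizes $\check{h}_t$ together with the uniform regularity of Lemma~\ref{lemma:regularity}, and the latter holds for arbitrary $\alphas_\tau$ in the compact sets $\setC_\tau^{\alpha}$. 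Your closing paragraph correctly flags the finite-$N$ passage as the crux, but the proposal offers no mechanism to close it; the bistability argument is that missing mechanism, and without it (or an explicit $\mathcal{O}(1/t)$ bound on the inner-loop residual, which you mention but do not establish) the reduction to Alg.~\ref{alg:alt_minimization} does not go through.
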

\begin{proof}
	For this proof we exploit concepts of contraction and non-expansive mappings. The outline of the proof is as follows: first, we show that each iteration of Alg.~\ref{alg:online} is the composition of a (firmly) non-expansive mapping and a contraction mapping, which is also a contraction mapping. Second, we show that the iterations of Alg.~\ref{alg:online} can be seen as the composition of infinitely many contraction mappings, and such a mapping sends each point in $\setF$ onto its unique representation $\{\f_t\}$. In other words, the range of the mapping is a singleton. With this we obtain the convergence of Alg.~\ref{alg:online} in the arguments. The final step of the proof consists in showing that $\f_t$ belongs to the set of stationary points $\setF^*$ of Problem \eqref{eq:min_empirical_cost} when $t\rightarrow\infty$.
	
	%For clarity, we fix the index $t$ to an arbitrary value, and we eliminate it from the sub-indexes for this part of the proof. Define the mappings 
	%\begin{equation*}
	%	T_1:\setF\rightarrow\setF:\f\mapsto \f-\gamma\nabla_{g_1}(\f)
	%\end{equation*}
	%and 
	%\begin{equation*}
	%	T_2:\setF\rightarrow\setF:\f\mapsto \textup{soft}_{\lambda_1}(\f).
	%\end{equation*}
	%Let $T:\setF\rightarrow\setF$ be the composition of $T_2$ with $T_1$, i.e., $T(\f)=T_2T_1(\f).$ 
	(i) Note that the $nth$ SLF-iteration of Alg.~\ref{alg:online} in Equation \eqref{eq:its_f_problem} can be expressed as the evaluation of $T$ at $\f^{(n-1)}.$ Also note that $T_1$ represents the famous gradient descent algorithm. Given the fact that $g_1$ is strongly convex and Lipschitz-differentiable with its gradient's Lipschitz constant $L_g=\|A_{\alpha^{(n)}}^\top A_{\alpha^{(n)}}+\lambda_2\|_2$ (here $\|\cdot\|_2$ is the spectral norm of a matrix), and that $\gamma\leq(1-\epsilon)/L_g$ due to algorithmic design with $\epsilon>0$, it can be shown that $T_1$ is a contraction mapping with contraction factor $c\in[0,1-\epsilon]$, as follows: let $\f_1,~\f_2\in\setF,$ then
	\begin{align*}
		\|T_1(\f_1)-T_1(\f_2)\|&=\|\f_1-\gamma\nabla_{g_1}(\f_1)-\f_2+\gamma\nabla_{g_1}(\f_2)\|\\
		&\overset{(a)}{=}\|\f_1-\f_2-\gamma\nabla^2 g_1(\signal{z})(\f_1-\f_2)\|\\
		&=\|(\f_1-\f_2)(I-\gamma\nabla^2 g_1(\signal{z}))\|\\
		&\overset{(b)}{\leq}\|\f_1-\f_2\|(1-L_g\gamma)=c\|\f_1-\f_2\|,		
	\end{align*}
	where $\signal{z}=\eta\f_1+(1-\eta)\f_2$ for some $\eta\in[0,1]$. We used in step (a) the mean value theorem of vector calculus (Fact \ref{theo:mean_value}), and step (b) stems from the fact that $\nabla^2 g_1\succeq L_g I.$
	
	Note that due to Assumption \ref{ass:iid}, the iterations in \eqref{eq:projected_gradient_its} are well behaved in the sense that $(\forall n\in\Natural)(\forall t\in\Natural)~\alphas_t^{(n)}$ belongs to the compact set $\setC_t^{\alpha}$. Following similar arguments as with $L_g$, we obtain the range of $\mu\in[0,(1-\epsilon)/L_k]$, with $L_k$ being the Lipschitz constant of the gradient of $k_t$ from Eq.~\eqref{eq:least_squares_problem} given by $L_k=\|A_\f^\top A_\f+\lambda_3I\|_2$, where $\|\cdot\|_2$ here is the spectral norm of a matrix.
	
	On the other hand, we know from Fact \ref{lemma:non_expansiveness} that the proximal operator of a lower semi-continuous function is a (firmly) non-expansive mapping, given in our case by $T_2$. Let $\f_1'=T_1(\f_1)$ and $\f_2'=T_1(\f_2)$. Note that, from the definition of firmly non-expansive mappings, we have
	\begin{align*}
		\|T_2(\f'_1)-T_2(\f'_2)\|^2&\leq\|T_2(\f'_1)-T_2(\f'_2)\|^2\\&+\|(\Id-T_2)(\f'_1)-(\Id-T_2)(\f'_2)\|^2\\&\leq\|\f'_1-\f'_2\|^2.
	\end{align*}
	Thus, we can write
	\begin{align*}
		\|T_2T_1(\f_1)-T_2T_1(\f_2)\|&=\|T(\f_1)-T(\f_2)\|\\&\leq\|T_1(\f_1)-T_1(\f_2)\|\\&\leq c\|\f_1-\f_2\|,
	\end{align*}
	which follows from the definition of $T$ and from the fact that $T_1$ is a contraction mapping.
	
	(ii) By induction, we obtain that $F_n$ is also a contraction mapping, since
	\begin{equation*}
		\|F_n(\f_1)-F_n(\f_2)\|\leq C\|\f_1-\f_2\|,
	\end{equation*}
	with $C=\prod_{j=1}^{n}c_j\in]0,(1-\epsilon)^n]$. Now, since $F_\infty$ is the composition of $T^{(1)},T^{(2)},...,T^{(n)}$ when $n\rightarrow\infty$, we obtain the following:
	\begin{equation*}
		\|F_\infty(\f_1)-F_\infty(\f_2)\|\leq\underset{n\rightarrow\infty}{\limit}~\prod_{j=1}^{n}c_j ~\|\f_1-\f_2\|=0,
	\end{equation*}
	which means that $F_\infty(\f_1)=F_\infty(\f_2)\forall\f_1,\f_2\in\setF$, and therefore the range of $F_\infty$ is a singleton. 
	
	(iii) Let $\{\f_t\},~\f_t\in\setF$, be the range of $F_\infty$. The sequence generated by $F_\infty$ is a constant sequence given by $(\f_t)_{n=1}^\infty$, and therefore $\|\f^{(n)}-\f_t\|=0$, which proves that $(\forall t\in\Natural)~(\f^{(n)}_t)_{n=1}^\infty\rightarrow \f_t\in\setF$. 
	%the inner loop of Alg.~\ref{alg:online} converges in the arguments, i.e.,
	
	(iv) Consider the function 
	\begin{align*}
		u_t(\alphas_1,...,\alphas_t,\f)&\triangleq\frac{1}{t}\sum_{\tau=1}^t\frac{1}{2}\|\shahat_{\tau}-A_\f K_\tau\alphas_\tau\|^2_2\\&+\lambda_1\|\f\|_1+\frac{1}{2}\lambda_2\|\f\|_2^2.
	\end{align*}
	%$$u_t(\alphas_1,...,\alphas_t,\f)\triangleq\frac{1}{t}\sum_{\tau=1}^t\frac{1}{2}\|\shahat_{\tau}-A_\f K_\tau\alphas_\tau\|^2_2+\lambda_1\|\f\|_1+\frac{1}{2}\lambda_2\|\f\|_2^2.$$ 
	It is easy to see that $(\alphas_1^{(N)},...,\alphas_t^{(N)},\f_t)$ is a bistable point (Definition \ref{def:bistable_point}) of $u_t$, since $u_t(\alphas_1^{(N)},...,\alphas_t^{(N)},\f_t)$ $\leq u_t(\alphas_1^{(N)},...,\alphas_t^{(N)},\f)~\forall\f\in\setF,$ and $u_t(\alphas_1^{(N)},...,\alphas_t^{(N)},\f_t)\leq u_t(\alphas_1,...,\alphas_t,\f_t)~\forall(\alphas_1,...,\alphas_t)\in\setC_1^\alpha\times...\times\setC_t^\alpha$ due to monotonicity\footnote{The monotonicity of the projected gradient descent is ensured since $k_t$ is strongly convex \cite{bubeck2014convex}.} in the iterates of the inner loop of Alg.\ref{alg:online}. Now consider the problem
	\begin{equation}
		\label{eq:problem_l_theo_2}
		\underset{\f}{\textup{minimize}}~\frac{1}{t}\sum_{\tau=1}^t\frac{1}{2}\|\shahat_{\tau}-A_\f K_\tau\alphas_\tau^{(N)}\|^2_2+\lambda_1\|\f\|_1+\frac{1}{2}\lambda_2\|\f\|_2^2.
	\end{equation}
	Since $(\alphas_1^{(N)},...,\alphas_t^{(N)},\f_t)$ is a bistable point of $u_t,$ $\f_t$ is the unique solution to Problem \eqref{eq:problem_l_theo_2}. Then, $\f_t$ is also the solution to the problem of minimizing the surrogate function $\check{h}_t(\f)$ with $\alphas_1=\alphas_1^{(N)},...,\alphas_t=\alphas_t^{(N)}$. We can use Lemma \ref{lemma:convergence_objective} to guarantee the almost sure convergence of $(\check{h}_t(\f))_{t=1}^\infty$ with Alg.~\ref{alg:online},
	and the same reasoning as in Proposition \ref{theorem:convergence_args} holds for Alg.~\ref{alg:online}, thus $(\f_t)_{t=1}^\infty\rightarrow\f^*\in\setF^*.$
	
	%Using Lemma \ref{lemma:convergence_objective} and given the fact that $F_\infty$ is a continuous mapping, we obtain the convergence of Alg.~\ref{alg:online} in the arguments, i.e., $(\hat{\f}_t)_{t=1}^\infty\rightarrow \f^*\in\setF$.
\end{proof}

\section{Numerical Evaluation}
\label{sec:numerical_evaluation}
This section is devoted to the numerical evaluation of our proposed online algorithm for \ac{A2A} path loss maps learning. To this end, we consider two scenarios based on \ac{V2V} communications. The first scenario is based on synthetic data generated from the well-known Madrid scenario \cite{metis_D_6_1}. With the second scenario, we show the algorithm performance with more realistic data generated with the \ac{GEMV2} software \cite{boban2014geometry}. It has been shown \cite{boban2014geometry} that the \ac{GEMV2} model generates accurate path loss datasets for \ac{V2V} communications, so it is a good proxy for the evaluation of our algorithm with a realistic path loss dataset.

\subsection{Evaluation with Synthetic Data}
\label{sec:num_ev_synthetic}

The Madrid scenario \cite{metis_D_6_1} is plotted in Fig.~\ref{fig:madrid_scenario}. The original scenario has a size of $140\times 97$ meters, and we discretize it into a $56\times 39$ map, with each pixel being $2.5\times 2.5$ meters of size, since it roughly represents the size of a car. The map has seven $13\times 13$ buildings, one $13\times 13$ park, and other eight $13 \times 3$ buildings. The rest of the scenario represents roads connecting the different parts of the map. The normalized \ac{SLF} at each location, i.e. the attenuation that a link experiences while crossing that location, is set for buildings at 1, for the park at 0.1, and for road pixels at 0, since the \ac{SLF} of the air is considered to be negligible. Figure \ref{fig:slf_map} shows the yielding normalized \ac{SLF}. Vehicles are only allowed to be at road locations, which means that no measurements inside the buildings and park are acquired. This poses a major challenge to the algorithm, since there are many grouped locations for which no measurements can be acquired. Still, as we will see in this section, our algorithm is able to reconstruct the structures with high accuracy.

To generate a synthetic window function, we use the normalized elliptical model from \cite{hamilton2013propagation} and reproduced in \eqref{eq:elliptical_model}. We set the wavelength to $\eta = 0.1499$m in our simulations. The maximum number of vehicles, which coincides with the total number of road locations, is $P_{\textup{tx}}=744$. The total number of links in the map is given by $T=P(P-1)/2=2383836$, with $P=56\times 39=2184$ total pixels. Because not all locations in the map can be occupied by vehicles, the samples acquired are drawn from a subset of all possible links with cardinality $T_{\textup{tx}}=744(744-1)/2=276396$. This means that the maximum percentage of samples available is $11.59\%$ of the total. We assume that the samples arrive sequentially in $t=1,..,t_{\textup{max}}$, with $t_{\textup{max}}=200$ time steps. At each time instant $t$, $M=120$ i.i.d. new samples are acquired and one outer iteration of Alg.~\ref{alg:online} is run. With this setup, the total number of acquired measurements is $Mt_{\textup{max}}=24000$, which represents at most an $8.68\%$ of $T_{\textup{tx}}$, and, in turn, a mere $1.01\%$ of $T$, this is, of all possible links in the map. Other simulation parameters are $\sigma= 0.0001$, $\lambda_1=0.0004$, $\lambda_2=0.00001$, and $\lambda_3=0.00022$. Table \ref{table:simulation_parameters_2} summarizes the main simulation parameters.
\begin{figure}
	\centering
	
	\begin{subfigure}[b]{.24\textwidth}
		\centering
		\includegraphics[width=1\textwidth]{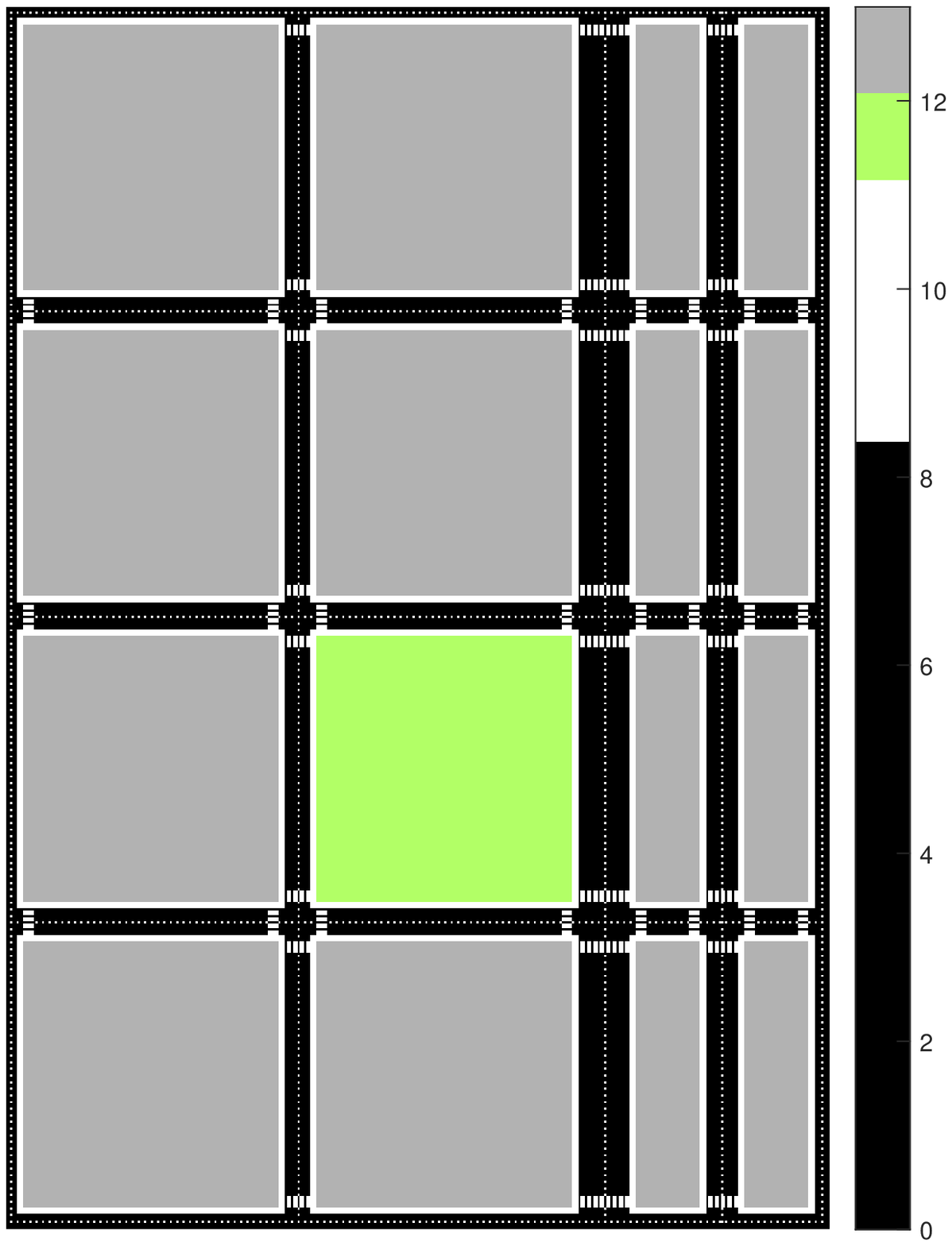}
		\caption{Original Madrid scenario.}
		\label{fig:madrid_scenario}
	\end{subfigure}
	\hfill
	\begin{subfigure}[b]{.24\textwidth}
		\centering
		\includegraphics[width=1\textwidth]{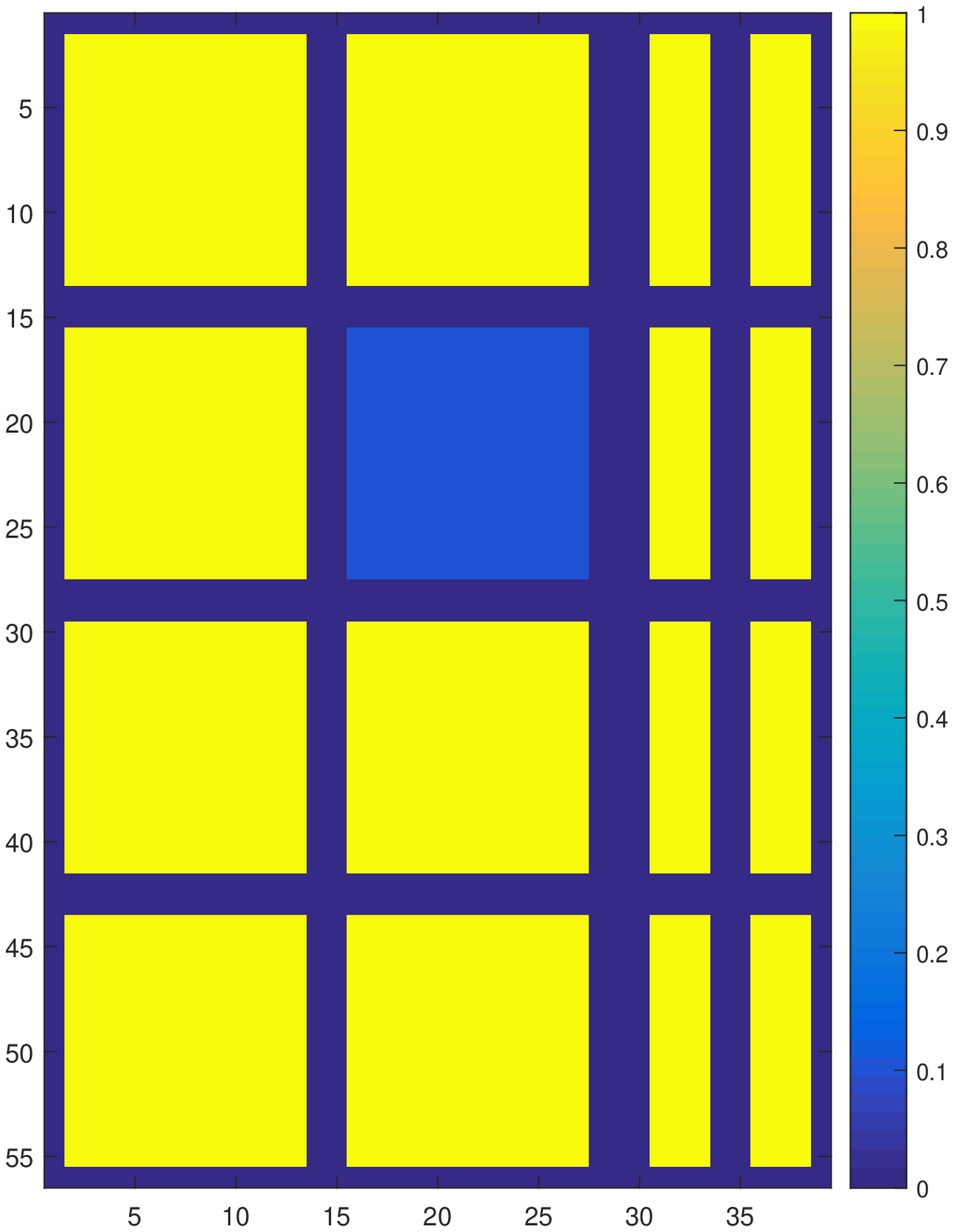}
		\caption{Normalized \ac{SLF} map.}
		\label{fig:slf_map}
	\end{subfigure}
	\label{fig:maps}
	\caption{Madrid scenario (left) and its normalized \ac{SLF} map (right).}
	%	\squeezeup
	%	\squeezeup
\end{figure}
\begin{figure}
	\centering
	
	\begin{subfigure}[b]{.24\textwidth}
		\centering
		\includegraphics[width=1\textwidth]{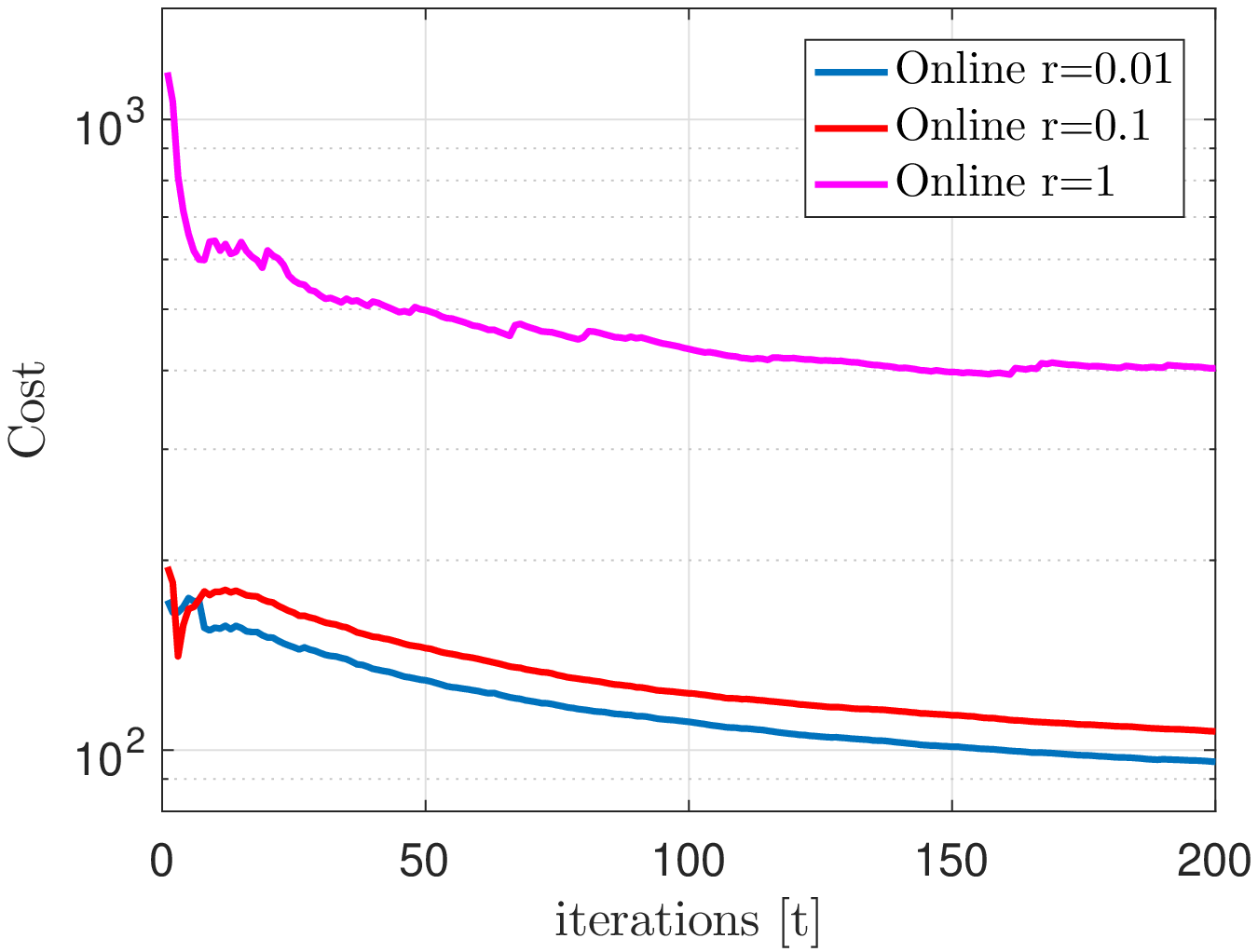}
		\caption{Cost versus $t$.}
		\label{fig:o_vs_it}
	\end{subfigure}
	\hfill
	\begin{subfigure}[b]{.24\textwidth}
		\centering
		\includegraphics[width=1\textwidth]{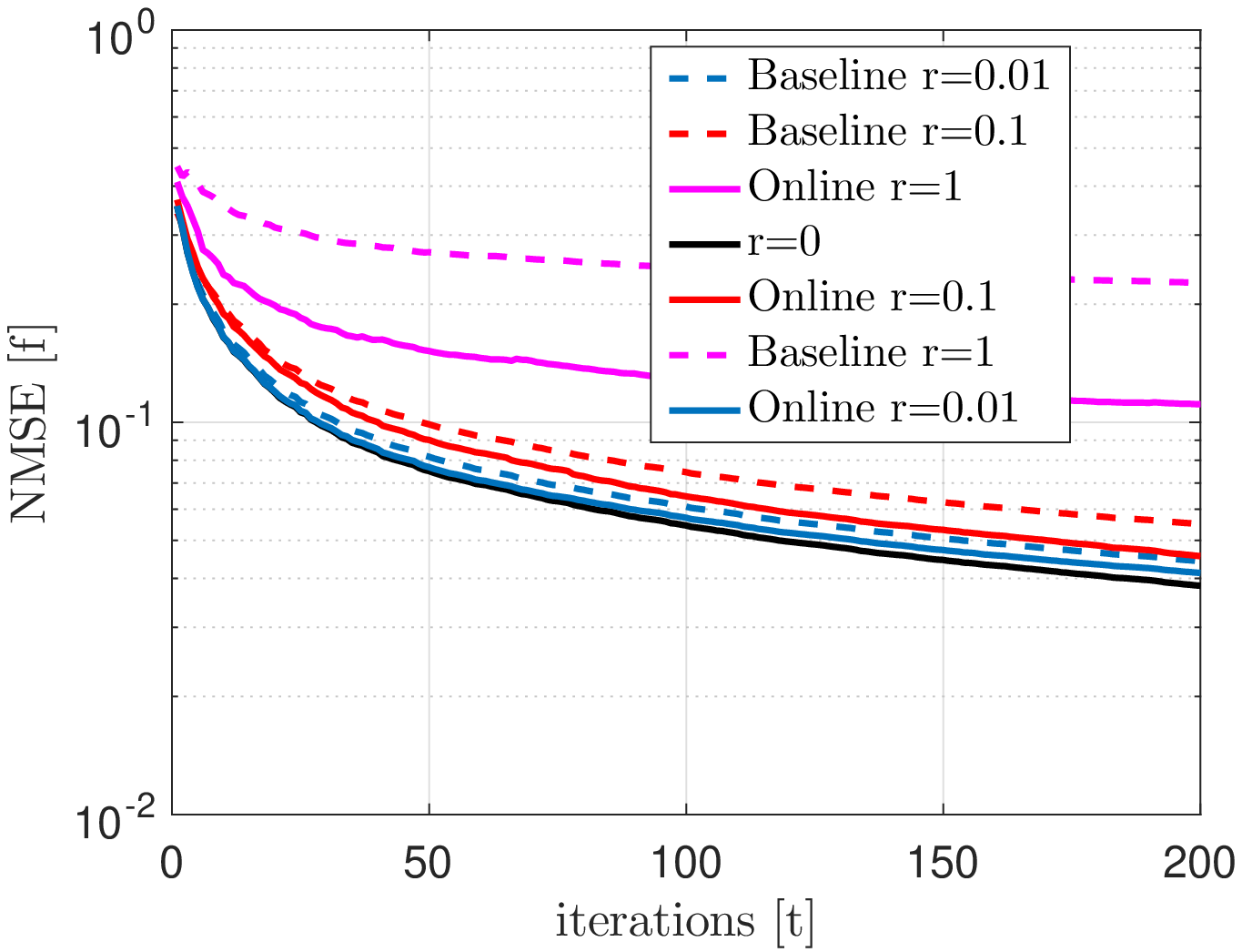}
		\caption{NMSE of $\check{\f}$ vs. $t$.}
		\label{fig:f_vs_it}
	\end{subfigure}
	\hfill
	\begin{subfigure}[b]{.24\textwidth}
		\centering
		\includegraphics[width=1\textwidth]{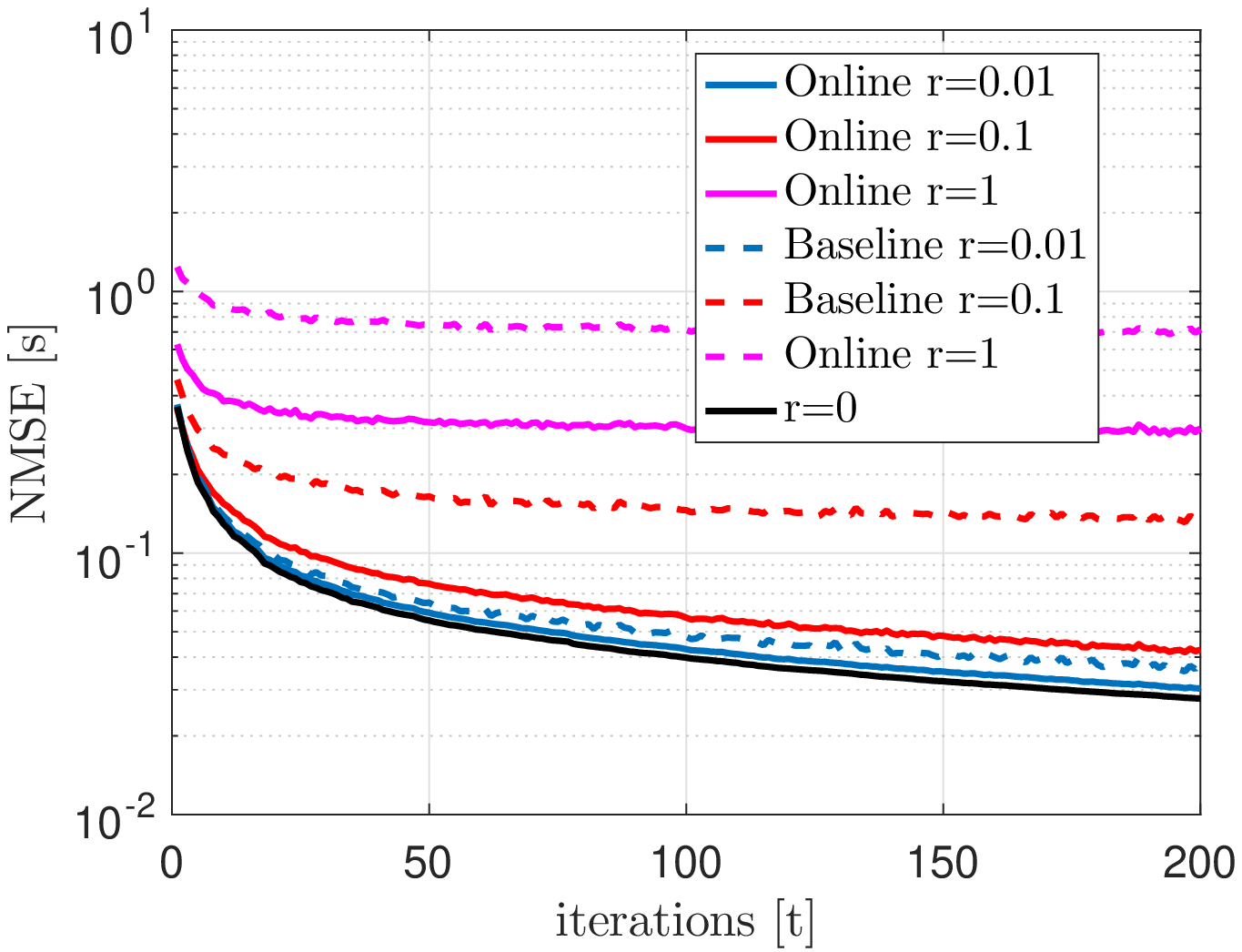}
		\caption{NMSE of $\shahat$ vs. $t$.}
		\label{fig:s_vs_it}
	\end{subfigure}
	\hfill
	\begin{subfigure}[b]{.24\textwidth}
		\centering
		\includegraphics[width=1\textwidth]{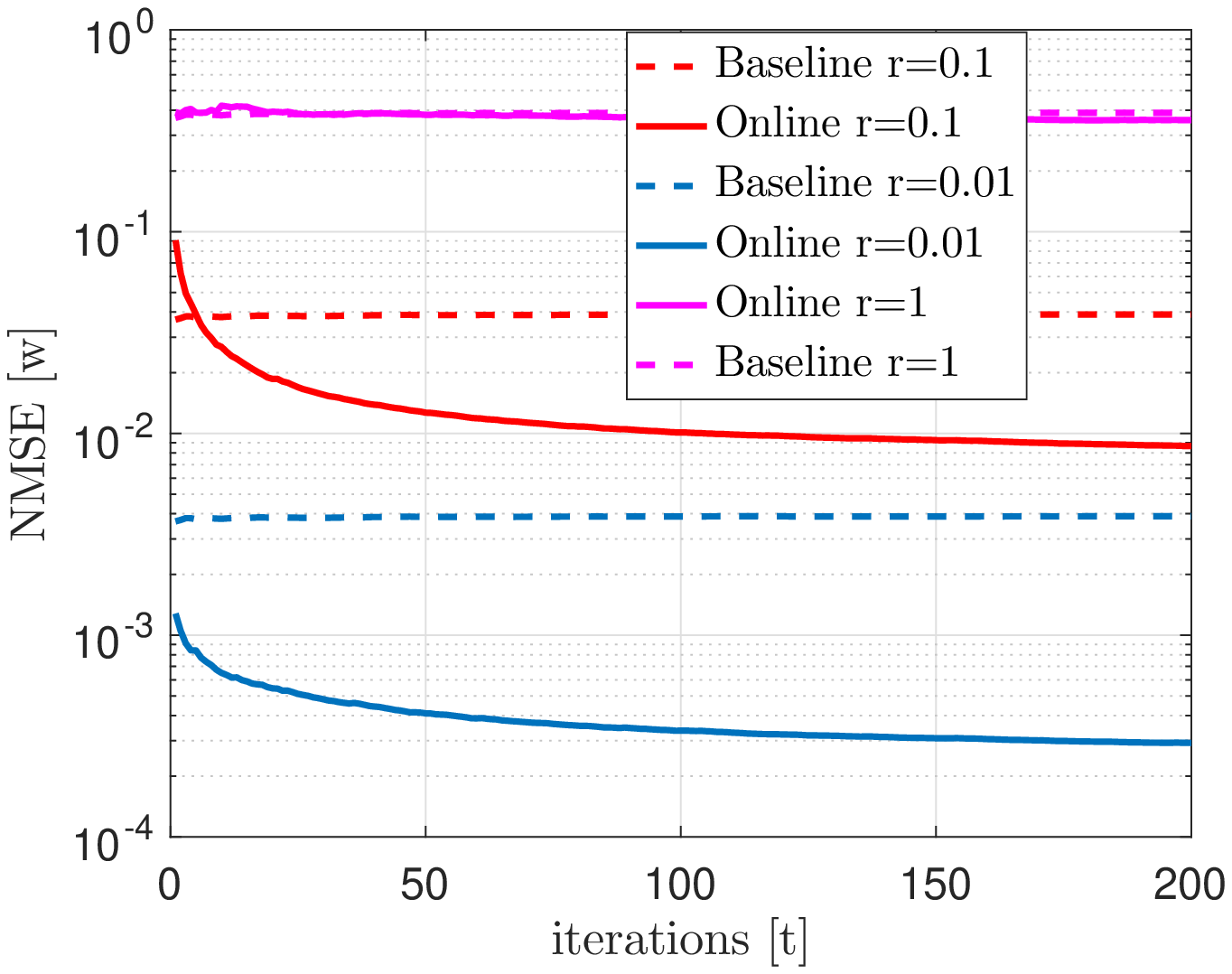}
		\caption{NMSE of $\w$ vs. $t$.}
		\label{fig:w_vs_it}
	\end{subfigure}
	\label{fig:alg_performance}
	\caption{Performance evaluation of the online algorithm and the baseline one for different values of $r$ over iterations $t$.}
	\squeezeup
	\squeezeup
\end{figure}
\begin{figure*}
	\centering
	
	\begin{subfigure}[b]{.3\textwidth}
		\centering
		\includegraphics[width=1\textwidth]{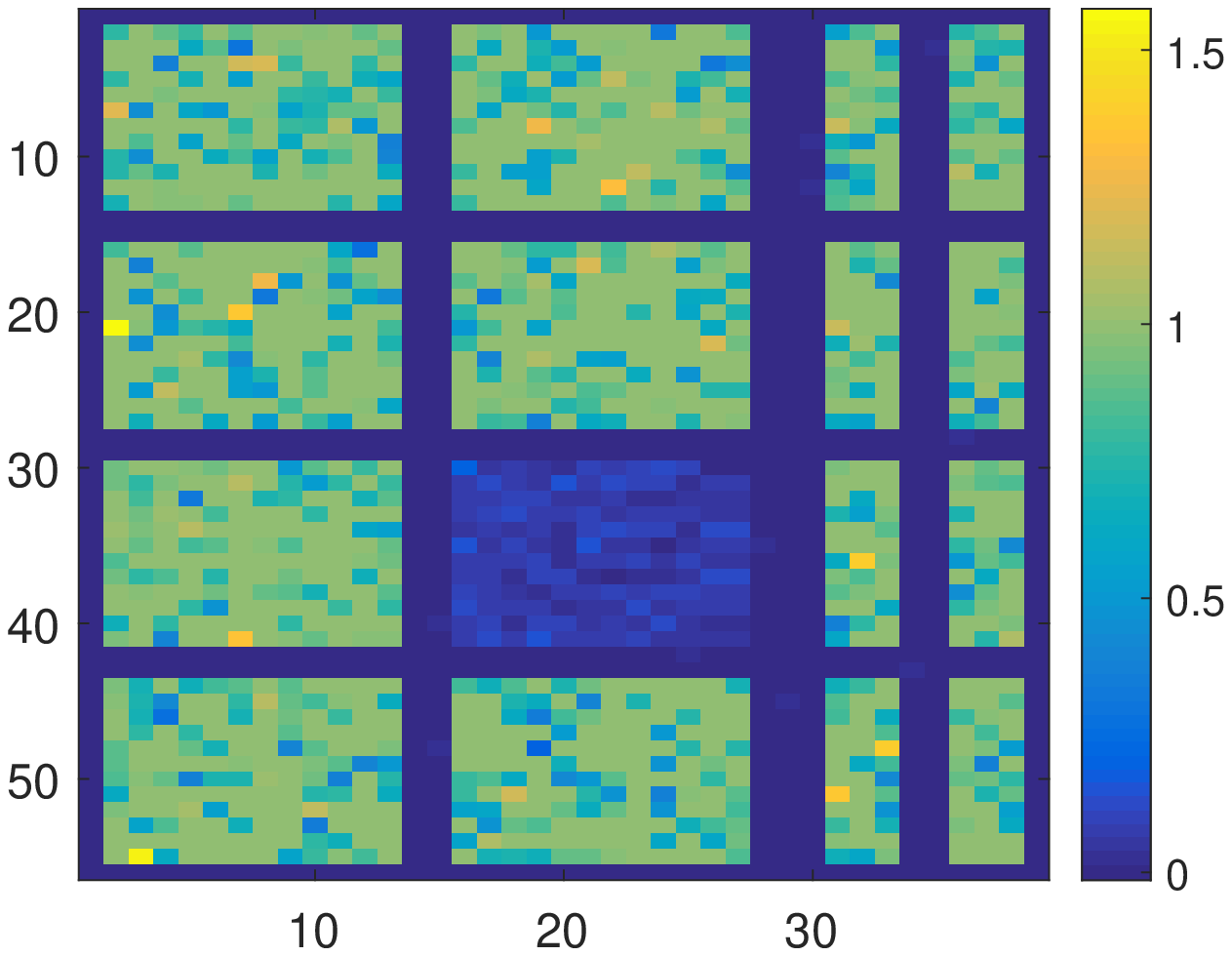}
		\caption{Reconstructed \ac{SLF} for the online algorithm with $r=0.01$.}
		\label{fig:F_online_r0.01}
	\end{subfigure}
	\hfill
	\begin{subfigure}[b]{.3\textwidth}
		\centering
		\includegraphics[width=1\textwidth]{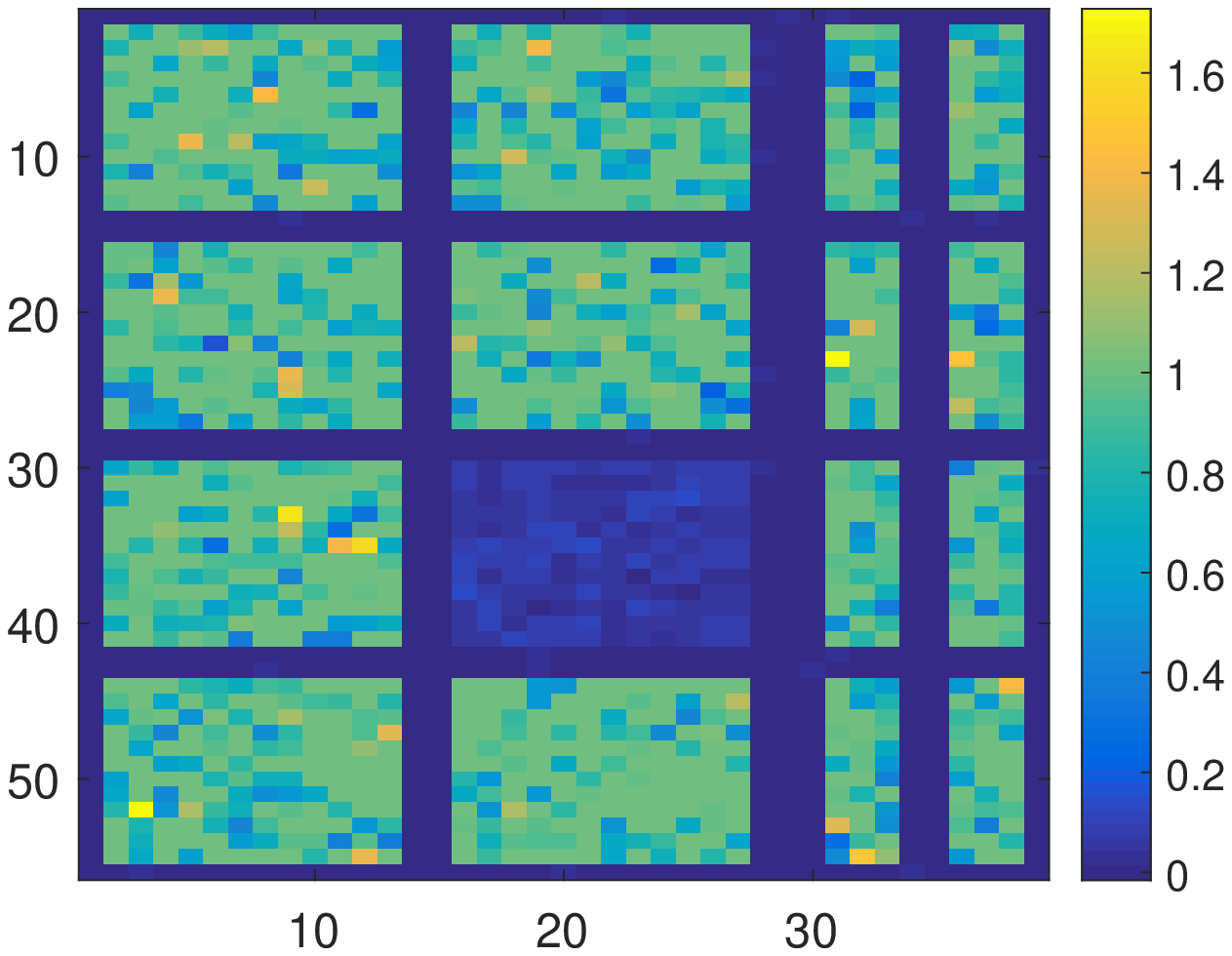}
		\caption{Reconstructed \ac{SLF} for the online algorithm with $r=0.1$.}
		\label{fig:F_online_r0.1}
	\end{subfigure}
	\hfill
	\begin{subfigure}[b]{.3\textwidth}
		\centering
		\includegraphics[width=1\textwidth]{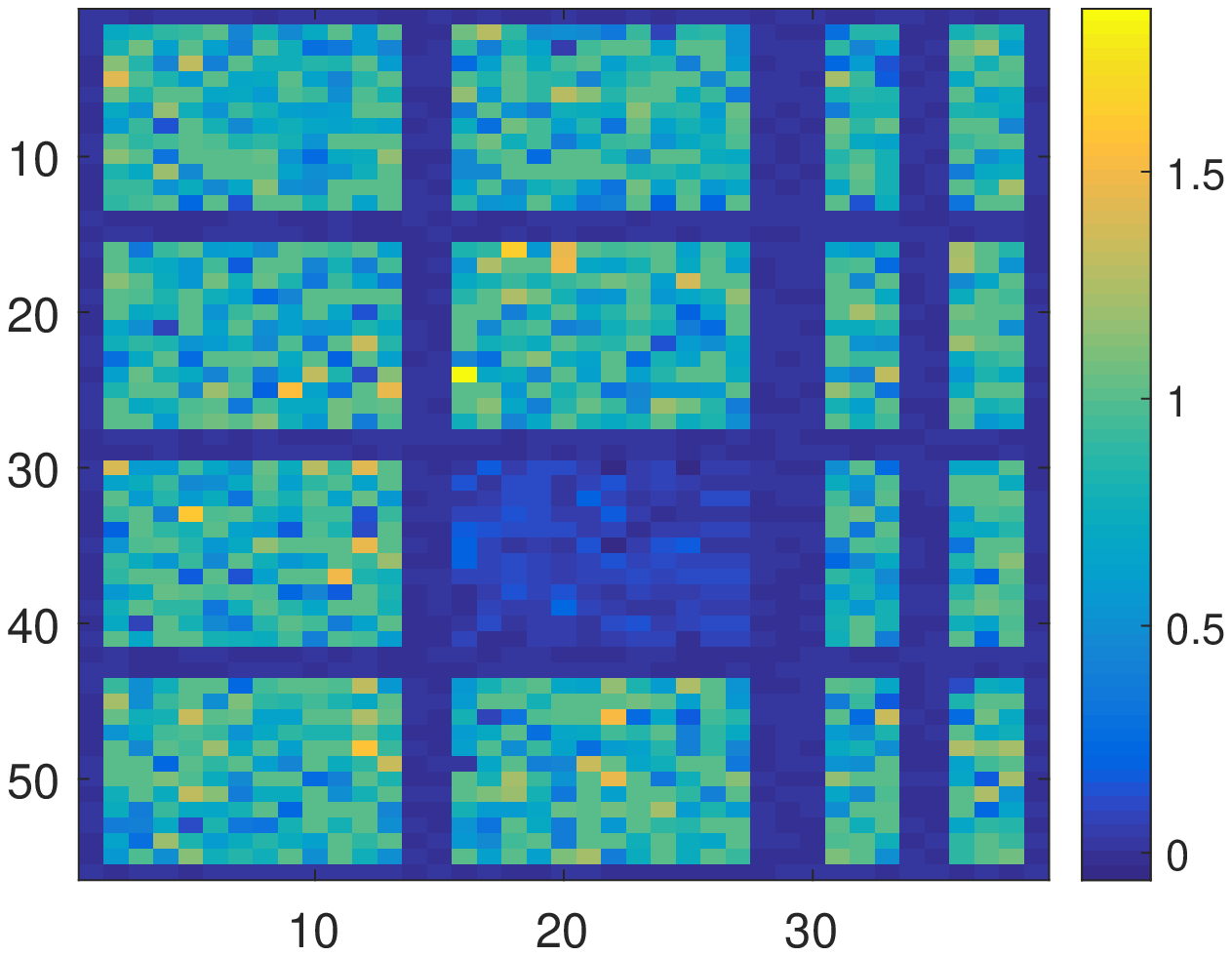}
		\caption{Reconstructed \ac{SLF} for the online algorithm with $r=1$.}
		\label{fig:F_online_r1}
	\end{subfigure}
	\hfill
	\begin{subfigure}[b]{.3\textwidth}
		\centering
		\includegraphics[width=1\textwidth]{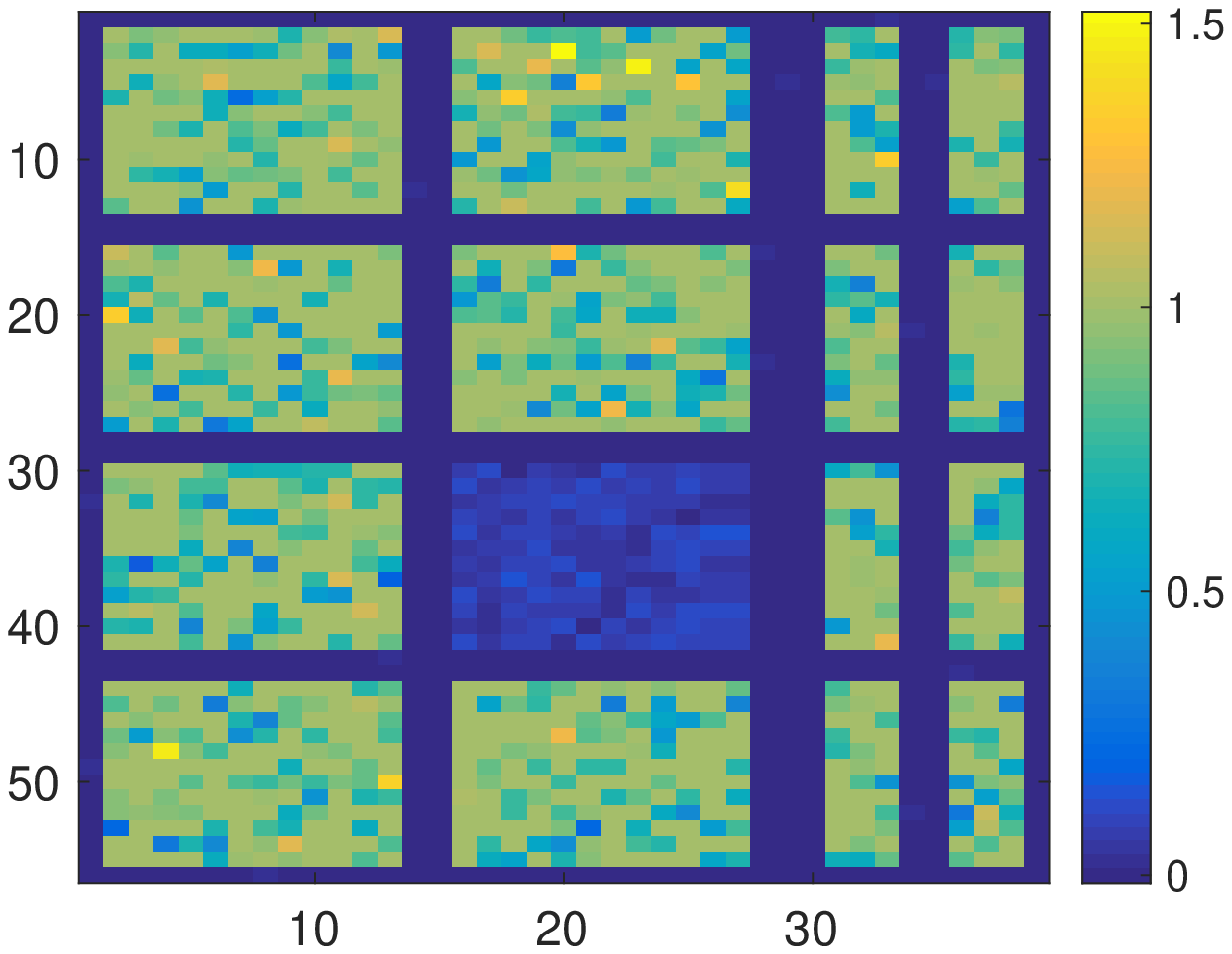}
		\caption{Reconstructed \ac{SLF} for the baseline algorithm with $r=0.01$.}
		\label{fig:F_lee_r0.01}
	\end{subfigure}
	\hfill
	\begin{subfigure}[b]{.3\textwidth}
		\centering
		\includegraphics[width=1\textwidth]{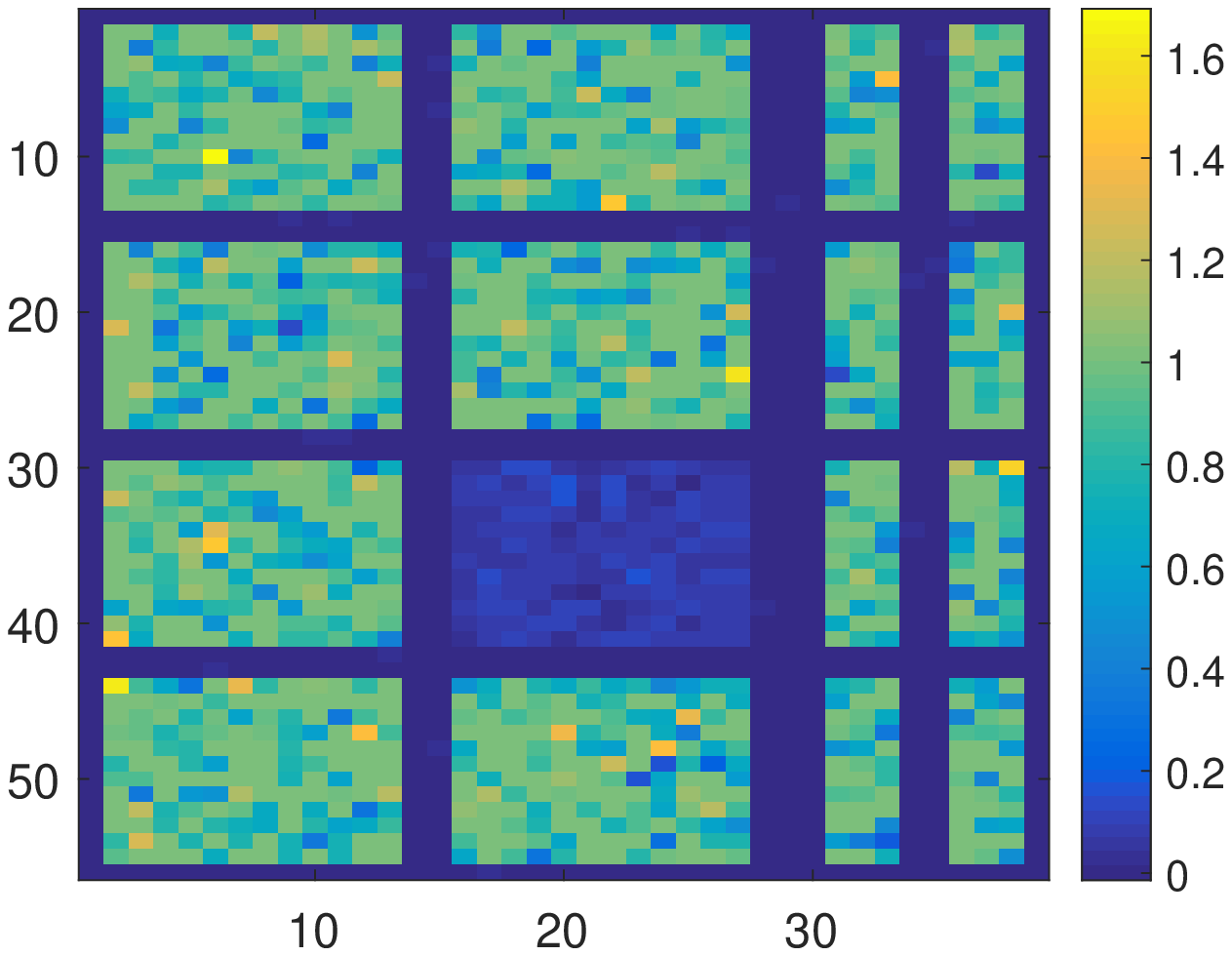}
		\caption{Reconstructed \ac{SLF} for the baseline algorithm with $r=0.1$.}
		\label{fig:F_lee_r0.1}
	\end{subfigure}
	\hfill
	\begin{subfigure}[b]{.3\textwidth}
		\centering
		\includegraphics[width=1\textwidth]{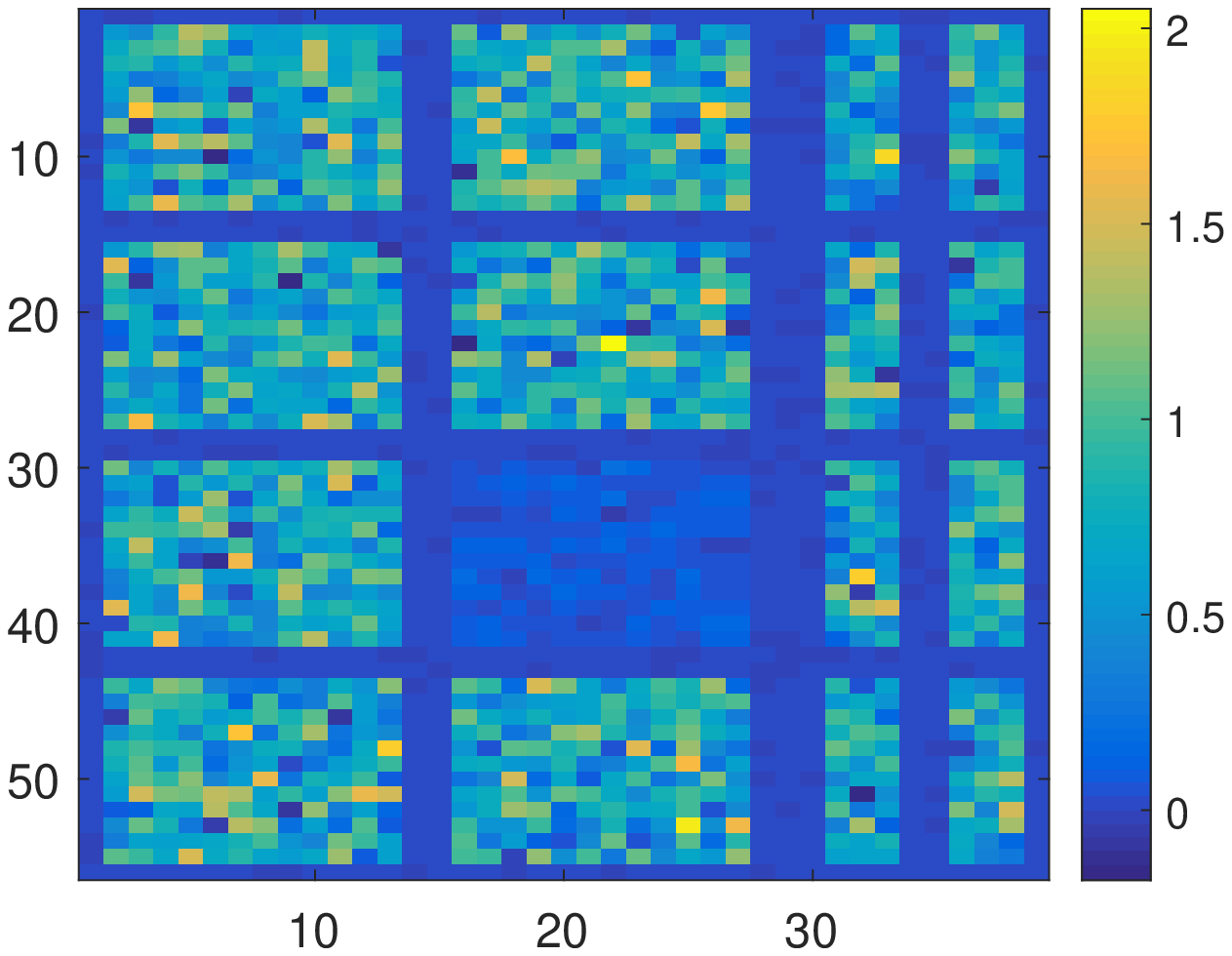}
		\caption{Reconstructed \ac{SLF} for the baseline algorithm with $r=1$.}
		\label{fig:F_lee_r1}
	\end{subfigure}
	\label{fig:F}
	\caption{Reconstructed \ac{SLF} maps for both the online and the baseline algorithm for different values of $r$ after $t=200$ iterations.}
	\squeezeup
	\squeezeup
\end{figure*}

\begin{table}[ht]
	\vskip1mm
	\caption{Simulation parameters with synthetic data}
	\label{table:simulation_parameters}
	\centering
	\begin{tabular}{l l l}
		\hline
		Parameter & Value & Description \\ \hline
		\hline
		$P_x$ & 39 & number horizontal pixels\\
		$P_y$ & 56 &number of vertical pixels\\
		$\eta$ & 0.1499 & wavelength in m\\
		$P$ & 2184 & number of pixels\\
		$T$ & 276396 & number of links in the map\\
		$P_{\textup{tx}}$ & 744 & number of road pixels\\
		$T_{\textup{tx}}$ & 276396 & number of acquirable links\\
		$t_{\textup{max}}$ & 200 & max number of time steps\\
		$M$ & 120 & number of samples acquired per time step\\
		$\sigma$ & 0.0001 & kernel width\\
		$\lambda_1$ & 0.0004 &$\ell_1$ regularization parameter of $\f$\\
		$\lambda_2$ & 0.00001 &$\ell_2$ regularization parameter of $\f$\\
		$\lambda_3$ & 0.00022 &$\ell_2$ regularization parameter of $\alphas_\tau$\\
		\hline
	\end{tabular}
\end{table}

As evaluation metric, we use the \ac{NMSE} of the reconstructed vectors, given by $$\textup{NMSE}(\hat{\signal{v}}):=\frac{\|\hat{\signal{v}}-\signal{v}\|^2_2}{\|\signal{v}\|^2_2},$$
where $\signal{v}$ is any vector and $\hat{\signal{v}}$ its reconstructed version. 

As baseline for the performance comparison, we fix the window matrix $W$ following the elliptical model in \eqref{eq:elliptical_model} and run Alg.~\ref{alg:online} without the projected gradient descent step for the $\alphas$-iterates. We do this to observe the impact of imperfect knowledge of the model, or, in other words, we allow for the $\w_i,i=1,...,P$ to be within a certain radius $r$ of the elliptical model. This version of the algorithm coincides with a slightly modified version of the online algorithm presented in \cite{lee2017channel}, where the authors define the \ac{SLF} structure as the sum of a sparse matrix and a low rank one, and they pose the problem of minimizing the least squares regularized by the sum of the nuclear norm of the low rank matrix and the (matrix) $\ell_1$-norm of the sparse one. As previously mentioned, the main difference between the baseline approach in \cite{lee2017channel} and ours is that they use a fixed model for the window function and assume that the model represents perfectly the reality, while we allow for some flexibility of the model to be within the $\ell_2$-ball of the said structure.

In addition to the aforementioned comparison, we run both the baseline and Alg.~\ref{alg:online} for four different values of the radius $r$, namely, $r=0$, $r=0.01$, $r=0.1$, and $r=1$. We do this to observe the impact on the mismatch between mathematical models and reality, and to examine if our online algorithm can handle this better than other approaches. Note that both algorithms converge to the same solution when the radius is zero given all other parameters are the same, since the sets $\setC_t^\alpha$ are singletons in this case, and the points in these sets coincide with the elliptical model from \eqref{eq:elliptical_model}. 

In Fig.~2, we show the performance evaluation of the algorithms. In particular, Fig.~\ref{fig:o_vs_it} shows the convergence in the objective of Alg.~\ref{alg:online} for different values of $r$. As expected, the cost decreases with the number of iterations in every case, although such decrease is not monotone due to the stochastic nature of the algorithm. Figure \ref{fig:f_vs_it} shows the evolution of the NMSE of the estimated \ac{SLF} vector $\check{\f}$ over the iterations $t$. Of interest is the difference between the baseline algorithm and our approach. We can see that, for $r=0.01,$ both algorithms yield a NMSE close to $r=0,$ but this gap increases more rapidly for the baseline algorithm than for the online one. The difference in performance between both algorithms can be more clearly observed when $r=1$, for which the NMSE of $\check{\f}$ after 200 iterations is around double as much for the baseline algorithm than for our approach. This behavior translates also to the performance in the reconstruction of the shadowing in Fig.~\ref{fig:s_vs_it}, where, again, we see the rapid degradation in accuracy of the baseline when $r$ increases compared to the online algorithm, and, by extension, the path loss is also more accurately reconstructed with the online algorithm. These results validate our intuition that giving flexibility to the original mathematical model can improve the reconstruction performance.

Figure \ref{fig:w_vs_it} shows the NMSE of the reconstructed $\w=[\w_1,...,\w_{t_{\textup{max}}}]^\top$ after converting the estimates $\alphas_1,...,\alphas_{t_{\textup{max}}}$ back into the original space. This is done by multiplying $K_t$ and $\alphas_t$, i.e. $\w_t=K_t\alphas_t.$ We can observe that the reconstructed $\w$ with the baseline algorithm remains flat over $t$ for any value of $r$. This is because in the baseline algorithm, $\w_t$ is fixed and assumed known from the mathematical model. Instead, the NMSE of $\w$ for the online algorithm decreases with the iterations $t$ due to the projected gradient descent strategy to update $\alphas_t$ in \eqref{eq:projected_gradient_its}. Finally, Fig. 3 shows visually the reconstructed \ac{SLF} after 200 iterations for both the baseline and online algorithms with $r=0.01,$ $r=0.1,$ and $r=1.$ Apparently, the reconstructed \acp{SLF} capture the features of the ground-truth \ac{SLF} in Fig.~\ref{fig:slf_map}. However, we can observe the degradation of the \ac{SLF} for the highest value of $r,$ for which the better performance of the online algorithm can be visually stated, i.e., the \ac{SLF} in Fig.~\ref{fig:F_online_r1} looks closer to the ground-truth than that in Fig.~\ref{fig:F_lee_r1}.

Note that Problem \eqref{eq:canonical_f_problem} is underdetermined for all $t$ when $M=120$ with a map of $P=2184$ pixels, since the matrix $A_{\alpha_t}$ has 120 rows and 2184 columns, and therefore it is very flat. This verifies that \ac{A2A} path loss maps can be accurately reconstructed with a small number of measurements by leveraging the group-sparsity of the \ac{SLF}.

\subsection{Evaluation with Realistic V2V Data}

%\begin{figure}%[h]
%	\begin{center}
%		\includegraphics[width=.5\textwidth]{figures/googleEarthWithHighlight.pdf}
%		\caption{Representation of the data generated with the GEMV2 model aided by Google Earth \cite{googleearth}. The ellipse highlights the area considered for the algorithm evaluation.}
%		\label{fig:google_earth}
%	\end{center}
%\end{figure}
%\begin{figure}%[h]
%	\begin{center}
%		\includegraphics[width=.3\textwidth]{figures/open_street_map.png}
%		\caption{Map from OpenStreetMaps.org \cite{haklay2008openstreetmap} of the approximate area used for the algorithm evaluation.}
%		\label{fig:open_street_map}
%	\end{center}
%\end{figure}

\begin{figure}%[h]
	\begin{center}
		\includegraphics[width=.5\textwidth]{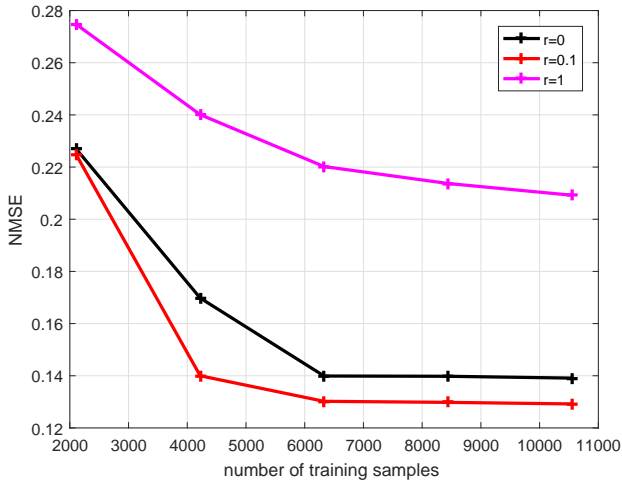}
		\caption{NMSE of the reconstructed shadowing vector $\shahat$ vs. training set sizes.}
		\label{fig:s_vs_it_man}
	\end{center}
\end{figure}

\begin{figure*}
	\centering
	
	\begin{subfigure}[b]{.3\textwidth}
		\centering
		\includegraphics[width=1\textwidth]{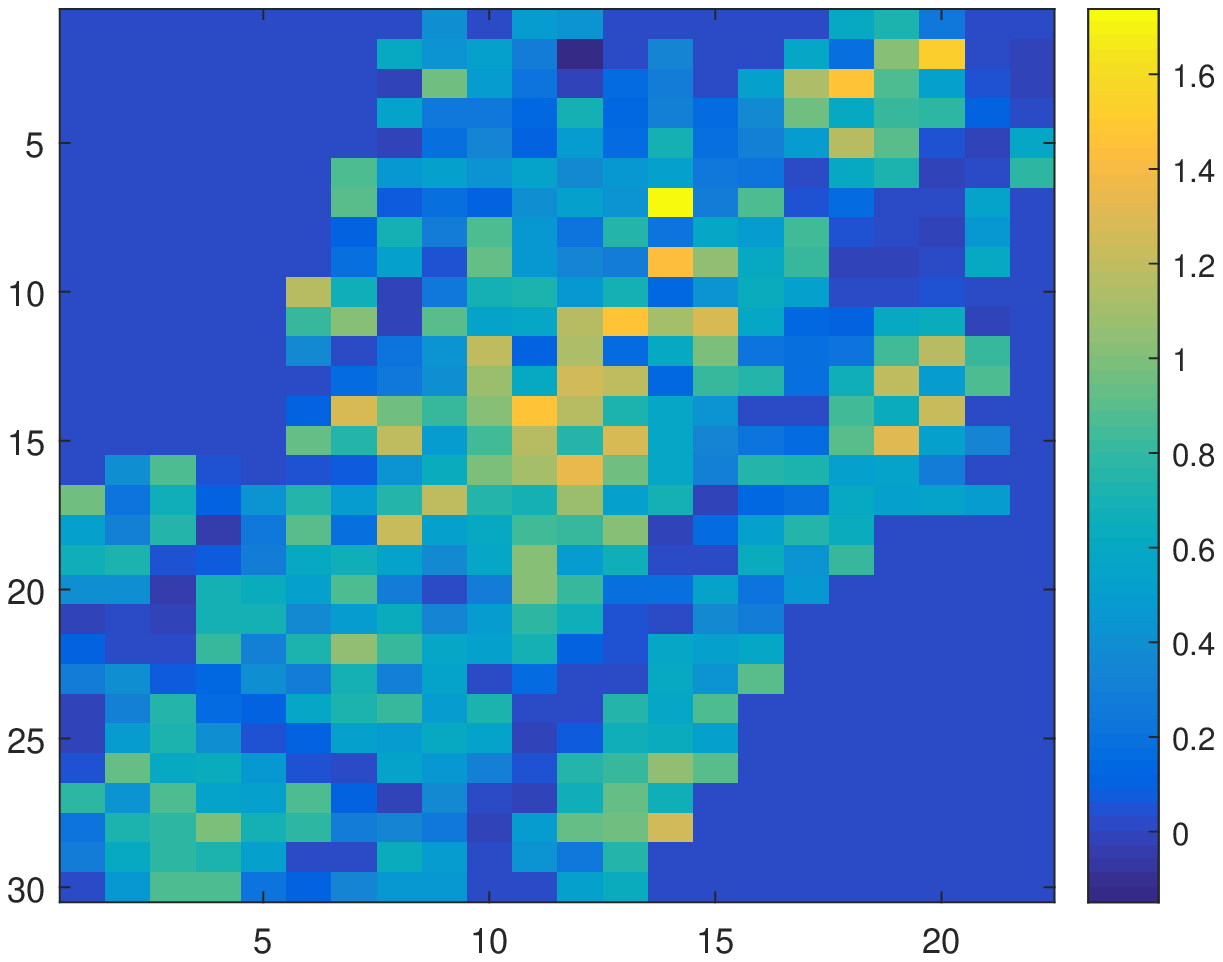}
		\caption{Reconstructed \ac{SLF} with $r=0.1$ and 2110 training samples.}
		\label{fig:F_man_r001.01}
	\end{subfigure}
	\hfill
	\begin{subfigure}[b]{.3\textwidth}
		\centering
		\includegraphics[width=1\textwidth]{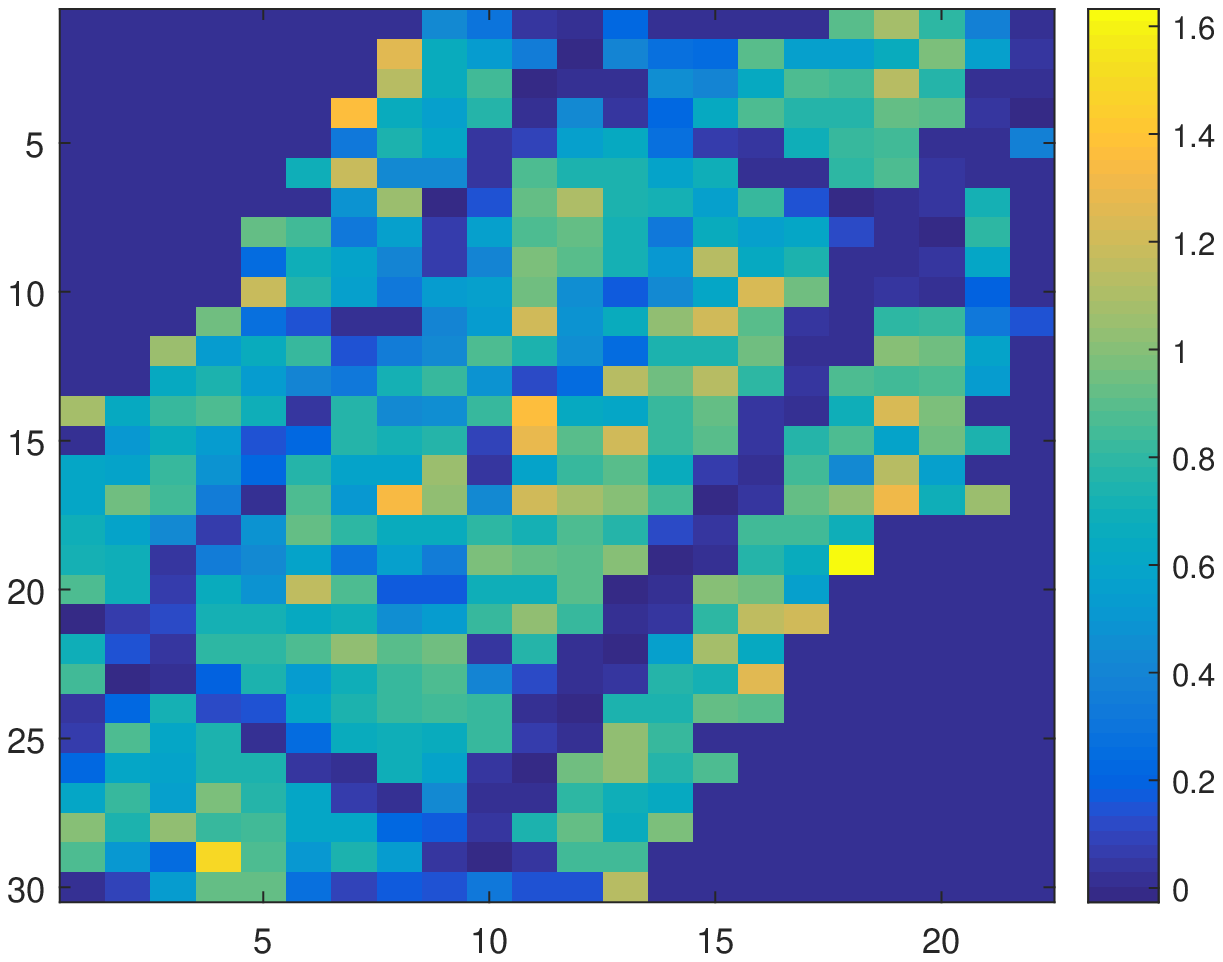}
		\caption{Reconstructed \ac{SLF} with $r=0.1$ and 6330 training samples.}
		\label{fig:F_man_r0.1}
	\end{subfigure}
	\hfill
	\begin{subfigure}[b]{.3\textwidth}
		\centering
		\includegraphics[width=1\textwidth]{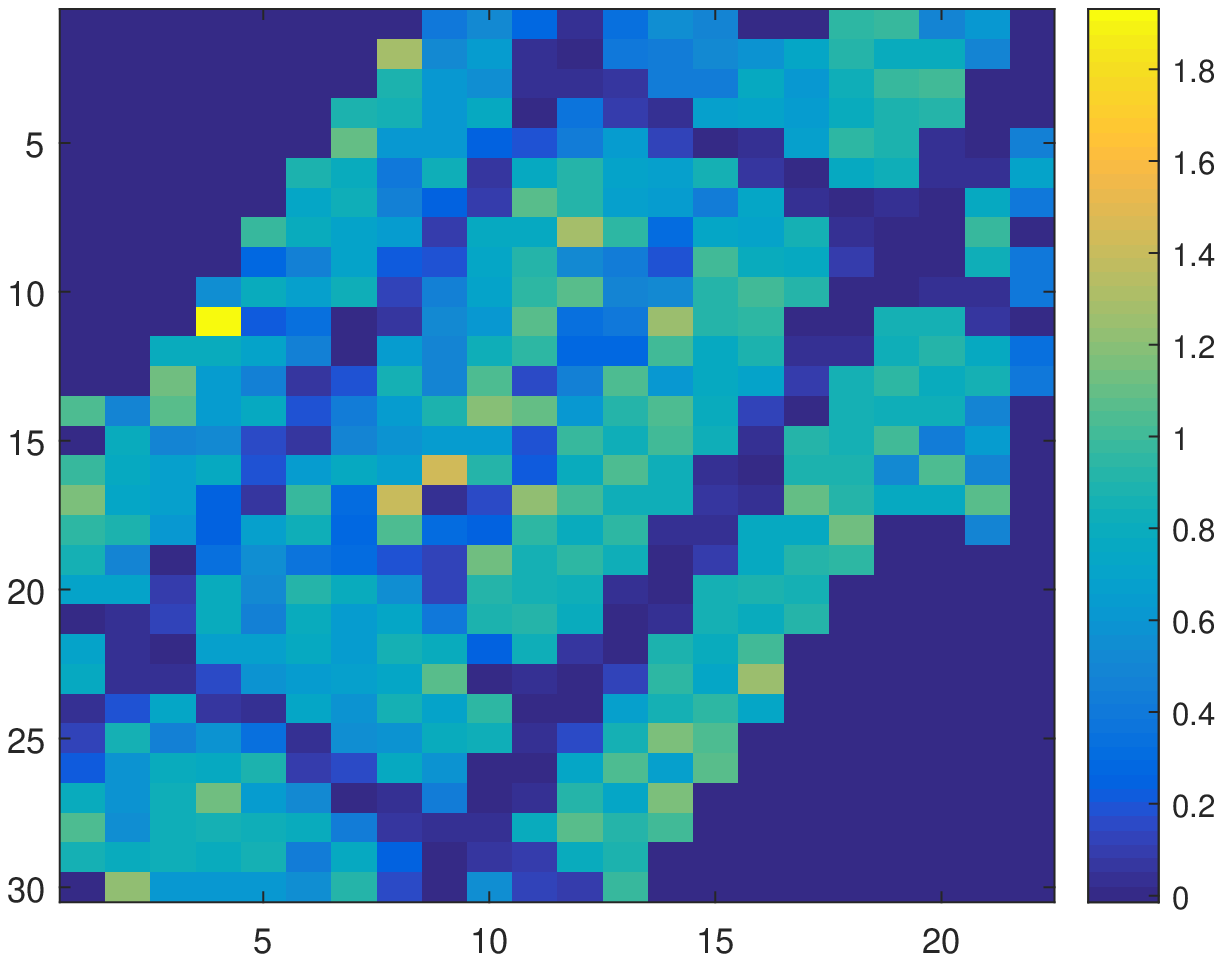}
		\caption{Reconstructed \ac{SLF} with $r=0.1$ and 10551 training samples.}
		\label{fig:F_man_r1}
	\end{subfigure}
	\hfill
	\label{fig:F_man}
	\caption{Reconstructed \ac{SLF} maps for $r=0.1$ with different training set size.}
	\squeezeup
	\squeezeup
\end{figure*}

\ac{GEMV2} adopts location-specific propagation modeling with respect to large objects in the vicinity of the communicating vehicles such as buildings and foliage. More specifically, the model uses the real-world locations and dimensions of nearby buildings, foliage, and vehicles to determine the \ac{LOS} or \ac{NLOS} conditions for each link. Starting from this premise, \ac{GEMV2} uses simple geographical descriptors of the simulated environment (outlines of buildings, foliage, and vehicles on the road) to classify V2V links into three groups, namely, \ac{LOS} links, NLOSv, i.e., links whose LOS is obstructed by other vehicles; and NLOSb, i.e., links whose LOS is obstructed by building or foliage. Based on this link classification, GEMV2 deterministically calculates the large-scale signal variation (i.e., path loss and shadowing) for each link type, and it adds a fast-fading term based on a particular random distribution. As shown in \cite{boban2014geometry}, the model fits real-life \ac{V2V} measurements for different urban scenarios well, making the model a good proxy for the evaluation of our algorithm with a realistic path loss dataset.

We collected a dataset of \ac{V2V} received power measurements based on the GEMV2 model for a map located in the Lower-Manhattan area. To generate the dataset, we simulated the communication of only two vehicles moving around the map. The reason behind this is the aim to focus the study on the influence on the path loss coming from free space attenuation and fixed objects, so the scenario remains as static as possible. 

%Figure \ref{fig:google_earth} depicts the spatial representation of the generated links aided by Google Earth, where the received power of each link is color-coded depending on its value. We restrict the algorithm evaluation to an area comprising the block with the highest vehicle density, which is highlighted within an ellipse in Fig.~\ref{fig:google_earth}. Figure \ref{fig:open_street_map} shows a map representing approximately the area considered for the algorithm evaluation.

%\begin{figure} 
%	\centering
	
%	\begin{subfigure}[b]{.24\textwidth}
%		\centering
%		\includegraphics[width=1\textwidth]{figures/open_street_map.png}
%		\caption{Approximate area used as scenario.}
%		\label{fig:open_street_map}
%	\end{subfigure}
%	\hfill
%	\begin{subfigure}[b]{.24\textwidth}
%		\centering
%		\includegraphics[width=1\textwidth]{figures/googleEarthWithHighlight.pdf}
%		\caption{Spatial representation of the dataset.}
%		\label{fig:google_earth}
%	\end{subfigure}
%	\label{fig:maps_rec}
%	\caption{Map from OpenStreetMaps.org \cite{haklay2008openstreetmap} of the area used as scenario (left), and spatial representation of some of the links aided by Google Earth (right).}
%	\squeezeup
%	\squeezeup
%\end{figure}

% Dataset generation and parameters
Before feeding the training data to the algorithm, the shadowing of each link has to be derived from the dataset of received powers generated by the GEMV2 software. To this end, we first calculate the path loss from the received power values, given by
\begin{equation}
	\label{eq:pl_from_power}
	\mathrm{pl}^{(t)}=p_{\mathrm{tx}}-p^{(l)}_{\mathrm{rx}},
\end{equation}
where $p_{\mathrm{tx}}$ is the transmit power, and $p^{(l)}_{\mathrm{rx}}$ is the received power in decibels of link index $l$. The shadowing is then obtained from Eq.~\eqref{eq:pathloss} by subtracting the free space path loss given by 
$$ \mathrm{pl}^{(t)}_{\mathrm{free}}=\mathrm{pl}_0+10\delta\log_{10}\left(\frac{||\mathbf{x}^{(t)}_i-\mathbf{x}^{(t)}_j||_2}{d_0}\right) $$
to the path loss obtained from Eq.~\eqref{eq:pl_from_power}. In our case, $\delta=2.9$ and $\mathrm{pl}_0=75$ dB. The dataset contains 21103 unique measurements representing received powers of links connecting only road locations. The frequency band is $5.89$ GHz, transmit power is $12$ dBm, and we assign each link to a Tx/Rx pair in a grid of $30\times 22$ pixels, with each pixel having $6\times 6$ squared meters.

We train our algorithm with five different training sets randomly selected from the shadowing dataset. The training sets contain 2110, 4220, 6330, 8441 and 10551 samples, which correspond with $10\%,~20\%,~30\%,~40\%,$ and $50\%$ of the available samples, respectively. After performing 3-fold cross-validation in the largest training set, the algorithm parameters are set to $\sigma= 0.0001$, $\lambda_1=0.0006$, $\lambda_2=0.00001$, and $\lambda_3=0.00061$. We set $t_{\mathrm{max}}=200$ and, at each time instant, $M=60$ i.i.d. samples are drawn from the training set. Table \ref{table:simulation_parameters_2} summarizes the main simulation parameters.
\begin{table}%[ht]
	\vskip1mm
	\caption{Simulation parameters with realistic V2V data}
	\label{table:simulation_parameters_2}
	\centering
	\begin{tabular}{l l l}
		\hline
		Parameter & Value & Description \\ \hline
		\hline
		$P_x$ & 30 & number horizontal pixels\\
		$P_y$ & 22 &number of vertical pixels\\
		%$\eta$ & 0.05013 & wavelength in m\\
		$B$ & 5.89 & frequency band [GHz]\\
		$P$ & 660 & number of pixels\\
		$T$ & 217470 & number of links in the map\\
		$P_{\textup{tx}}$ & 129 & number of road pixels\\
		$D$ & 21103 & size of dataset\\
		$t_{\textup{max}}$ & 200 & max number of time steps\\
		$M$ & 60 & number of samples acquired per time step\\
		$\delta$ & 2.9 & path loss exponential decay\\
		%$d_0$ & 1 & reference distance [m]\\
		$\mathrm{pl}_0$ & 75 & path loss at reference distance [dB]\\
		$p_{\mathrm{tx}}$ & 12 & transmit power [dBm]\\
		$\sigma$ & 0.0001 & kernel width\\
		$\lambda_1$ &0.0006 &$\ell_1$ regularization parameter of $\f$\\
		$\lambda_2$ & 0.00001 &$\ell_2$ regularization parameter of $\f$\\
		$\lambda_3$ & 0.00061 &$\ell_2$ regularization parameter of $\alphas_\tau$\\
		\hline
	\end{tabular}
\end{table}
In order to assess the usefulness of the hybrid model and data driven approach, the experiment is carried out three times for each of the training sets, each time with a radius $r$ of $0$, $0.1$ and $1$. Note that, as in the evaluation in Section \ref{sec:num_ev_synthetic}, $r=0$ corresponds to our baseline evaluation for which the model is assumed to represent perfectly the physical reality, whereas the experiments with radius $r>0$ allow for misalignments of the model and reality. 

Figure \ref{fig:s_vs_it_man} shows the NMSE of the reconstructed shadowing vector $\shahat$ versus the size of the training dataset, for $r=0$, $r=0.1$ and $r=1$. As expected, the larger the training set, the better the performance. However, the improvement between 6330 and 10551 training samples is small, which hints that a reasonable reconstruction performance can be achieved even for small training set sizes. The experiment with $r=0.1$ clearly performs the best for all numbers of training samples, and the gap w.r.t. the model-based baseline $(r=0)$ seems to increase with larger training datasets. On the other hand, the experiments with $r=1$ perform the worst by a big margin. These results convey two take-aways: i) that the hybrid approach presented in this paper can improve the performance of a model-based approach in real scenarios (i.e., $r=0$), and ii) that the selection of the radius $r$ is critical, since the performance can significantly worsen compared to a model-based approach, for large values of $r$.

Figures 7a-7c show the reconstructed normalized \ac{SLF} for the experiments with $r=0.1$ and different training set sizes. We can clearly see the resemblance of the area to the considered map even for the smallest training set size, while there is no big differences between the experiments with $30\%$ (Fig. 7b) and $50\%$ (Fig. 7c) of the samples for training, which suggests that the proposed algorithm is capable of reconstructing the \ac{SLF} in realistic scenarios. %It is worthwhile noticing that, even though no measurements where acquired inside the perimeter of the building, the open-air area in the middle of the building visible in Fig.~\ref{fig:open_street_map} seems to be characterize in the reconstructed \ac{SLF} of Figs. 7b and 7c, which showcases the reconstruction capabilities of the online algorithm. 

\section{Conclusions}
In this paper, we have addressed the online learning of path loss maps through a hybrid model and data driven approach. In order to estimate the shadowing experienced by a radio link connecting any two locations in a map, we have formulated a problem to simultaneously obtain an estimate of both the \ac{SLF} and the model from \ac{TPT}. We have considered the elastic net as regularization because the \ac{SLF} is assumed to be group-sparse. The resulting problem is highly ill-posed, so we have added structure by considering a non-linear kernel approach. 

We have proposed an online algorithm based on stochastic optimization and alternating minimization to tackle the high complexity of the problem even for small maps, and we have proven the convergence of the online algorithm both in the objective and in the arguments. Finally, we have shown by simulations with synthetic data as well as with realistic data that the proposed method outperforms other state-of-the-art techniques when the data do not fit perfectly the model.

%\appendices
%\section{}

%\newpage
%------------------------------------------------------------------
% Acronyms
% use:  \ac{BS}         to use a acronym (first time = full name, then only acronym)
%       \acp{BS}        use the plural
%       \acf{BS}        print the full name and ignore previous declaration
%       \acs{BS}        Use the acronym, even before the first corresponding \ac command
%       \acl{acronym}   Expand the acronym without using the acronym itself.
%       \acresetall     Reset all acronyms (useful after abstract)
%------------------------------------------------------------------

%\section*{Abkuerzungen}
\begin{acronym}[SPACEEEEEE]
	\setlength{\itemsep}{-0.2em}
	\acro{1PPS}{one pulse per second}
	\acro{3GPP}{Third Generation Partnership Project}
	\acro{3G}{third generation}
	\acro{4G}{fourth generation}
	\acro{ACK/NACK}{(not) acknowledgements}
	\acro{aGW}{advanced gateway}
	\acro{AoA}{angle of arrival}
	\acro{AoD}{angle of departure}
	\acro{AMC}{adaptive modulation and coding}
	\acro{ARQ}{automatic repeat request}
	\acro{ASIC}{application-specific integrated circuit}
	\acro{AGC}{automatic gain control}
	\acro{AWGN}{additive white Gaussian noise}
	\acro{BC}{broadcast channel}
	\acro{BER}{bit error rate}
	\acro{BICM}{bit-interleaved coded modulation}
	\acro{BPSK}{binary phase-shift keying}
	\acro{BS}{base station}
	\acro{BOF}{beginning of frame}
	\acro{BUC}{block up-converter}
	\acro{CAPEX}{capital expenditure}
	\acro{CDMA}{code-division multiple access}
	\acro{CC}{chase combining}
	\acro{CID}{cell identified}
	\acro{CIR}{channel impulse response}
	\acro{CU}{central unit}
	\acro{CUBA}{circular uniform beam array}
	\acro{CSI-RS}{CSI reference signals}
	\acro{CCI+}[CCI]{Cochannel interference}
	\acro{CCI}{cochannel interference}
	\acro{cdf}[CDF]{cumulative distribution function}
	\acro{CFO}{carrier frequency offset}
	\acro{CFR}{channel frequency response}  
	\acro{CLE}{chip-level equalizer}
	\acro{CCDF}{complementary cumulative distribution function}
	\acro{CDM}{code-division multiplexing}
	\acro{CoMP}{coordinated multi-point}
	\acro{CoSCH}{coordinated scheduling}
	\acro{CMF}{code-matched filter}
	\acro{CQI}{channel quality identifier}
	\acro{CP}{cyclic prefix}
	\acro{CO}{central office}
	\acro{CPE}{customer-provided equipment}
	\acro{CRC}{cyclic redundancy check}
	\acro{CRS}{CSI reference signals}
	\acro{CSI}{channel state information}
	\acro{CPE}{common phase error}
	\acro{CPICH}{common pilot channel}
	\acro{CPRI}{common public radio interface}
	\acro{CWER}{code word error rate}
	\acro{DFT}{discrete Fourier transform}
	\acro{DC}{direct current}
	\acro{DD}{digital dividend}
	\acro{DS}{delay spread}
	\acro{DMMT}{discrete matrix multi-tone}
	\acro{EVM}{error vector magnitude}
	\acro{DFT}{discrete Fourier transform}
	\acro{DoD}{direction of departure}
	\acro{DoA}{direction of arrival}
	\acro{DMMT}{discrete matrix multi-tone}
	\acro{DSSS}{direct sequence spread spectrum}
	\acro{DSP}{digital signal processor}
	\acro{DSL}{digital subscriber line}
	\acro{DS-UWB}{direct sequence ultra-wideband}
	\acro{DRS}{demodulation reference signals}
	\acro{ED}{excess delay}
	\acro{EGT}{equal gain transmission}
	\acro{EGC}{equal gain combining}
	\acro{ERP}{effective radiated power}
	\acro{EO}{electro-optical}
	\acro{FDE}{frequency-domain equalization}
	\acro{FA}{frequency advance}
	\acro{FD}{frequency domain}
	\acro{FDD}{frequency division duplex}
	\acro{FIR}{finite impulse response}
	\acro{FWHM}{full width at half maximum}
	\acro{FDMA}{frequency-division multiple access}
	\acro{FCC}{Federal Communications Commission}
	\acro{FEC}{forward error correction}
	\acro{FFT}{fast Fourier transform}
	\acro{FSK}{frequency shift keying}
	\acro{FR}{frequency response}
	\acro{FTTH}{fiber to the home}
	\acro{FOV}{field of view}
	\acro{FPGA}{field programmable gate array}
	\acro{GoB}{grid of beams}
	\acro{GI}{guard interval}
	\acro{GF}{geometry factor}
	\acro{GPS}{global positioning system}
	\acro{GSM}{global system for mobile communications}
	\acro{HARQ}{hybrid automatic repeat request}
	\acro{HHI}{Heinrich Hertz Institute}
	\acro{HFT}{Institut f{\"u}r Hochfrequenztechnik}
	\acro{HSDPA}{High-Speed Downlink Packet Access}
	\acro{HSOPA}{High Speed OFDM Packet Access}
	\acro{HOSVD}{Higher Order Singular Value Decomposition}
	\acro{IFFT}{inverse fast Fourier transform}
	\acro{ICI}{inter-carrier interference}
	\acro{IDFT}{inverse discrete Fourier transform}
	\acro{iid}[i.i.d.]{independent and identically distributed}
	\acro{IF}{intermediate frequency}
	%\acro{IR}{incremental redundancy}
	\acro{IIR}{infinite impulse response}
	\acro{IR}{impulse response}
	\acro{MAC}{multiple-access control}
	\acro{IRC++}[IRC]{Interference Rejection Combining}
	\acro{IRC+}[IRC]{Interference rejection combining}
	\acro{IRC}{interference rejection combining}
	\acro{ILR}{Institut f{\"u}r Luft- und Raumfahrt}
	\acro{ISD}{inter-site distance}
	\acro{ISI}{intersymbol interference}
	\acro{IP}{internet protocol}
	\acro{JT}{joint transmission}
	\acro{JT CoMP}{joint transmission coordinated multi-point}
	\acro{LDC}{linear dispersion code}
	\acro{L2S}{link-to-system} 
	\acro{LAN}{local area network}
	\acro{LMMSE}{linear minimum mean square error}
	\acro{LOS}{line-of-sight}
	\acro{LO}{local oscillator}
	\acro{LSU}{LTE signal processing unit}
	\acro{LTE}{Long Term Evolution}
	\acro{LTE-A}{LTE-Advanced}
	\acro{LUT}{look-up table}
	\acro{MATH}{Institut f{\"u}r Mathematik}
	\acro{MAC}{medium access control}
	\acro{MAI}{multiple access interference}
	\acro{MAC+}[MAC layer]{medium access layer}
	\acro{maxSINR}{maximum SINR}
	\acro{MCS}{modulation and coding scheme}
	\acro{MB-OFDM}{multi-band orthogonal frequency division multiplexing}
	\acro{MFN}{multi frequency network}
	\acro{MIESM}{mutual information effective SINR metric}
	\acro{MIMO}{multiple-input multiple-output}
	\acro{MMHARQ}[MM-HARQ]{MIMO multiple HARQ}
	\acro{ML}{maximum likelihood}
	\acro{MRC}{maximum ratio combining}
	\acro{MSHARQ}[MS-HARQ]{MIMO single HARQ}
	\acro{MSE}{mean square error}
	\acro{MMSE}{minimum mean square error}
	\acro{MLSE}{maximum likelihood sequence estimation}  
	\acro{MMSE++}{Minimum Mean Square Error}
	\acro{MPLS}{multi-protocol label switching}
	\acro{MSE}{mean square error}
	\acro{MS}{multiple stream}
	\acro{MT}{mobile terminal}
	\acro{MT-S}[MT scheduler]{maximum throughput scheduler}
	\acro{MU}{multi-user}
	\acro{MU-SDMA}{multi-user space-division multiple access}
	\acro{MU-MUX}{multi-user spatial multiplexing}
	\acro{NGMN}{next generation mobile network}
	\acro{NLOS}{non line-of-sight}
	\acro{NMEA}{National Marine Electronics Association}
	\acro{NT}[$N_T$]{number of transmit antennas}
	\acro{OVSF}{orthogonal variable spreading factor} 
	\acro{OE}{opto-electrical}
	\acro{OFDM}{orthogonal frequency-division multiplexing}
	\acro{OFDMA}{orthogonal frequency division multiple access}
	\acro{OOK}{on-off keying}
	\acro{OC}{optimum combining}
	\acro{OCXO}{oven-controlled crystal oscillator}
	\acro{OPEX}{operational expenditure}
	\acro{PA}{power amplifier}
	\acro{PAM}{pulse amplitude modulation}
	\acro{PARC}{per antenna rate control}
	\acro{PAPC}{per antenna power constraint}
	\acro{PAPR}{peak to average power ratio}
	\acro{PMCC}{Pearson product-moment correlation coefficient}
	\acro{PER}{packet error rate}
	\acro{PL}{path loss}
	\acro{PDP}{power delay profile}
	\acro{PDF}{probability density function}
	\acro{PUCA}{polarized uniform circular array}
	\acro{PF-S}[PF scheduler]{proportional fair scheduler}
	\acro{PMI}{precoding matrix indicator}
	\acro{PHY}{physical layer}
	\acro{PDP}{power delay profile}
	\acro{PDU}{packet data unit}
	\acro{PDCCH}{physical downlink control channel}
	\acro{PUCCH}{physical uplink control channel}
	\acro{PUSCH}{physical uplink shared channel}
	\acro{PPM}{pulse position modulation}
	\acro{PON}{passive optical network}
	\acro{PPS}{pulse per second}
	\acro{PRS}{pseudo-random scrambling sequence}
	\acro{PSS}{primary synchronization sequence}
	\acro{PSD}{power spectral density}
	\acro{QAM}{quadrature amplitude modulation}
	\acro{QPSK}{quadrature phase-shift keying}
	\acro{QoS}{quality of service}
	\acro{RAN}{radio access network}
	\acro{RACH}{random access channel}
	\acro{RD}{rate-distortion}
	\acro{RR}{round robin}
	\acro{RoT}{rise-over-thermal}
	\acro{RF}{radio frequency}
	\acro{RFO}{reference frequency offset}
	\acro{RS}{reference signals}
	\acro{RB}{resource block}
	\acro{Rx}{receive}
	\acro{RMS}{root mean square}
	\acro{RRH}{remote radio head}
	\acro{RRC}{root raised cosine}
	\acro{RTS}{real time sampled}
	\acro{SAE}{system architecture evolution} 
	\acro{SG}{scenario group}
	\acro{SB-S}[SB scheduler]{score-based scheduler}
	\acro{SC}{sub-carrier}
	\acro{SCM}{spatial channel model}
	\acro{sc}{single-carrier}
	\acro{SCME}{extended spatial channel model}
	\acro{SC-FDMA}{single-carrier frequency-division multiple access}
	\acro{SDIV}{spatial diversity}
	\acro{SDMA}{space-division multiple access}
	\acro{SFO}{sampling frequency offset}
	\acro{SFP}{small form-factor pluggable} 
	\acro{SMUX}{spatial multiplexing}
	\acro{SU-MUX}{single user spatial multiplexing}
	\acro{STC}{space-time code}
	\acro{STF}{space-time filter}
	\acro{STVC}{space-time vector coding}
	\acro{SFN}{single frequency network}
	\acro{SF}{shadow fading}
	\acro{SNR}{signal to noise ratio}
	\acro{SIR}{signal to interference ratio}
	\acro{SIC}{successive interference cancellation}
	\acro{SINR}{signal-to-interference-and-noise ratio}
	\acro{SIMO}{single-input multiple-output}
	\acro{SISO}{single-input single-output}
	\acro{SPC}{sum power constraint}
	\acro{SS}{single stream}
	\acro{SSS}{secondary synchronization sequence}
	\acro{ST}{space-time}
	\acro{SW}{stop and wait}
	\acro{SVC}{scalable video coding}
	\acro{SVD}{singular value decomposition}
	\acro{SV}{singular value}
	\acro{TA}{timing advance}
	\acro{TD}{time domain}
	\acro{TDD}{time division duplex}
	\acro{TDMA}{time-division multiple access}
	\acro{TTI}{transmission time interval}
	\acro{Tx}{transmit}
	\acro{TU}{Technical University}
	\acro{TUB}{Technical University of Berlin}
	\acro{TLabs}{Deutsche Telekom Laboratories}
	\acro{TRx}{transceiver}
	\acro{TSVD}{truncated singular value decomposition}
	\acro{TP}{troughput}
	\acro{UE}{user equipment}
	\acro{ULA}{uniform linear array}
	\acro{UDP}{user datagram protocol}
	\acro{PULA}{polarized uniform linear array}
	\acro{UMTS}{Universal Mobile Telecommunications System}
	\acro{UWB}{ultra-wideband}
	\acro{VA}{virtual antenna}
	\acro{V-BLAST}{Vertical Bell Labs Space-Time}
	\acro{VLAN}{virtual local area network}
	\acro{WLAN}{wireless local area network}
	\acro{WCDMA}{wideband code-division multiple access}
	\acro{WDM}{wavelength-division multiplexing}
	\acro{WPAN}{wireless personal area network}
	\acro{ZF}{zero forcing}
	\acro{D2D}{Device to Device}
	\acro{aD2D}{assisted \ac{D2D}}
	\acro{NaD2D}{non-assisted \ac{D2D}}
	\acro{STDMA}{self-organizing time division multiple access}
	\acro{NI}{nominal increment}
	\acro{RR}{report rate}
	\acro{NSS}{nominal start slot}
	\acro{SI}{selection interval}
	\acro{NTS}{nominal transmission slot}
	\acro{NFR}{nominal frequency resource}
	\acro{NS}{nominal slot}
	\acro{STFDMA}{self-organizing time-frequency division multiple access}
	\acro{APSM}{adaptive projected subgradient method}
	\acro{APA}{affine projection algorithm}
	\acro{NLMS}{normalized least mean squares}
	\acro{RKHS}{reproducing kernel Hilbert space}
	\acro{CSMA/CA}{carrier sense multiple access with collision avoidance}
	\acro{V2V}{vehicle to vehicle}
	\acro{V2X}{Vehicle to everything}
	\acro{5G}{fifth generation}
	\acro{C2C}{car to car}
	\acro{VANET}{vehicular ad-hoc network}
	\acro{RRM}{radio resource management}
	\acro{AMC}{adaptive modulation and coding}
	\acro{UL}{uplink}
	\acro{DL}{downlink}    
	\acro{GPS}{global positioning system}
	\acro{SUMO}{Simulation of Urban MObility}
	\acro{RBIR}{received bit information rate}
	\acro{KPI}{key performance indicator}
	\acro{ADMM}{alternating direction method of multipliers}
	\acro{SLF}{spatial loss field}
	\acro{RBF}{radial basis function}
	\acro{MTC}{machine-type communications}
	\acro{NMSE}{normalized mean squared error}
	\acro{A2A}{Any-to-any}
	\acro{TPT}{tomographic projection technique}
	\acro{gAM}{generalized alternating minimization}
	\acro{GEMV2}{Geometry-based Efficient propagation Model for V2V communication}
\end{acronym}
%--------------------------------------
% \acro{ARQ}{automatic repeat request}
% \acro{H2O}[$\mathrm{H_2O}$]{water}
% \acro{NA}[\ensuremath{N_{\mathrm A}}]{Number of Avogadro\acroextra{ (see \S\ref{Chem})}}
% usage \ac{FEC}
%--------------------------------------

\bibliographystyle{IEEEtran}
\bibliography{referencesICC}
\addcontentsline{toc}{part}{Bibliography}

%\begin{IEEEbiography}{Michael Shell}
%	Biography text here.
%\end{IEEEbiography}

\end{document}